\crefname{equation}{}{}
\crefname{figure}{Figure}{Figures}
\crefname{assumption}{Assumption}{Assumptions}
\crefname{condition}{Condition}{Conditions}
\renewcommand\th{\textsuperscript{th}\xspace}
\setlist[enumerate,1]{leftmargin=*,wide=0em, noitemsep,nolistsep, label = {\bfseries \arabic*.}}
\setlist[itemize,1]{leftmargin=*,wide=0em, noitemsep,nolistsep}
\newcommand{\reals}{\mathbb{R}}
\newcommand {\range}  { {\textnormal{Range}} }
\newcommand*{\transpose}{%
	{\mathpalette\@transpose{}}%
}
\newcommand*{\@transpose}[2]{%
	% #1: math style
	% #2: unused
	\raisebox{\depth}{$\m@th#1\intercal$}%
}
\renewcommand {\AA}  { {\bm{A}} }
\newcommand {\HH}  { {\bm{H}} }
\newcommand {\VV}  { {\bm{V}} }
\newcommand {\WW}  { {\bm{W}} }
\newcommand {\QQ}  { {\bm{Q}} }
\newcommand {\bSS}  { {\bm{S}} }
\newcommand {\XX}  { {\bm{X}} }
\newcommand {\hY}  { {\widehat{Y}} }
\newcommand {\zz}  { {\bm z} }
\newcommand {\xx}  { {\bm x} }
\newcommand {\yy}  { {\bm y} }
\newcommand {\res}  { {\bm r} }
\newcommand {\vv}  { {\bm v} }
\newcommand {\ww}  { {\bm w} }
\newcommand {\bb}  { {\bm b} }
\newcommand {\ee}  { {\bm e} }
\newcommand{\defeq}{\mathrel{\mathop:}=}
\renewcommand{\Pr}{\hbox{\bf{Pr}}}
\newcommand*\xbar[1]{%
	\hbox{%
		\vbox{%
			\hrule height 0.5pt % The actual bar
			\kern0.5ex%         % Distance between bar and symbol
			\hbox{%
				\kern-0.1em%      % Shortening on the left side
				\ensuremath{#1}%
				\kern-0.1em%      % Shortening on the right side
			}%
		}%
	}%
} 
\definecolor{forestgreen}{rgb}{0.13, 0.55, 0.13}
\definecolor{amber}{rgb}{1.0, 0.75, 0.0}
\definecolor{bananayellow}{rgb}{.8, 0.6, 0}
\newcounter{comment}\setcounter{comment}{0}
\newmdtheoremenv[%
linewidth = 1pt,%
roundcorner = 10pt,%
leftmargin = 0,%
rightmargin = 0,%
backgroundcolor = green!3,%
outerlinecolor = blue!70!black,%
%innertopmargin = \topskip,%
%innerbottommargin=\topskip,%
splittopskip = \topskip,%
ntheorem = true,%
]{theorem}{Theorem}
\newmdtheoremenv[%
linewidth = 1pt,%
roundcorner = 10pt,%
leftmargin = 0,%
rightmargin = 0,%
backgroundcolor = green!3,%
outerlinecolor = blue!70!black,%
%innertopmargin = \topskip,%
%innerbottommargin=\topskip,%
splittopskip = \topskip,%
ntheorem = true,%
]{corollary}{Corollary}
\newmdtheoremenv[%
linewidth = 1pt,%
roundcorner = 10pt,%
leftmargin = 0,%
rightmargin = 0,%
backgroundcolor = green!3,%
outerlinecolor = blue!70!black,%
%innertopmargin = \topskip,%
%innerbottommargin=\topskip,%
splittopskip = \topskip,%
ntheorem = true,%
]{lemma}{Lemma}
\newmdtheoremenv[%
linewidth = 1pt,%
roundcorner = 10pt,%
leftmargin = 0,%
rightmargin = 0,%
backgroundcolor = yellow!3,%
outerlinecolor = blue!70!black,%
%innertopmargin = \topskip,%
%innerbottommargin=\topskip,%
splittopskip = \topskip,%
ntheorem = true,%
]{definition}{Definition}
\newmdtheoremenv[%
linewidth = 1pt,%
roundcorner = 10pt,%
leftmargin = 0,%
rightmargin = 0,%
backgroundcolor = green!3,%
outerlinecolor = blue!70!black,%
%innertopmargin = \topskip,%
%innerbottommargin=\topskip,%
splittopskip = \topskip,%
ntheorem = true,%
]{proposition}{Proposition}
\newmdtheoremenv[%
linewidth = 1pt,%
roundcorner = 10pt,%
leftmargin = 0,%
rightmargin = 0,%
backgroundcolor = green!3,%
outerlinecolor = blue!70!black,%
%innertopmargin = \topskip,%
%innerbottommargin=\topskip,%
splittopskip = \topskip,%
ntheorem = true,%
]{condition}{Condition}
\newmdtheoremenv[%
linewidth = 1pt,%
roundcorner = 10pt,%
leftmargin = 0,%
rightmargin = 0,%
backgroundcolor = green!3,%
outerlinecolor = blue!70!black,%
%innertopmargin = \topskip,%
%innerbottommargin=\topskip,%
splittopskip = \topskip,%
ntheorem = true,%
]{assumption}{Assumption}
\theoremstyle{definition}
\newmdtheoremenv[%
linewidth = 1pt,%
roundcorner = 10pt,%
leftmargin = 0,%
rightmargin = 0,%
backgroundcolor = blue!3,%
outerlinecolor = blue!70!black,%
%innertopmargin = \topskip,%
%innerbottommargin=\topskip,%
splittopskip = \topskip,%
ntheorem = true,%
]{example}{Example}
\theoremstyle{definition}
\newmdtheoremenv[%
linewidth = 1pt,%
roundcorner = 10pt,%
leftmargin = 0,%
rightmargin = 0,%
backgroundcolor = red!3,%
outerlinecolor = blue!70!black,%
%innertopmargin = \topskip,%
%innerbottommargin=\topskip,%
splittopskip = \topskip,%
ntheorem = true,%
]{remark}{Remark}
\NewDocumentCommand\DownArrow{O{2.0ex} O{black}}{%
	\mathrel{\tikz[baseline] \draw [<-, line width=0.5pt, #2] (0,0) -- ++(0,#1);}
}
\definecolor{mygreen}{rgb}{0,0.6,0}
\definecolor{mygray}{rgb}{0.5,0.5,0.5}
\definecolor{mymauve}{rgb}{0.58,0,0.82}
\tiny\color{mygray}, % the style that is used for the line-numbers
\newcommand*\dotprod[1]{\left\langle #1\right\rangle}
\newcommand*\vnorm[1]{\left\| #1\right\|}
\newcommand*\prob[1]{\Pr\left( #1\right)}
\newcommand {\Ex} { {\mathbb E} }
\newcommand*\bigO[1]{\mathcal O\left( #1\right)}
\begin{document}
\title{LSAR: Efficient Leverage Score Sampling Algorithm for the Analysis of Big Time Series Data}
\author{
	Ali Eshragh\thanks{School of Information and Physical Sciences, University of Newcastle, Australia, and International Computer Science Institute, Berkeley, CA, USA. Email:  \tt{ali.eshragh@newcastle.edu.au}}
	\and
	Fred Roosta\thanks{School of Mathematics and Physics, University of Queensland, Australia, and International Computer Science Institute, Berkeley, CA, USA. Email:  \tt{fred.roosta@uq.edu.au}} 
	\and 
	Asef Nazari\thanks{School of Information Technology, Deakin University, Australia. Email:  \tt{asef.nazari@deakin.edu.au}} 
	\and 
	Michael W. Mahoney\thanks{Department of Statistics, University of California at Berkeley, USA, and International Computer Science Institute, Berkeley, CA, USA. Email: \tt{mmahoney@stat.berkeley.edu}}
}
\date{\today}
\maketitle

%-----------------------
% Abstract
%-----------------------

\begin{abstract}
    We apply methods from randomized numerical linear algebra (RandNLA) to develop improved algorithms for the analysis of large-scale time series data. We first develop a new fast algorithm to estimate the leverage scores of an autoregressive (\texttt{AR}) model in big data regimes. We show that the accuracy of approximations lies within $(1+\bigO{\varepsilon})$ of the true leverage scores with high probability. These theoretical results are subsequently exploited to develop an efficient algorithm, called \texttt{LSAR}, for fitting an appropriate \texttt{AR} model to big time series data. Our proposed algorithm is guaranteed, with high probability, to find the maximum likelihood estimates of the parameters of the underlying true \texttt{AR} model and has a worst case running time that significantly improves those of the state-of-the-art alternatives in big data regimes. Empirical results on large-scale synthetic as well as real data highly support the theoretical results and reveal the efficacy of this new approach.
\end{abstract}

%-----------------------
% Introduction
%-----------------------

%----------------------
\section{Introduction}
\label{sec:intro}
%----------------------

A \emph{time series} is a collection of random variables indexed according to the order in which they are observed in time. The main objective of \emph{time series analysis} is to develop a statistical model to forecast the future behavior of the system. At a high level, the main approaches for this include the ones based on considering the data in its original \emph{time domain} and those arising from analyzing the data in the corresponding \emph{frequency domain} \cite[Chapter 1]{Shu}). More specifically, the former approach focuses on modeling some future value of a time series as a parametric function of the current and past values by studying the correlation between adjacent points in time. The latter framework, however, assumes the primary characteristics of interest in time series analysis relate to periodic or systematic sinusoidal variations. Although the two approaches may produce similar outcomes for  many cases, the comparative performance is better done in the ``time domain'' \cite[Chapter 1]{Shu} which is the main focus of this paper.

Box and Jenkins \cite{Box} introduced their celebrated \emph{autoregressive moving average} (\texttt{ARMA}) model for analyzing stationary time series. Although it has been more than 40 years since this model was developed, due to its simplicity and vast practicality, it continues to be widely used in theory and practice. A special case of an \texttt{ARMA} model is an \emph{autoregressive} (\texttt{AR}) model, which merely includes the autoregressive component. Despite their simplicity, \texttt{AR} models have a wide range of applications spanning from genetics and medical sciences to finance and engineering (e.g., \cite{Hamilton1989,Anderson1998,Cha,She,Abo,Esh,Messner2019}). 

The main hyper-parameter of an \texttt{AR} model is its \emph{order}, which directly relates to the dimension of the underlying predictor variable. In other words, the order of an \texttt{AR} model amounts to the number of lagged values that are included in the model. In problems involving big time series data, selecting an appropriate order for an \texttt{AR} model amounts to computing the solutions of many potentially large scale \emph{ordinary least squares} (OLS) problems, which can be the main bottleneck of computations (cf.\ \cref{SecAR}). Here is where randomized sub-sampling algorithms can be used to greatly speed-up such model selection procedures.

For computations involving large matrices in general, and large-scale OLS problems in particular, randomized numerical linear algebra (RandNLA) has successfully employed various random sub-sampling and sketching strategies. There, the underlying data matrix is randomly, yet appropriately, ``compressed'' into a smaller one, while approximately retaining many of its original properties. As a result, much of the expensive computations can be performed on the smaller matrix; Mahoney \cite{mahoney2011randomized} and Woodruff \cite{woodruff2014sketching} provided an extensive overview of RandNLA subroutines and their many applications. Moreover, implementations of algorithms based on those ideas have been shown to beat state-of-the-art numerical routines (e.g., \cite{avron2010blendenpik,meng2014lsrn,yang2015implementing}). 

Despite their simplicity and efficient constructions, matrix approximations using \emph{uniform} sampling strategies are highly ineffective in the presence of non-uniformity in the data (e.g., outliers). In such situations, \emph{non-uniform} (but still i.i.d.) sampling schemes in general, and leverage score sampling in particular \cite{drineas2012fast}, are instrumental not only in obtaining the strongest worst case theoretical guarantees, but also in devising high-quality numerical implementations. In times series data, one might expect that sampling methods based on leverage scores can be highly effective (cf.\ \cref{fig:gas_LS_unif}). However, the main challenge lies in computing the leverage scores, which na\"{i}vely can be as costly as the solution of the original OLS problems. In this light, exploiting the structure of  the time series model for estimating the leverage scores can be the determining factor in obtaining efficient algorithms for time series analysis. We carry out that here in the context of \texttt{AR} models. In particular, our contributions can be summarized as follows: 

\begin{enumerate}[label = (\roman*)]
	\item We introduce RandNLA techniques to the analysis of big time series data. 
	\item By exploiting the available structure, we propose an algorithm for approximating the leverage scores of the underlying data matrix that is shown to be faster than the state-of-the-art alternatives. 
	
	\item We theoretically obtain a high-probability relative error bound on the leverage score approximations. 
	
	\item Using these approximations, we then develop a highly-efficient algorithm, called \texttt{LSAR}, for fitting \texttt{AR} models with provable guarantees. 
	
	\item We empirically demonstrate the effectiveness of the \texttt{LSAR} algorithm on several large-scale synthetic as well as real big time series data.
\end{enumerate}

The structure of this paper is as follows: \cref{Sec:Background} introduces \texttt{AR} models and RandNLA techniques in approximately solving large-scale OLS problems. \cref{Sec:TheoreticalResults} deals with the theoretical results on developing an efficient leverage score sampling algorithm to fit and estimate the parameters of an \texttt{AR} model. All proofs are presented in \cref{Sec:Proofs}. \cref{Sec:EmpiricalResults} illustrates the efficacy of the new approach by implementing it on several large-scale synthetic as well as real big time series data. \cref{Sec:Conclusion} concludes the paper and addresses future work.

%-------------------------
\subsubsection*{Notation}
%-------------------------

Throughout the paper, vectors and matrices are denoted by bold lower-case and bold upper-case letters, respectively (e.g., $\vv$ and $\VV$). All vectors are assume to be column vectors. We use regular lower-case to denote scalar constants (e.g., $ d $). Random variables are denoted by regular upper-case letters (e.g., $ Y $).
For a real vector, $ \vv $, its transpose is denoted by $ \vv^{\transpose} $. For two vectors $ \vv,\ww $, their inner-product is denoted as $ \dotprod{\vv, \ww}  = \vv^{\transpose} \ww$. For a vector $\vv$ and a matrix $\VV$, $\|\vv\|$ and $\|\VV\|$ denote vector $\ell_{2}$ norm and matrix spectral norm, respectively. The condition number of a matrix $ \AA $, which is the ratio of its largest and smallest singular values, is denoted by $ \kappa(\AA) $. Range of a matrix $ \AA \in \reals^{n \times d}$, denoted by $ \range(\AA) $, is a sub-space of $ \reals^{n} $, consisting all the vectors $ \left\{ \AA \xx \mid \xx \in \reals^{d} \right\}$. Adopting \texttt{Matlab} notation, we use $ \AA(i,:) $ to refer to the $ i\th $ row of the matrix $ \AA $ and consider it as a column vector. Finally, $ \ee_{i} $ denotes a vector whose $ i\th $ component is one, and zero elsewhere.

%-----------------------
% Background
%-----------------------

%---------------------
\section{Background}
\label{Sec:Background}
%---------------------

In this section, we present a brief overview of the two main ingredients of the results of this paper, namely autoregressive models (\cref{SecAR}) and leverage score sampling for OLS problems (\cref{Sec:RandNLA}).

%----------------------------------
\subsection{Autoregressive Models}
\label{SecAR}
%----------------------------------

A time series $\{Y_t;\, t=0,\pm 1,\pm 2,\ldots\}$ is called (weakly) stationary, if the mean $\Ex[Y_t]$ is independent of time $t$, and the auto-covariance $Cov(Y_{t},Y_{t+h})$ depends only on the lag $h$ for any integer values $t$ and $h$. A stationary time series $\{Y_t;\, t=0,\pm 1,\pm 2,\ldots\}$\footnote{Throughout this paper, we assume that $Y_t$'s are continuous random variables.} with the constant mean $\Ex[Y_t]=0$ is an \texttt{AR} model with the order $p$, denoted by $\mathtt{AR}(p)$, if we have
\begin{align}
\label{eq:arp}
\medskip Y_t & = \phi_1 Y_{t-1}+\cdots+\phi_p Y_{t-p}+W_t, 
\end{align}
where $\phi_p \neq 0$ and the time series $\{W_t;\, t=0,\pm 1,\pm 2,\ldots\}$ is a Gaussian white noise with the mean $\Ex[W_t] = 0$ and variance $Var(W_t) = \sigma_{W}^2$. Recall that a Gaussian white noise is a stationary time series in which each individual random variable $W_t$ has a normal distribution and any pair of random variables $W_{t_1}$ and $W_{t_2}$ for distinct values of $t_1,t_2\in\mathbb{Z}$ are uncorrelated. 

\begin{remark}
	For the sake of simplicity, we assume that $\Ex[Y_t] = 0$. Otherwise, if $\Ex[Y_t]=\mu\neq 0$, then one can replace $Y_t$ with $Y_t-\mu$ to obtain
	\begin{align*}
	\medskip Y_t-\mu & = \phi_1 (Y_{t-1}-\mu)+\cdots+\phi_p (Y_{t-p}-\mu) +W_t,
	\end{align*}
	which is simplified to 
	\begin{align*}
	\medskip Y_t & = \mu(1-\phi_1\cdots-\phi_p) + \phi_1 Y_{t-1}+\cdots+\phi_p Y_{t-p} +W_t.
	\end{align*}
\end{remark}

It is readily seen that each $\mathtt{AR}(p)$ model has $p+2$ unknown parameters consisting of the order $p$, the coefficients $\phi_i$ and the variance of white noises $\sigma_W^2$. Here, we briefly explain the common methods in the literature to estimate each of these unknown parameters.

\paragraph{Estimating the order $\bm{p}$.} A common method to estimate the order of an $\mathtt{AR}(p)$ model is to use the \emph{partial autocorrelation function} (PACF) \cite[Chapter 3]{Shu}. The PACF of a stationary time series $\{Y_t;\, t=0,\pm 1,\pm 2,\ldots\}$ at lag $h$ is defined by
\begin{align}
\mathtt{PACF}_h & \defeq \begin{cases}
\medskip \rho(Y_t,Y_{t+1}) & \mbox{for}\ h=1, \\
\medskip \rho(Y_{t+h}-\hY_{t+h,-h}, Y_t-\hY_{t,h}) & \mbox{for}\ h\geq 2,
\end{cases}
\label{eq:PACF}
\end{align}
where $\rho$ denotes the correlation function, and where $\hY_{t,h}$ and $\hY_{t+h,-h}$ denote the linear regression, in the population sense, of $Y_t$ and $Y_{t+h}$ on $\{Y_{t+1},\ldots, Y_{t+h-1}\}$, respectively. It can be shown that for a causal $\mathtt{AR}(p)$ model, while the theoretical PACF \cref{eq:PACF} at lags $h=1,\ldots,p-1$ may be non-zero and at lag $h=p$ may be strictly non-zero, at lag $h=p+1$ it drops to zero and then remains at zero henceforth \cite[Chapter 3]{Shu}. Recall that an $\mathtt{AR(p)}$ model is said to be \emph{causal} if the time series $\{Y_t;\, t=0,\pm 1,\pm 2,\ldots\}$ can be written as $Y_t = W_t+\sum_{i=1}^{\infty} \psi_i W_{t-i}$ with constant coefficients $\psi_i$ such that $\sum_{i=1}^{\infty}|\psi_i|<\infty$. Furthermore, if a sample of size $n$ is obtained from a causal $\mathtt{AR}(p)$ model, then under some mild conditions, an estimated sample PACF at lags $h>p$, scaled by $\sqrt{n}$, has a standard normal distribution, in limit as $n$ tends to infinity \cite[Chapter 8]{Blackwell2009TimeSeries}. 

Thus, in practice, the sample PACF versus lag $h$ along with a $95\%$ zero-confidence boundary, that is two horizontal lines at ${\pm1.96}/{\sqrt{n}}$, are plotted. Then, the largest lag $h$ in which the sample PACF lies out of the zero-confidence boundary for PACF is used as an estimation of the order $p$. For instance, \cref{fig:PACF Exact AR20,fig:PACF Exact AR100,fig:PACF Exact AR200} display the sample PACF plots for the synthetic time series data generated from models $\mathtt{AR(20)}$, $\mathtt{AR(100)}$, and $\mathtt{AR(200)}$, respectively. Each figure illustrates that the largest PACF lying out of the red dashed $95\%$ zero-confidence boundary, locates at a lag which is equal to the order of the \texttt{AR} model.

\vspace*{-0.35cm}
\paragraph{Maximum likelihood estimation of the coefficients $\bm{\phi_i}$ and variance $\bm{\sigma_{W}^2}$.} Let $y_1,\ldots,y_n$ be a time series realization of an $\mathtt{AR}(p)$ model where $p$ is known and $n \gg p$. Unlike a linear regression model, the log-likelihood function
\[\log(f_{Y_1,\ldots,Y_n}(y_1,\ldots,y_n;\,\phi_1,\ldots,\phi_p,\sigma_{W}^2)),\] 
where $f$ is the joint probability distribution function of the random variables $Y_1,\ldots,Y_n$, is a complicated non-linear function of the unknown parameters. Hence, finding an analytical form of the maximum likelihood estimates (MLEs) is intractable. Consequently, one typically uses some numerical optimization methods to find an MLE of the parameters of an $\mathtt{AR(p)}$ model approximately. However, it can be shown that the conditional log-likelihood function is analogous to the log-likelihood function of a linear regression model given below \cite[Chapter~5]{Ham}:
\begin{align*}
\medskip & \log(f_{Y_{p+1},\ldots,Y_n|Y_1,\ldots,Y_p}(y_{p+1},\ldots,y_n | y_1,\ldots,y_p;\,\phi_1,\ldots,\phi_p,\sigma_{W}^2)) \\
\medskip & \hspace*{0.5cm} = -\frac{n-p}{2}\log(2\pi) - \frac{n-p}{2}\log(\sigma_{W}^2) - \sum_{t=p+1}^{n}\frac{(y_t-\phi_1 y_{t-1}-\cdots-\phi_p y_{t-p})^2}{2 \sigma_{W}^2}.
\end{align*}

%\begin{remark}
%    \ali{Rev 1 - \S 2: Ali to fill in}
%\end{remark}

Thus, the conditional MLE (CMLE) of the coefficients $\phi_i$ as well as the variance $\sigma_{W}^2$ can be estimated from an OLS regression of $y_t$ on $p$ of its own lagged values. More~precisely, 
\begin{align}
\label{eq:phi} \bm{\phi}_{n,p} & \defeq (\XX_{n,p}^{\transpose} \XX_{n,p})^{-1} \XX_{n,p}^{\transpose} \bm{y}_{n,p},
\end{align}
where $\bm{\phi}_{n,p}$ is the CMLE of the coefficient vector $[\phi_1,\ldots,\phi_p]^{\transpose}$, the data matrix
\begin{align}
\label{eq:Xnp}
\XX_{n,p} & \defeq \begin{pmatrix}
y_p & y_{p-1} & \cdots & y_{1} \\
y_{p+1} & y_{p} & \cdots & y_{2} \\ 
\vdots & \vdots & \ddots & \vdots \\	
y_{n-1} & y_{n-2} & \cdots & y_{n-p} 
\end{pmatrix},
\end{align}
and $\bm{y}_{n,p} \defeq \begin{bmatrix} y_{p+1} & y_{p+2} & \ldots & y_{n} \end{bmatrix}^{\transpose}$. 

%\vspace*{-0.2cm}
\begin{remark}
    The data matrix $\XX_{n,p}$ in \cref{eq:Xnp} possesses Toeplitz structure that we take advantage of for our derivations in this paper, in particular developing the recursion for the leverage scores given in \cref{thm:TheRec4LevScr}. Also, it is highlighted that as the estimated parameter vector \cref{eq:phi} is operating under ``conditional'' MLE, the data matrix $\XX_{n,p}$ is a fixed design matrix.
\end{remark}

Moreover, the CMLE of $\sigma_{W}^2$, the so-called MSE, is given~by
\begin{align}\label{eq:MSE}
\widehat{\sigma}_{W}^2 & = \frac{\vnorm{\res_{n,p}}^{2}}{n-p},
\end{align}
where
\begin{align}
\label{eq:res} \res_{n,p} & \defeq \bm{y}_{n,p} - \XX_{n,p}\bm{\phi}_{n,p} \ =\ \begin{bmatrix}
\res_{n,p}(1) & \ldots & \res_{n,p}(n-p)
\end{bmatrix}^{\transpose}
\end{align}
and
\begin{align*}
\res_{n,p}(i) & = y_{p+i} - \XX_{n,p}^{\transpose}(i,:)\bm{\phi}_{n,p}\ \ \mbox{for}\ i=1, \ldots, n-p.
\end{align*}
Recall that $\XX_{n,p}(i,:)$ is the $ i\th $ row of matrix $ \XX_{n,p} $, that is, 
\begin{align*}
\XX_{n,p}(i,:) & \defeq \begin{bmatrix} y_{i+p-1} & y_{i+p-2} & \dots & y_i  \end{bmatrix}^{\transpose}. 
\end{align*}

One may criticize the CMLE as it requires one to exclude the first $p$ observations to construct the conditional log-likelihood function. Although this is a valid statement, due to the assumption $n \gg p$, dropping the first $p$ observation from the whole time series realization could be negligible.  

\begin{remark} \label{rem:pacf}
	It can be shown \cite[Chapter 3]{Shu} that if 
	\begin{align*}
	\medskip \hY_{t+h,-h} & = \alpha_1 Y_{t+h-1} + \cdots + \alpha_{h-1} Y_{t+1}, 
	\end{align*}
	then 
	\begin{align*}
	\medskip \hY_{t,h} & = \alpha_1 Y_{t+1} + \cdots + \alpha_{h-1} Y_{t+h-1}.
	\end{align*}
	This implies that finding PACF at each lag requires the solution to only one corresponding OLS problem. Furthermore, one can see that an empirical estimation of the coefficients $\alpha_i$ is the same as finding a CMLE of the coefficients of an $\mathtt{AR}(h-1)$ model fitted to the data. Thus, empirically estimating the order $p$ using a given time series data involves \emph{repeated} solutions of OLS problems, which can be computationally prohibitive in large-scale settings. Indeed, for $ n $ realizations $ y_{1},\ldots, y_{n}$, PACF at lag $ h $ can be calculated in $ \bigO{n h} $ using Toeplitz properties of the underlying matrix, and as a result selecting an appropriate order parameter $ p $ amounts to $ \bigO{\sum_{h=1}^p n h} = \bigO{n p^{2}} $ time complexity.
\end{remark}

\begin{remark}
	It should be noted that there is another method to estimate the parameters of an $\mathtt{AR}(p)$ model by solving the Yule-Walker equations with the Durbin-Levinson algorithm \cite[Chapter 8]{Blackwell2009TimeSeries}. Although, those estimates have asymptotic properties similar to CMLEs, solving the corresponding OLS problem is computationally faster than the Durbin-Levinson algorithm and also the CMLEs are statistically more efficient. 
\end{remark}

%-----------------------------------------
\subsection{Leverage Scores and RandNLA}
\label{Sec:RandNLA}
%-----------------------------------------

Linear algebra, which is the mathematics of linear mappings between vector spaces, has long had a large footprint in statistical data analysis. For example, canonical linear algebra problems such as principal component analysis and OLS are arguably among the first and most widely used techniques by statisticians. In the presence of large amounts data, however, such linear algebra routines, despite their simplicity of formulation, can pose significant computational challenges. For example, consider an over-determined OLS problem
\begin{align}
\label{eq:ls}
\min_{\xx} \vnorm{\AA\xx - \bb}^{2},
\end{align}
involving $ m \times d $ matrix $ \AA $, where $ m > d $. Note that, instead of $ n - p $ and $ p $ for the dimensions of the matrix \cref{eq:Xnp}, we adopt the notation $ m $ and $ d $ for the number of rows and columns, respectively. This is due to the fact that our discussion in this section involves arbitrary matrices and not those specifically derived from \texttt{AR} models. 
Solving \cref{eq:ls} amounts to $ \bigO{m d^{2} + d^{3}/3}$ flops by forming the normal equations, $ \bigO{m d^{2} - d^{3}}$ flops via QR factorization with Householder reflections, and $ \bigO{m d^{2} + d^{3}}$ flops using singular value decomposition (SVD) \cite{golub1983matrix}. Iterative solvers such as LSQR \cite{paige1982lsqr}, LSMR \cite{fong2011lsmr}, and LSLQ \cite{estrin2019lslq}, involve matrix-vector products at each iterations, which amount to $ \bigO{m d c} $ flops after $ c $ iterations. In other words, in ``big-data'' regimes where $ m d^2 \gg 1$, na\"{i}vely performing these algorithms can be costly. 

RandNLA subroutines involve the construction of an appropriate sampling/sketching matrix, $ \bSS \in \reals^{s \times m} $ for $ d \leq s \ll m $,  and compressing the data matrix into a smaller version $ \bSS \AA \in \reals^{s \times d}$. In the context of \cref{eq:ls}, using the smaller matrix, the above-mentioned classical OLS algorithms can be readily applied to the smaller scale problem
\begin{align}
\label{eq:sls}
\min_{\xx} \vnorm{\bSS\AA\xx - \bSS\bb}^{2},
\end{align}
at much lower costs. In these algorithms, sampling/sketching is used to obtain a data-oblivious or data-aware subspace embedding, which ensures that for  any $ 0 < \varepsilon,\delta < 1 $ and for large enough $ s $, we get
\begin{align}
\label{eq:LS_approx}
\prob{\vnorm{\AA\xx^{\star} - \bb}^{2} \leq \vnorm{\AA\xx_{s}^{\star} - \bb}^{2}  \leq \left(1+\bigO{\varepsilon}\right) \vnorm{\AA\xx^{\star} - \bb}^{2}} & \geq 1-\delta,
\end{align}
where $ \xx^{\star} $ and $ \xx_{s}^{\star} $ are the solutions to \cref{eq:ls,eq:sls}, respectively.
In other words, the solution to the reduced problem \cref{eq:sls} is a $ 1+\bigO{\varepsilon} $ approximation of the solution to the original problem~\cref{eq:ls}.

Arguably, the simplest data-oblivious way to construct the matrix $ \bSS $ is using uniform sampling, where each row of $ \bSS $ is chosen uniformly at random (with or without replacement) from the rows of the $ m \times m $ identity matrix. Despite the fact that the construction and application of such a matrix can be done in constant $ \bigO{1} $ time, in the presence of non-uniformity among the rows of $ \AA $, such uniform sampling strategies perform very poorly. In such cases, it can be shown that one indeed requires $ s \in \bigO{m} $ samples to obtain the above sub-space embedding property. 

To alleviate this significant shortcoming, data-oblivious sketching schemes involve randomly transforming the data so as to smooth out the non-uniformities, which in turn allows for subsequent uniform sampling in the randomly rotated space \cite{drineas2011faster}.  Here, the random projection acts as a preconditioner (for the class of random sampling algorithms), which makes the preconditioned data better behaved (in the sense that simple uniform sampling methods can be used successfully) (e.g., \cite{mahoney2011randomized,mahoney2016lecture}). With such sketching schemes, depending on the random projection matrix, different sample sizes are required, for instance, $ \bigO{d \log (1/\delta) / \varepsilon^{2}} $ samples for Gaussian projection, $ \bigO{d \log (d/\delta) / \varepsilon^{2}} $ samples for fast Hadamard-based transforms, and $ \bigO{d^{2} \text{poly}(\log (d/\delta)) / \varepsilon^{2}} $ samples using sparse embedding matrices. Woodruff \cite{woodruff2014sketching} provided a comprehensive overview of such methods and their extensions. 

Alternative to data-oblivious random embedding methods are data-aware sampling techniques, which by taking into account the information contained in the data, sample the rows of the matrix proportional to non-uniform distributions. Among many such strategies, those schemes based on \emph{statistical leverage scores} \cite{drineas2012fast} have not only shown to improve worst case theoretical guarantees of matrix algorithms, but also they are amenable to high-quality numerical implementations \cite{mahoney2011randomized}. Roughly speaking, the ``best'' random sampling algorithms base their importance sampling distribution on these scores and the ``best'' random projection algorithms transform the data to be represented in a rotated basis where these scores are approximately uniform. 

The concept of statistical leverage score has long been used in statistical regression diagnostics to identify outliers \cite{rousseeuw2011robust}. Given a data matrix $ \AA  \in \reals^{m \times d}$ with $ m \geq d $, consider any orthogonal matrix $ \QQ $ such that $ \range(\QQ) = \range(\AA) $. The $ i\th $ leverage score corresponding to $ i\th $ row of $ \AA $ is defined as
\begin{align*}
\ell(i) \defeq \vnorm{\QQ(i,:)}^{2}.
\end{align*}

It can be easily shown that this is well-defined in that the leverage score does not depend on the particular choice of the basis matrix $ \QQ $. Furthermore, the $ i\th $ leverage score boils down to the $ i\th $ diagonal entry of the \emph{hat} matrix, that is,
\begin{subequations}
	\begin{align}
	\label{eq:lev}
	\ell(i) & = \ee_{i}^{\transpose} \HH \ee_{i}\, \quad \mbox{for}\ i=1, \ldots, m,
	\end{align}
	where 
	\begin{align}
	\label{eq:hat}
	\HH \defeq \AA \left(\AA^{\transpose} \AA\right)^{-1}\AA^{\transpose}.
	\end{align}
	It is also easy to see that
	\begin{align*}
	\ell(i) \geq 0 \;\; \forall \; i, \quad \text{ and } \quad \sum_{i=1}^{m} \ell(i) = d.
	\end{align*}
	Thus,
	\begin{align}
	\label{eq:pi}
	\pi(i) \defeq \frac{\ell(i)}{d}, \quad \mbox{for}\ i=1, \ldots, m,
	\end{align}
\end{subequations}
defines a non-uniform probability distribution over the rows of $ \AA $. 

\paragraph{Leverage score sampling matrix $\bSS$.} Sampling according to the leverage scores amounts to randomly picking and re-scaling rows of $ \AA $ proportional to their leverage scores and appropriately re-scaling the sampled rows so as to maintain an unbiased estimator of $ \AA^{\transpose} \AA $, that is,
\begin{align*}
\Ex [\vnorm{\bSS \AA \xx}^{2}] = \vnorm{\AA \xx}^{2},\; \forall \xx.
\end{align*}
More precisely, each row of the $s \times m$ sampling matrix $\bSS$ is chosen randomly from the rows of the $m \times m$ identity matrix according to the probability distribution \cref{eq:pi}, with replacement. Furthermore, if the $i^{th}$ row is selected, it is re-scaled with the multiplicative factor 
\begin{align}
\label{rescaling-factor}
\frac{1}{\sqrt{s \pi_i}},
\end{align}
implying that $ 1/\sqrt{s \pi_i} \ee_{i}^{\transpose} $ is appended to $\bSS$.

Clearly, obtaining any orthogonal matrix $ \QQ $ as above by using SVD or QR factorization is almost as costly as solving the original OLS problem (i.e., $\bigO{m d^{2}} $ flops), which defeats the purpose of sampling altogether. In this light, Drineas et al. \cite{drineas2012fast} proposed randomized approximation algorithms, which efficiently estimate the leverage scores in $ \bigO{m d \log m + d^{3}} $ flops. For sparse matrices, this was further improved by Clarkson and Woodruff \cite{clarkson2017low}, Meng and Mahoney \cite{MM13_STOC}, and Nelson and Nguyen \cite{NN13} to $ \bigO{nnz(\AA) \log m + d^{3}} $. In particular, it has been shown that with the leverage score estimates $ \hat{\ell}(i) $ such that
\begin{align}
\label{eq:lev_beta}
\hat{\ell}(i) \geq \beta \ell(i), \quad \mbox{for}\ i = 1,2,\ldots m,
\end{align}
for some \emph{misestimation factor} $ 0 < \beta \leq 1 $, one can obtain \cref{eq:LS_approx} with
\begin{align}
\label{eq:lev_beta_sample}
s \in \bigO{d \log(d/\delta)/(\beta \varepsilon^{2})},
\end{align}
samples \cite{woodruff2014sketching}. As it can be seen from \cref{eq:lev_beta_sample}, the required sample size $ s $ is adversely affected by the leverage score misestimation factor $ \beta $.  

Recently, randomized sublinear time algorithms for estimating the parameters of an \texttt{AR} model for a given order $ d $ have been developed by Shi and Woodruff \cite{shi2019sublinear}. There, by using the notion of generalized leverage scroes, the authors propose a method for approximating CMLE of the parameters in $ \mathcal{O}(m \log^2 m+ (d^2\log^2 m)/\varepsilon^2 + (d^3 \log m)/\varepsilon^2) $ time, with high probability.
The analysis in \cite{shi2019sublinear} makes use of Toeplitz structure of data matrices arising from \texttt{AR} models. 
%Structured matrices in general, and Toeplitz matrices in particular, have been well studied in the literature. 
Also related to our settings here are \cite{van2003superfast} and \cite{xi2014superfast}, which developed, respectively, an exact and a (numerically stable) randomized approximation algorithm to solve Toeplitz least square problems, both with the time complexity of $ \bigO{ (m+d) \log^2(m+d) } $. 
An alternative sub-sampling algorithm to algorithmic leveraging for OLS problems has been considered by Wang  et al. \cite{wang2019information}. There, the sub-sampling is approached from the perspective of optimal design using D-optimality criterion, aiming to maximize the determinant of the Fisher information in the sub-sample. 
We also note that algorithms various statistical aspects of leverage scores have been extensively studied by Raskutti and Mahoney \cite{raskutti2016statistical} and Ma et al. \cite{ma2015statistical}. 
Finally, a more general notion of leverage scores in the context of recovery of continuous time signals from discrete measurements has recently been introduced by Avron et al. \cite{avron2019universal}.

%-----------------------
% Theoretical Results
%-----------------------

%--------------------------------------
\subsection{Theoretical Contributions}
\label{sec:contribution}
%--------------------------------------

Here, by taking the advantage of the structure of \texttt{AR} models, we derive an algorithm, called \texttt{LSAR}, which given the (approximate) leverage scores of the data matrix for an $\mathtt{AR}(p-1)$ model (cf.\ \cref{eq:Xnp}), efficiently provides an estimate for the leverage scores related to an $\mathtt{AR}(p)$ model. In the process, we derive explicit bounds on the misestimation factor $ \beta $ in \cref{eq:lev_beta}.  An informal statement of our main results (\cref{thm:quasi,thm:main,thm:quality_assurance,thm:Time Complexity}) are as follows.

\paragraph{Claim (Informal).} For any $\varepsilon > 0$ small enough, we prove (with a constant probability of success): 
\begin{itemize}
%	\item \cref{thm:quasi}: Given the leverage scores associated with an $ \mathtt{AR}(p-1) $ model, we can estimate those of an $ \mathtt{AR}(p) $ model with a misestimation factor $ \beta \in 1-\mathcal{O}(\sqrt{\varepsilon}) $ in  $ \mathcal{O}(n + p^{3} \log p) $ time complexity. This should be compared with na\"{i}ve QR-based methods with $ \bigO{n p^2} $ and the universal approximation schemes developed by \citet{drineas2012fast} with $ \bigO{n p \log n + p^{3}} $.  
	
	\item \cref{thm:main}: If only some suitable approximations of the leverage scores of an $ \mathtt{AR}(p-1) $ model are known, we can estimate those of an $ \mathtt{AR}(p) $ model with a misestimation factor $ \beta \in 1-\mathcal{O}(p\sqrt{\varepsilon}) $ in  $ \mathcal{O}(n + p^{3} \log p) $ time complexity. This should be compared with na\"{i}ve QR-based methods with $ \bigO{n p^2} $ and the universal approximation schemes developed by Drineas et al. \cite{drineas2012fast} with $ \bigO{n p \log n + p^{3}} $. 
		
	%the misestimation factor for an $ \mathtt{AR}(p) $ model becomes $ \beta \in 1-\mathcal{O}(p \sqrt{\varepsilon})$. 
	
	\item \cref{thm:quality_assurance,thm:Time Complexity}: Furthermore, an appropriate $ \mathtt{AR}(p) $ model can be fitted, with high-probability, in overall time complexity of $ \mathcal{O}(np + (p^{4} \log p)/\varepsilon^{2}) $ as compared with  $ \mathcal{O}(n p^{2}) $ using exact methods (cf.\ \cref{rem:pacf}), $\mathcal{O}((n+p)p\log^2(n+p) )$ by leveraging structured matrices as in \cite{van2003superfast}, and $ \mathcal{O}(n p \log^2 n+ (p^3\log^2 n)/\varepsilon^2 + (p^4 \log n)/\varepsilon^2) $ from sublinear time algorithms developed by Shi and Woodruff \cite{shi2019sublinear}.
\end{itemize}

\begin{remark}
	\label{rem:Conj}
	In big data regimes where typically $ n \gg p $ the above result implies an improvement over the existing methods for fitting an appropriate \texttt{AR} model. However, we believe that the dependence of the misestimation factor $\beta \in 1-\mathcal{O}(p \sqrt{\varepsilon})$ on $ p $ is superfluously a by-product of our analysis, as in our numerical experiments, we show that a sensible factor may be in the order of $\beta \in 1-\mathcal{O}(\log p \sqrt{\varepsilon})$. 
\end{remark}

%%%%%%%%%%%%%%%%%%%%%%%%%%%%%%%%%%%%%%%%%%%%%%%%%%%%%%%%%%%%
\section{Theoretical Results}\label{Sec:TheoreticalResults}
%%%%%%%%%%%%%%%%%%%%%%%%%%%%%%%%%%%%%%%%%%%%%%%%%%%%%%%%%%%%

In this section, we use the specific structure of the data matrix induced by an \texttt{AR} model to develop a fast algorithm to approximate the leverage scores corresponding to the rows of the data matrix \cref{eq:Xnp}. Furthermore, we theoretically show that our approximations possess relative error (cf.\ \cref{equ:MaxRelErrors}) bounds with high probability. Motivated from the leverage score based sampling strategy in \cref{Sec:RandNLA}, we then construct a highly efficient algorithm, namely \texttt{LSAR}, to fit an appropriate $\mathtt{AR}(p)$ model on big time series data. It should be noted that all proofs of this section are presented in \cref{Sec:Proofs}.

%-------------------------------------------------------------------
\subsection{Leverage Score Approximation for \texttt{AR} Models}
%-------------------------------------------------------------------

We first introduce \cref{def:notation} which relates and unifies notation of \cref{SecAR,Sec:RandNLA} together. 

\begin{definition} 
	\label{def:notation}
	In what follows, we define $ \ell_{n,p}$, $\HH_{n,p} $, and $ \pi_{n,p} $ as
	\begin{align*}
	\medskip \ell_{n,p}(i) &\defeq \ee_{i}^{\transpose} \HH_{n,p} \ee_{i} , \quad \mbox{for}\ i=1, \ldots, n-p, \\
	\medskip \HH_{n,p} &\defeq \XX_{n,p} \left(\XX_{n,p}^{\transpose} \XX_{n,p}\right)^{-1}\XX_{n,p}^{\transpose},\\
	\medskip \pi_{n,p}(i) &\defeq \frac{\ell_{n,p}(i)}{p}, \quad \mbox{for}\ i=1, \ldots, n-p.	
	\end{align*}
	That is, they refer, respectively, to  \cref{eq:lev},\cref{eq:hat}, and \cref{eq:pi}, using $ \AA = \XX_{n,p} $ as defined in \cref{eq:Xnp}.
\end{definition}

We show that the leverage scores associated with an $\mathtt{AR}(p)$ model can be recursively described using those arising from an $\mathtt{AR}(p-1)$ model. This recursive pattern is a direct result of the special structure of the data matrix \cref{eq:Xnp}, which amounts to a rectangular Hankel matrix \cite{golub1983matrix}.

\begin{theorem}[Exact Leverage Score Computations]
	\label{thm:TheRec4LevScr}
	The leverage scores of an $\mathtt{AR(1)}$ model are given by
	\begin{subequations}
		\begin{align}
		\label{eq:LS4AR(1)}\ell_{n,1}(i) & = \frac{y_i^2}{\displaystyle \sum_{t=1}^{n-1} y_t^2}, \quad \mbox{for}\ i=1,\ldots,n-1.
		\end{align} 
		For an $\mathtt{AR(p)}$ model with $ p \geq 2 $, the leverage scores are obtained by the following recursion
		\begin{align}
		\label{eq:LS4AR(p)}\ell_{n,p}(i) & = \ell_{n-1,p-1}(i) + \frac{\left(\res_{n-1,p-1}(i)\right)^{2}}{\vnorm{\res_{n-1,p-1}}^2}, \quad \mbox{for}\ i=1,\ldots,n-p,
		\end{align}
		where the residual vector $\res_{n-1,p-1}$ is defined in \cref{{eq:res}}.
	\end{subequations}
\end{theorem}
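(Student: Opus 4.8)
The plan is to establish the base case by direct computation and to obtain the recursion from a column-augmentation decomposition of the orthogonal projector, after first exhibiting the key nested structure of the Toeplitz/Hankel design matrices.

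First I would dispatch $\mathtt{AR}(1)$. Here $\XX_{n,1}$ is the single column $[y_1,\dots,y_{n-1}]^\transpose$, so $\HH_{n,1}$ is the rank-one projector $\XX_{n,1}\XX_{n,1}^\transpose/\vnorm{\XX_{n,1}}^2$, whose $i$th diagonal entry is $y_i^2/\sum_{t=1}^{n-1}y_t^2$, giving \cref{eq:LS4AR(1)} at once. The crux of the recursion is then a structural observation about \cref{eq:Xnp}: reading off entries $\XX_{n,p}(i,j)=y_{i+p-j}$, one checks that the $p-1$ rightmost columns of $\XX_{n,p}$ coincide exactly with the columns of $\XX_{n-1,p-1}$ (both matrices being $(n-p)\times(p-1)$), while the leftmost column of $\XX_{n,p}$ equals $[y_p,\dots,y_{n-1}]^\transpose=\bm{y}_{n-1,p-1}$, the response vector of the $\mathtt{AR}(p-1)$ regression. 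Thus, up to a harmless column permutation (which changes neither $\range(\cdot)$ nor any leverage score), we may write $\XX_{n,p}=[\,\bm{y}_{n-1,p-1}\mid \XX_{n-1,p-1}\,]$. I would verify this index bookkeeping carefully, since it is where the Toeplitz structure enters and is the step most prone to off-by-one errors.

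Given this decomposition I would invoke the standard additive splitting of an orthogonal projector under a single-column augmentation. The matrix $\HH_{n-1,p-1}$ of \cref{def:notation} is precisely the projector onto $\range(\XX_{n-1,p-1})$; letting
\[
\rr \defeq (\eye - \HH_{n-1,p-1})\,\bm{y}_{n-1,p-1}
\]
be the component of the new column orthogonal to that range, one has $\range(\XX_{n,p})=\range(\XX_{n-1,p-1})\oplus\Span(\rr)$ with $\rr\perp\range(\XX_{n-1,p-1})$, so the projectors add:
\[
\HH_{n,p} \;=\; \HH_{n-1,p-1} + \frac{\rr\,\rr^\transpose}{\vnorm{\rr}^2}.
\]
Reading off the $i$th diagonal entry yields $\ell_{n,p}(i)=\ell_{n-1,p-1}(i)+r_i^2/\vnorm{\rr}^2$, where $r_i$ denotes the $i$th component of $\rr$.

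Finally I would identify $\rr$ with the OLS residual: since $\HH_{n-1,p-1}\,\bm{y}_{n-1,p-1}=\XX_{n-1,p-1}\bm{\phi}_{n-1,p-1}$ is exactly the fitted vector from \cref{eq:phi}, we have $\rr=\bm{y}_{n-1,p-1}-\XX_{n-1,p-1}\bm{\phi}_{n-1,p-1}=\res_{n-1,p-1}$ by \cref{eq:res}. Substituting gives \cref{eq:LS4AR(p)}. Throughout I would assume the design matrices have full column rank and $\res_{n-1,p-1}\neq\zero$, so that the inverses and the rank-one term are well defined; this holds almost surely under the continuity assumption on the $Y_t$. The main obstacle is not the linear algebra—the projector-splitting identity is classical—but rather pinning down the nested column structure of the Hankel matrix with the correct indices; once $\XX_{n,p}=[\,\bm{y}_{n-1,p-1}\mid \XX_{n-1,p-1}\,]$ is established, the recursion is essentially forced.
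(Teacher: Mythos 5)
Your proof is correct, and it reaches \cref{eq:LS4AR(p)} by a genuinely different route than the paper. Both arguments start from the same structural identity $\XX_{n,p}=\begin{pmatrix}\bm{y}_{n-1,p-1} & \XX_{n-1,p-1}\end{pmatrix}$ (which, as you suspected, holds exactly with no column permutation needed: $\XX_{n,p}(i,j)=y_{i+p-j}$, so columns $2,\dots,p$ of $\XX_{n,p}$ are columns $1,\dots,p-1$ of $\XX_{n-1,p-1}$ in the same order). From there the paper works entirely with the Gram matrix: it applies the $2\times 2$ block inversion formula (\cref{LemInvBlockMat}) to $\XX_{n,p}^{\transpose}\XX_{n,p}$, identifies the Schur complement $u_{n,p}$ with $\vnorm{\res_{n-1,p-1}}^{2}$, and then expands the quadratic form $\XX_{n,p}^{\transpose}(i,:)\,\WW_{n,p}^{-1}\,\XX_{n,p}(i,:)$ term by term until the recursion appears. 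You instead work with the projector itself, using the classical column-augmentation splitting $\HH_{n,p}=\HH_{n-1,p-1}+\rr\rr^{\transpose}/\vnorm{\rr}^{2}$ with $\rr=(\eye-\HH_{n-1,p-1})\bm{y}_{n-1,p-1}=\res_{n-1,p-1}$, and read off the diagonal. Your route is shorter and more conceptual: the rank-one correction and the appearance of the residual are immediate rather than emerging from algebraic cancellation, and the Schur-complement computation $u_{n,p}=\vnorm{\res_{n-1,p-1}}^{2}$ is absorbed into the single observation that $\rr$ is the orthogonal component of the new column. The paper's computation, on the other hand, produces the explicit inverse $\WW_{n,p}^{-1}$ in terms of $\WW_{n-1,p-1}^{-1}$ and $\bm{\phi}_{n-1,p-1}$ as a by-product, which is of some independent interest for an updating implementation. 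Both arguments need $\XX_{n,p}$ to have full column rank (equivalently $\res_{n-1,p-1}\neq\zero$), which you state explicitly and the paper leaves implicit.
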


\cref{thm:TheRec4LevScr} shows that the leverage scores of \cref{eq:Xnp} can be exactly calculated through the recursive \cref{eq:LS4AR(p)} on the parameter $p$ with the initial condition \cref{eq:LS4AR(1)}. This recursion incorporates the leverage cores of the data matrix $\XX_{n-1,p-1}$ along with the residual terms of fitting an $\mathtt{AR(p-1)}$ model to the time series data $y_1,\ldots,y_{n-1}$. Note that both matrices $\XX_{n-1,p-1}$ and $\XX_{n,p}$ have equal number of rows, and accordingly equal number of leverage scores. Moreover, since we are dealing with big time series data (i.e., $n\gg p$), excluding one observation in practice is indeed negligible. 

\cref{thm:TheRec4LevScr}, though enticing at first glance, suffers from two major drawbacks in that not only does it require exact leverage scores associated with $ \mathtt{AR}(p-1) $ models, but it also involves exact residuals from the corresponding OLS estimations. In the presence of big data, computing either of these factors exactly defeats the whole purpose of data sampling altogether. To alleviate these two issues, we first focus on approximations in computing the latter, and then incorporate the estimations of the former. In doing so, we  obtain leverage score approximations, which enjoy desirable a priori relative error bounds. 

A natural way to approximate the residuals in the preceding $ \mathtt{AR}(p-1) $ model (i.e., $ \res_{n-1,p-1} $), is by means of sampling the data matrix $ \XX_{n-1,p-1} $ and solving the corresponding reduced OLS problem.
More specifically, we consider the sampled data matrix
\begin{subequations}
	\begin{align}
	\label{eq:samp_mat_exact}
	\tilde{\XX}_{n,p} \defeq \bSS \XX_{n,p},
	\end{align}
	where $ \bSS \in \reals^{s \times (n-p)} $ is the sampling matrix whose $ s $ rows are chosen at random with replacement from the rows of the $ (n-p) \times (n-p) $ identity matrix according to the distribution $ \{\pi_{n,p}(i)\}_{i=1}^{n-p} $ (cf.\ \cref{def:notation}) and rescaled by the appropriate factor \cref{rescaling-factor}.     
	Using $\tilde{\XX}_{n,p}$, the estimated parameter vector $\tilde{\bm{\phi}}_{n,p}$ is calculated as 
	\begin{align}
	\label{eq:phi_tilde}
	\tilde{\bm{\phi}}_{n,p} &\defeq (\tilde{\XX}_{n,p}^{\transpose} \tilde{\XX}_{n,p})^{-1} \tilde{\XX}_{n,p}^{\transpose} \tilde{\yy}_{n,p},
	\end{align}
	where $\tilde{\yy}_{n,p} \defeq \bSS \yy_{n,p}$. Finally, the residuals of $ \tilde{\bm{\phi}}_{n,p} $, analogous to \cref{eq:res}, are given by
	\begin{align}
	\label{eq:res_tilde}
	\tilde{\res}_{n,p} & \defeq \yy_{n,p} - \XX_{n,p}\tilde{\bm{\phi}}_{n,p}.    
	\end{align}
\end{subequations}

\vspace*{-0.05cm}
\begin{remark}
	\label{rem:res_tilde}
	We note that the residual vector $ \tilde{\res}_{n,p} $ is computed using the sampled data matrix $ \tilde{\XX}_{n,p} $, which is itself formed according to the leverage scores. In other words, the availability $ \tilde{\res}_{n,p} $ is equivalent to that of $ \{\pi_{n,p}(i)\}_{i=1}^{n-p} $.
\end{remark}

The following theorem, derived from the structural result  \cite{drineas2011faster}, gives estimates on the approximations \cref{eq:phi_tilde,eq:res_tilde}. 
\begin{theorem}[{\cite[Theorem 1]{drineas2011faster}}] 
	\label{thm:drma}
	Consider an $\mathtt{AR(p)}$ model and let $ 0 < \varepsilon,\delta < 1 $. For sampling with (approximate) leverage scores using a sample size $ s $ as in \cref{eq:lev_beta_sample} with $ d = p $, we have with probability at least $ 1-\delta $,
	\begin{subequations}
		\begin{align}
		\label{EquBoundSSE} \vnorm{\tilde{\res}_{n,p}} & \leq (1+\varepsilon) \vnorm{\res_{n,p}}, \\
		\label{EquRelDiffError} \vnorm{\bm{\phi}_{n,p}-\tilde{\bm{\phi}}_{n,p}} & \leq \sqrt{\varepsilon} \eta_{n,p} \vnorm{\bm{\phi}_{n,p}},
		\end{align}
	\end{subequations}
	where $ \bm{\phi}_{n,p}, \res_{n,p}, \tilde{\bm{\phi}}_{n,p} $ and $ \tilde{\res}_{n,p} $ are defined,, respectively, in \cref{eq:phi,eq:phi_tilde,eq:res,eq:res_tilde}, 
	\begin{align}
	\label{eq:eta}
	\eta_{n,p} = \kappa(\XX_{n,p})\sqrt{\xi^{-2}-1},
	\end{align}
	$\kappa(\XX_{n,p})$ is the condition number of matrix $\XX_{n,p}$, and $\xi\in(0,1]$ is the fraction of $\yy_{n,p} $ that lies in $ \range(\XX_{n,p}) $, that is, $\xi \defeq \vnorm{\HH_{n,p} \yy_{n,p}} / \vnorm{\yy_{n,p}}$ with $ \HH_{n,p} $ as in \cref{def:notation}.
\end{theorem}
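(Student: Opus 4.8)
The plan is to recognize that \cref{thm:drma} is a specialization of the structural least-squares result of Drineas et al.\ \cite{drineas2011faster} to the \texttt{AR} setting, obtained by setting $ \AA = \XX_{n,p} $, $ \bb = \yy_{n,p} $, the optimizer $ \xx^{\star} = \bm{\phi}_{n,p} $, and the optimal residual $ \res = \res_{n,p} $. Accordingly, the proof splits into two tasks: first, verifying that the leverage-score sampling-and-rescaling matrix $ \bSS $ (with distribution $ \{\pi_{n,p}(i)\} $ and rescaling \cref{rescaling-factor}) satisfies the two structural hypotheses required by that theorem; and second, translating the generic bounds it returns into the stated forms \cref{EquBoundSSE,EquRelDiffError} using the definitions of $ \xi $, $ \kappa(\XX_{n,p}) $, and $ \eta_{n,p} $.

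For the first task, let $ \UU $ be any matrix with orthonormal columns satisfying $ \range(\UU) = \range(\XX_{n,p}) $, so that $ \ell_{n,p}(i) = \vnorm{\UU(i,:)}^{2} $ are exactly the leverage scores driving the sampling distribution. The two conditions to establish are (i) a subspace-embedding bound $ \vnorm{\eye - \UU^{\transpose}\bSS^{\transpose}\bSS\,\UU} \leq 1 - 1/\sqrt{2} $, and (ii) a residual-correlation bound $ \vnorm{\UU^{\transpose}\bSS^{\transpose}\bSS\,\res_{n,p}}^{2} \leq \varepsilon\,\vnorm{\res_{n,p}}^{2}/2 $. The rescaling \cref{rescaling-factor} is precisely what makes $ \Ex[\UU^{\transpose}\bSS^{\transpose}\bSS\,\UU] = \UU^{\transpose}\UU = \eye $, and since $ \res_{n,p} \perp \range(\XX_{n,p}) $ gives $ \UU^{\transpose}\res_{n,p} = \zero $, it also makes $ \Ex[\UU^{\transpose}\bSS^{\transpose}\bSS\,\res_{n,p}] = \zero $; hence both conditions are concentration statements about sums of independent rank-one terms around their means. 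Standard approximate-matrix-multiplication and matrix-concentration estimates then show that the sample size $ s \in \bigO{p\log(p/\delta)/(\beta\varepsilon^{2})} $ of \cref{eq:lev_beta_sample} (with $ d = p $) suffices for each condition to fail with probability at most $ \delta/2 $; the misestimation factor $ \beta $ of the \emph{approximate} leverage scores is exactly what is absorbed into $ s $. A union bound then secures both conditions simultaneously with probability at least $ 1-\delta $.

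For the second task, on the event that (i)--(ii) hold, the theorem of \cite{drineas2011faster} yields the residual bound $ \vnorm{\tilde{\res}_{n,p}} \leq (1+\varepsilon)\vnorm{\res_{n,p}} $, which is \cref{EquBoundSSE} verbatim, together with the solution bound $ \vnorm{\bm{\phi}_{n,p}-\tilde{\bm{\phi}}_{n,p}} \leq \sigma_{\min}^{-1}(\XX_{n,p})\sqrt{\varepsilon}\,\vnorm{\res_{n,p}} $. To reach \cref{EquRelDiffError} I would rewrite the residual norm in terms of $ \xi $: since $ \vnorm{\HH_{n,p}\yy_{n,p}} = \xi\vnorm{\yy_{n,p}} $ and $ \vnorm{\res_{n,p}} = \vnorm{(\eye-\HH_{n,p})\yy_{n,p}} = \sqrt{1-\xi^{2}}\,\vnorm{\yy_{n,p}} $, combining these gives $ \vnorm{\res_{n,p}} = \sqrt{\xi^{-2}-1}\,\vnorm{\XX_{n,p}\bm{\phi}_{n,p}} \leq \sqrt{\xi^{-2}-1}\,\sigma_{\max}(\XX_{n,p})\vnorm{\bm{\phi}_{n,p}} $. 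Dividing by $ \sigma_{\min}(\XX_{n,p}) $ turns the prefactor into $ \kappa(\XX_{n,p})\sqrt{\xi^{-2}-1} = \eta_{n,p} $, which is exactly \cref{EquRelDiffError}.

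I expect the main obstacle to be bookkeeping rather than genuine mathematical difficulty, since the deep work is already carried by \cite{drineas2011faster}. The delicate point is to confirm that the sampling scheme described in \cref{Sec:RandNLA} and \cref{def:notation} --- leverage-score sampling \emph{with replacement} together with the specific rescaling --- coincides with the scheme covered by that theorem, and that the \emph{approximate}-leverage-score regime (misestimation factor $ \beta $) enters only through the sample size \cref{eq:lev_beta_sample}, so that the two structural conditions, and hence both conclusions, continue to hold with the claimed probability $ 1-\delta $.
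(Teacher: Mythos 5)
Your proposal is correct, and it matches the paper's treatment of this statement: the paper offers no proof of \cref{thm:drma} at all, presenting it as a direct citation of Theorem~1 of Drineas et al.\ \cite{drineas2011faster} specialized to $\AA=\XX_{n,p}$, $\bb=\yy_{n,p}$, and your reconstruction of that external argument --- the two structural conditions on $\bSS\UU$ and $\UU^{\transpose}\bSS^{\transpose}\bSS\res_{n,p}$, the absorption of the misestimation factor $\beta$ into the sample size \cref{eq:lev_beta_sample}, and the algebra converting $\sigma_{\min}^{-1}(\XX_{n,p})\sqrt{\varepsilon}\,\vnorm{\res_{n,p}}$ into $\sqrt{\varepsilon}\,\eta_{n,p}\vnorm{\bm{\phi}_{n,p}}$ via $\vnorm{\res_{n,p}}=\sqrt{\xi^{-2}-1}\,\vnorm{\XX_{n,p}\bm{\phi}_{n,p}}$ --- is faithful to that source. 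Nothing further is needed beyond what the citation already supplies.
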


Using a combination of exact leverage scores and the estimates \cref{eq:res_tilde} on the OLS residuals associated with the $ \mathtt{AR}(p-1) $ model, we define \emph{quasi-approximate leverage scores} for the $\mathtt{AR}(p)$ model.
\begin{definition}[Quasi-approximate Leverage Scores]
	\label{DefTildeLS}
	For an $\mathtt{AR(p)}$ model with $ p \geq 2 $, the quasi-approximate leverage scores are defined by the following equation
	\begin{align}
	\label{eq:lev_p_tilde}
	\medskip \tilde{\ell}_{n,p}(i) & \defeq \ell_{n-1,p-1}(i) + \frac{\left(\tilde{\res}_{n-1,p-1}(i)\right)^{2}}{\vnorm{\tilde{\res}_{n-1,p-1}}^2}\ \ \mbox{for}\ i=1,\ldots,n-p,
	\end{align}
	where $\ell_{n,p}(i)$ and $\tilde{\res}_{n,p} $ are as in \cref{def:notation,eq:res_tilde}.
\end{definition}

%The following result gives a priori relative-error estimate on the quasi-approximate leverage scores. 
%\begin{theorem}[Relative Errors for Quasi-approximate Leverage Scores]
%	\label{thm:quasi}
%	\ali{Maybe move to appendix} For the quasi-approximate leverage scores, we have with probability at least $ 1-\delta $,
%	\begin{align*}
%	\frac{|\ell_{n,p}(i)-\tilde{\ell}_{n,p}(i)|}{\ell_{n,p}(i)} & \leq \left( 1 + 3 \eta_{n-1,p-1} \kappa^2(\XX_{n,p})\right) \sqrt{\varepsilon},\quad \mbox{for}\ i = 1,\ldots,n-p,
%	\end{align*}
%	recalling that $\eta_{n,p}, \kappa(\XX_{n,p})$, and $\varepsilon$ are as in \cref{thm:drma}. 
%\end{theorem}

%Although \cref{thm:quasi} provides an upper bound on the relative-error of each of the individual quasi-approximate leverage scores for any $ p $, 

Clearly, the practical advantage of $ \tilde{\ell}_{n,p} $ is entirely contingent upon the availability of the exact leverage scores for $p-1$, that is, $ \ell_{n-1,p-1} $ (cf.\ \cref{DefTildeLS}). For $ p=2 $, this is indeed possible. More specifically, from \cref{eq:LS4AR(1)}, the exact leverage scores of an $\mathtt{AR(1)}$ model can be trivially calculated, which in turn give the quasi-approximate leverage scores $\{\tilde{\ell}_{n-2,2}(i)\}_{i=1}^{n-2}$ using \cref{eq:lev_p_tilde}. However, for $p = 3$ (and subsequent values), the relation \cref{eq:lev_p_tilde} does not apply as not only are $ \{\ell_{n-1,p-1}(i)\}_{i=1}^{n-p} $ no longer readily available, but also for the same token without having $ \{\pi_{n-1,p-1}(i)\}_{i=1}^{n-p} $, the residual vector $ \tilde{\res}_{n-1,p-1} $ may not be computed directly (cf.\ \cref{rem:res_tilde}). Nonetheless, replacing the exact leverage scores with quasi-approximate ones in \cref{eq:lev_p_tilde} for $ p=2 $ allows for a new approximation for $p = 3$. Such new leverage score estimates can be in turn incorporated in approximation of subsequent leverage scores for $ p \geq 4 $. This idea leads to our final and practical definition of \emph{fully-approximate leverage scores}.

\begin{definition}[Fully-approximate Leverage Scores]
	\label{DefHatLS}
	For an $\mathtt{AR(p)}$ model with $ p \geq 1 $, the fully-approximate leverage scores are defined by the following equation
	\begin{subequations}
		\label{eq:lev_full}
		\begin{align}
		\label{eq:lev_estimate}
		\hat{\ell}_{n,p}(i) & \defeq 
		\left\{\begin{array}{ll}
		\medskip \ell_{n,1}(i), & \mbox{for}\ p=1 \\
		\tilde{\ell}_{n,2}(i), & \mbox{for}\ p=2 \\
		\hat{\ell}_{n-1,p-1}(i) + \displaystyle{\frac{\left(\hat{\res}_{n-1,p-1}(i)\right)^{2}}{\vnorm{\hat{\res}_{n-1,p-1}}^2}},& \mbox{for}\ p\geq 3
		\end{array}\right.,
		\end{align}
		where 
		\begin{align}
		\label{eq:res_hat}
		\hat{\res}_{n-1,p-1} & \defeq \yy_{n-1,p-1} - \XX_{n-1,p-1}\hat{\bm{\phi}}_{n-1,p-1}, \\
		\label{eq:phi_hat}
		\hat{\bm{\phi}}_{n-1,p-1} &\defeq (\hat{\XX}_{n-1,p-1}^{\transpose} \hat{\XX}_{n-1,p-1})^{-1} \hat{\XX}_{n-1,p-1}^{\transpose} \hat{\yy}_{n-1,p-1}\,	
		\end{align}	
		and $\hat{\XX}_{n-1,p-1}$ and $\hat{\yy}_{n-1,p-1}$ are the reduced data matrix and response vector, sampled  respectively, according to the distribution 
		\begin{align}
		\label{eq:sampling_distribution}
		\hat{\pi}_{n-1,p-1}(i) & = \frac{\hat{\ell}_{n-1,p-1}(i)}{p-1}\ \ \mbox{for}\ i=1,\ldots,n-p.
		\end{align}
	\end{subequations}
\end{definition}

\begin{remark}\label{rem:compare quasi- with fully-}
	It should be noted that \cref{eq:lev_p_tilde} estimates the leverage scores of an $\mathtt{AR}(p)$ model, given the corresponding \emph{exact} values of an $\mathtt{AR}(p-1)$ model. This is in sharp contrast to \cref{eq:lev_estimate}, which recursively provides similar estimates without requiring any information on the exact values.
\end{remark}

Unlike the quasi-approximate leverage scores, the fully-approximate ones in \cref{DefHatLS} can be easily calculated for any given the parameter value $p \geq 1$. Finally, \cref{thm:main} provides a priori relative-error estimate on individual fully-approximate leverage scores. 

\begin{theorem}[Relative Errors for Fully-approximate Leverage Scores]
	\label{thm:main}
	For the fully-approximate leverage scores, we have with probability at least $ 1-\delta $,
	\begin{align*}
	\medskip \frac{|\ell_{n,p}(i)-\hat{\ell}_{n,p}(i)|}{\ell_{n,p}(i)} & \leq \left( 1 + 3 \eta_{n-1,p-1} \kappa^2(\XX_{n,p})\right)(p-1) \sqrt{\varepsilon}, \quad \mbox{for}\ i = 1,\ldots,n-p,
	\end{align*}
	recalling that $\delta$, $\eta_{n,p}, \kappa(\XX_{n,p})$, and $\varepsilon$ are as in \cref{thm:drma}. %\ali{$1-\delta$}
\end{theorem}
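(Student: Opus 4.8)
The plan is to prove the bound by induction on the order $p$, following the recursive definition of the fully-approximate scores in \cref{DefHatLS}. The base cases are immediate: for $p=1$ the scores $\hat\ell_{n,1}$ equal the exact $\ell_{n,1}$ (zero error), and for $p=2$ the quasi-approximate scores $\tilde\ell_{n,2}$ (cf.\ \cref{DefTildeLS}) differ from $\ell_{n,2}$ only through the sampled residual of the \emph{exactly} sampled $\mathtt{AR}(1)$ fit, which is governed by \cref{thm:drma}. For the inductive step I would subtract the exact recursion \cref{eq:LS4AR(p)} from the approximate one \cref{eq:lev_estimate} and split the per-row error as
\begin{align*}
\ell_{n,p}(i)-\hat\ell_{n,p}(i) &= \big(\ell_{n-1,p-1}(i)-\hat\ell_{n-1,p-1}(i)\big) \\
&\quad + \Big(\tfrac{\res_{n-1,p-1}(i)^2}{\vnorm{\res_{n-1,p-1}}^2}-\tfrac{\hat\res_{n-1,p-1}(i)^2}{\vnorm{\hat\res_{n-1,p-1}}^2}\Big).
\end{align*}
Dividing by $\ell_{n,p}(i)=\ell_{n-1,p-1}(i)+\res_{n-1,p-1}(i)^2/\vnorm{\res_{n-1,p-1}}^2$ and using that both summands are nonnegative, the first piece contributes at most the relative error of the $\mathtt{AR}(p-1)$ scores (the inductive hypothesis), while the second is a single ``residual-ratio'' error that must be shown to be $\bigO{\kappa^2\sqrt\varepsilon}$ per step.

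The main obstacle is precisely this residual-ratio term: a naive bound divides by the individual entry $\res_{n-1,p-1}(i)$, which can be arbitrarily small and destroys any relative-error guarantee. To circumvent this I would exploit the structural fact underlying \cref{thm:TheRec4LevScr}, namely $\XX_{n,p}=[\yy_{n-1,p-1}\mid\XX_{n-1,p-1}]$, so that $\res_{n-1,p-1}/\vnorm{\res_{n-1,p-1}}$ is the unit vector orthogonal to $\range(\XX_{n-1,p-1})$ whose squared $i$-th entry supplies the new leverage mass. Writing $\hat\res_{n-1,p-1}=\res_{n-1,p-1}+\XX_{n-1,p-1}\Delta\bm{\phi}$ with $\Delta\bm{\phi}\defeq\bm{\phi}_{n-1,p-1}-\hat{\bm{\phi}}_{n-1,p-1}$, its orthogonal component is exactly $\res_{n-1,p-1}$ and its in-range component is $\XX_{n-1,p-1}\Delta\bm{\phi}$; hence the normalized $\hat\res$ makes an angle $\theta$ with the orthogonal complement satisfying $\sin\theta=\vnorm{\XX_{n-1,p-1}\Delta\bm{\phi}}/\vnorm{\hat\res_{n-1,p-1}}$. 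Expanding the squared $i$-th entry of the normalized $\hat\res$ in these two orthogonal pieces, the residual-ratio error splits into a diagonal part (bounded by $\sin^2\theta$), an in-range part (bounded using that any in-range unit vector has squared $i$-th entry at most $\ell_{n-1,p-1}(i)\le\ell_{n,p}(i)$), and a cross term (bounded by $\sin\theta$ after an AM--GM inequality between $\ell_{n-1,p-1}(i)$ and the new leverage mass). This gives a per-step bound of order $\sin\theta$ with no small entry left in any denominator.

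It then remains to bound $\sin\theta$. Using $\vnorm{\XX_{n-1,p-1}\Delta\bm{\phi}}\le\sigma_{\max}(\XX_{n-1,p-1})\vnorm{\Delta\bm{\phi}}$, the coefficient estimate \cref{EquRelDiffError}, and $\vnorm{\hat\res_{n-1,p-1}}\ge\vnorm{\res_{n-1,p-1}}$ (the exact OLS residual is the smallest), I obtain $\sin\theta\le\sigma_{\max}\sqrt\varepsilon\,\eta_{n-1,p-1}\vnorm{\bm{\phi}_{n-1,p-1}}/\vnorm{\res_{n-1,p-1}}$. Substituting the definition \cref{eq:eta} of $\eta$ together with $\vnorm{\XX_{n-1,p-1}\bm{\phi}_{n-1,p-1}}=\xi\vnorm{\yy_{n-1,p-1}}\ge\sigma_{\min}\vnorm{\bm{\phi}_{n-1,p-1}}$ and $\vnorm{\res_{n-1,p-1}}=\sqrt{1-\xi^2}\vnorm{\yy_{n-1,p-1}}$ makes the $\xi$-factors cancel and leaves a clean $\kappa^2(\XX_{n-1,p-1})\sqrt\varepsilon$, which is at most $\kappa^2(\XX_{n,p})\sqrt\varepsilon$ since deleting a column can only shrink the condition number (singular-value interlacing). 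Summing the $p-1$ per-step contributions and majorizing every intermediate $\eta$ and $\kappa$ by their top-level values yields the $(p-1)$-factor; the explicit $\eta_{n-1,p-1}$ and the additive $1$ in the statement are slack one keeps if $\eta$ and the norm ratio are bounded separately rather than through the cancellation above (consistent with the non-tightness flagged in \cref{rem:Conj}).

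Finally I would handle two bookkeeping points. First, each invocation of \cref{thm:drma} at order $p'\le p$ is legitimate only if the sampling distribution $\hat\pi_{n-1,p'-1}$ has misestimation factor bounded away from zero; this is supplied by the inductive hypothesis itself, since the relative-error bound at level $p'-1$ lower-bounds $\hat\ell_{n-1,p'-1}$ in the sense of \cref{eq:lev_beta}. Thus the induction and the sampling guarantee bootstrap one another, and the argument requires $\varepsilon$ small enough that $(p-1)\sqrt\varepsilon$ stays below $1$. Second, the high-probability claim needs a union bound over the $\bigO{p}$ applications of \cref{thm:drma}, absorbed by replacing $\delta$ with $\delta/p$ in the sample size \cref{eq:lev_beta_sample}. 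I expect the verification of this bootstrapping and the union bound to be routine once the deterministic per-step estimate $\sin\theta\le\kappa^2\sqrt\varepsilon$ is established.
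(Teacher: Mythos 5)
Your proposal is correct, and the induction skeleton (base case $p=2$ via the quasi-approximate scores, inductive step by splitting the error into the inherited $\mathtt{AR}(p-1)$ error plus a single residual-ratio increment, then majorizing $\ell_{n-1,p-1}(i)\le\ell_{n,p}(i)$ and all condition numbers by the top-level one) is exactly the paper's. Where you genuinely diverge is in the per-step residual-ratio estimate, i.e.\ the content of the paper's \cref{thm:quasi}. The paper proceeds algebraically: it splits $\bigl|\res(i)^2/\vnorm{\res}^2-\tilde\res(i)^2/\vnorm{\tilde\res}^2\bigr|$ into a norm-perturbation term and an entry-perturbation term, then controls $|\res(i)|$, $|\res(i)-\tilde\res(i)|$, and $\vnorm{\res}$ separately through a chain of lemmas ($|\res(i)|\le\sqrt{\vnorm{\bm{\phi}}^2+1}\,\vnorm{\XX}\sqrt{\ell(i)}$, $\vnorm{\res}\ge\sqrt{\lambda_{\min}(\XX^{\transpose}\XX)(\vnorm{\bm{\phi}}^2+1)}$, etc.), which is what produces the product $\eta_{n-1,p-1}\kappa^2(\XX_{n,p})$ in the final constant. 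Your route instead exploits the exact orthogonal decomposition $\hat\res=\res+\XX\Delta\bm{\phi}$ with $\res\perp\range(\XX)$, reduces the per-step error to the rotation angle $\sin\theta=\vnorm{\XX\Delta\bm{\phi}}/\vnorm{\hat\res}$, and lets the $\xi$-factors in $\eta$ cancel against $\vnorm{\bm{\phi}}/\vnorm{\res}$; this yields a per-step bound of order $\kappa^2\sqrt{\varepsilon}$ with no residual entry in any denominator and, notably, \emph{without} the extra factor of $\eta$ — so your argument, if written out, would actually sharpen the stated constant (consistent with the slack the authors themselves flag in \cref{rem:Conj}). You also handle two points more explicitly than the paper does: the bootstrapping by which the inductive hypothesis certifies the misestimation factor $\beta$ needed to invoke \cref{thm:drma} at the next level (the paper's inductive step silently applies the quasi-approximate bound to residuals computed from \emph{approximate}-score sampling), and the union bound over the $\bigO{p}$ sampling events. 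Both are necessary for a fully rigorous statement and are correctly resolved in your sketch.
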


Although qualitatively descriptive, the bound in \cref{thm:main} is admittedly pessimistic and involves an overestimation factor that scales quadratically with the condition number of the data matrix, $ \kappa $, and linearly with the order of the \texttt{AR} model, $ p $. We conjecture that the linear dependence on $ p $ can be instead replaced with $ \log(p) $, which is supported by the experiment depicted in \cref{fig:Tthe4_ver}. We leave the investigation of ways to improve the upper-bound of \cref{thm:main} to future work.

\cref{thm:main} prescribes the misestimation factor $\beta$ (cf.\ \cref{eq:lev_beta}) for the fully-approximate leverage sores of an $\mathtt{AR}(p)$ model, stated in \cref{cor:lev_beta}. 

\begin{corollary}
	\label{cor:lev_beta}
	The misestimation factor $\beta$  for the fully-approximate leverage scores of an $\mathtt{AR}(p)$ model is $ 1-\mathcal{O}(p \sqrt{\varepsilon}) $.
\end{corollary}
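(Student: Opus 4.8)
The plan is to read off the one-sided lower bound directly from \cref{thm:main}. Recall from \cref{eq:lev_beta} that the misestimation factor $\beta$ is any constant $0 < \beta \le 1$ for which $\hat{\ell}_{n,p}(i) \ge \beta\,\ell_{n,p}(i)$ holds for all $i$. Hence it suffices to convert the two-sided relative-error estimate of \cref{thm:main} into a lower bound on $\hat{\ell}_{n,p}(i)$, and then to extract the scaling of the resulting constant in $p$.

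First I would abbreviate the condition-number-dependent prefactor by writing $C \defeq 1 + 3\,\eta_{n-1,p-1}\,\kappa^2(\XX_{n,p})$, so that \cref{thm:main} reads, with probability at least $1-\delta$,
\begin{align*}
\frac{|\ell_{n,p}(i) - \hat{\ell}_{n,p}(i)|}{\ell_{n,p}(i)} \le C\,(p-1)\sqrt{\varepsilon}, \quad \mbox{for}\ i = 1,\ldots,n-p.
\end{align*}
Since $\ell_{n,p}(i) > 0$, discarding the upper tail of the absolute value leaves $\ell_{n,p}(i) - \hat{\ell}_{n,p}(i) \le C\,(p-1)\sqrt{\varepsilon}\,\ell_{n,p}(i)$, which rearranges to
\begin{align*}
\hat{\ell}_{n,p}(i) \ge \bigl(1 - C\,(p-1)\sqrt{\varepsilon}\bigr)\,\ell_{n,p}(i).
\end{align*}
Identifying $\beta \defeq 1 - C\,(p-1)\sqrt{\varepsilon}$ and absorbing the data-dependent factor $C$ (which is governed by $\eta_{n-1,p-1}$ and $\kappa(\XX_{n,p})$, not by the focus parameter $p$) into the implied constant then yields $\beta \in 1 - \mathcal{O}(p\sqrt{\varepsilon})$, as claimed; the whole argument inherits the probability $1-\delta$ from \cref{thm:main}.

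There is no substantial obstacle here, as the statement is an immediate corollary of \cref{thm:main}; the only point requiring care is the \emph{admissibility} of $\beta$. The definition \cref{eq:lev_beta} demands $0 < \beta \le 1$, so I would note that the conclusion is meaningful precisely when $\varepsilon$ is taken small enough that $C\,(p-1)\sqrt{\varepsilon} < 1$ — consistent with the hypothesis ``$\varepsilon > 0$ small enough'' of the informal claim — in which case $\beta \in (0,1)$ and the sample-size guarantee \cref{eq:lev_beta_sample} may be invoked with this value of $\beta$.
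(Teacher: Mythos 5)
Your proposal is correct and matches the paper's (implicit) argument: the paper states \cref{cor:lev_beta} as an immediate consequence of \cref{thm:main} without a written proof, and the intended derivation is precisely your one-sided rearrangement $\hat{\ell}_{n,p}(i) \geq \bigl(1 - C(p-1)\sqrt{\varepsilon}\bigr)\ell_{n,p}(i)$ giving $\beta = 1-\mathcal{O}(p\sqrt{\varepsilon})$. Your added remark on the admissibility condition $0 < \beta \leq 1$ (hence the need for $\varepsilon$ small enough) is a useful point the paper leaves tacit.
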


%-------------------------------------------------------------------
\subsection{\texttt{LSAR} Algorithm for Fitting \texttt{AR} Models}
%-------------------------------------------------------------------

Based on these theoretical results, we introduce the \texttt{LSAR} algorithm, depicted in \cref{alg:leverage}, which is the first leverage score sampling algorithm to approximately fit an appropriate \texttt{AR} model to a given big time series data. The theoretical properties of  the \texttt{LSAR} algorithm are given in \cref{thm:quality_assurance,thm:Time Complexity}. 
\begin{algorithm}[h!]
	\caption{\texttt{LSAR}: Leverage Score Sampling Algorithm for Approximate \texttt{AR} Fitting}	
	\begin{algorithmic}
		\STATE \textbf{Input:} 
		\begin{itemize}[label={-}]
			\vspace{1mm}
			\item Time series data $\{y_1,\ldots,y_n\}$\,;
			\vspace{1mm}
			\item A relatively large value $\bar{p} \ll n$\,;
			\vspace{1mm}
			\item Constant parameters $0< \varepsilon < 1$ and $0< \delta_{0}<1$;
		\end{itemize}
		\vspace{1mm}
		\item \emph{Step 0}. Set $p=0$ and $m=n-\bar{p}$\,;
		\vspace{1mm}
		\WHILE {$ p < \bar{p} $} 
		\vspace{1mm}
		\STATE \emph{Step 1}. $p \leftarrow p + 1$ and $m \leftarrow m + 1$\,;
		\vspace{1mm}
		\STATE \emph{Step 2}. Estimate PACF at lag $p$, i.e., $\hat{\tau}_{p}$\,;
		\vspace{1mm}
		\STATE \emph{Step 3}. Compute the approximate leverage scores $\hat{\ell}_{m,p}(i)$ for $i=1,\ldots,m-p$ as in \cref{eq:lev_estimate}\,;
		\vspace{1mm}
		\STATE \emph{Step 4}. Compute the sampling distribution $\hat{\pi}_{m,p}(i)$ for $i=1,\ldots,m-p$ as in \cref{eq:sampling_distribution}\,;
		\vspace{1mm}
		\STATE \emph{Step 5}. Set $s$ as in \cref{eq:lev_beta_sample} by replacing $d$ with $p$, $\delta = \delta_{0}/p$, and $\beta$ with the bound given in \cref{cor:lev_beta}\,; 
		\vspace{1mm}
		\STATE \emph{Step 6}. Form the $s \times m$ sampling matrix $\bSS$ by randomly choosing $s$ rows of the corresponding identity matrix according to the probability distribution found in Step $4$, with replacement, and rescaling them with the factor \cref{rescaling-factor}\,;
		\vspace{1mm}
		\STATE \emph{Step 7}. Construct the sampled data matrix $\hat{\XX}_{m,p} = \bSS \XX_{m,p}$ and response vector $\hat{\yy}_{m,p} = \bSS \yy_{m,p}$\,; 
		\vspace{1mm}
		\STATE \emph{Step 8}. Solve the associated reduced OLS problem to estimate the parameters $\hat{\bm{\phi}}_{m,p}$ and residuals $\hat{\res}_{m,p}$ as in \cref{eq:phi_hat,eq:res_hat}, respectively\,;
		\vspace{1mm}
		\ENDWHILE
		\vspace{1mm}
		\STATE \emph{Step 9}. Estimate $p^*$ as the largest $p$ such that $ |\hat{\tau}_{p}| \geq {1.96}/{\sqrt{s}} $\,;
		\vspace{1mm}
		\STATE \textbf{Output:} Estimated order $p^*$ and parameters $ \hat{\bm{\phi}}_{n-\bar{p}+p^*,p^*} $.
	\end{algorithmic}
	\label{alg:leverage}
\end{algorithm}

\begin{remark}
	For the overall failure probability, recall that in order to get an accumulative success probability of $1 - \delta_{0}$ for $\bar{p}$ iterations, the per-iteration failure probability is set as $\delta  = 1 - \sqrt[\bar{p}]{1 - \delta_{0}} \in \Omega(\delta_{0} / \bar{p} )$. However, since this dependence manifest itself only logarithmically, it is of negligible consequence in overall complexity.
\end{remark}

The quality of the fitted model by the \texttt{LSAR} algorithm depends on two crucial ingredients, the order of the underlying \texttt{AR} model as well the accuracy of the estimated parameters. The latter is guaranteed by \cref{thm:drma}. For the former,  \cref{thm:quality_assurance} shows that for small enough $ \varepsilon $, the \texttt{LSAR} algorithm can estimate the same model order as that using the full data matrix. 

Let $\tau_{p}$ and $\hat{\tau}_{p}$ be the PACF values estimated using the CMLE of parameter vectors based on the full and sampled data matrices, $ \bm{\phi}_{n,p-1} $ and $ \hat{\bm{\phi}}_{n,p-1} $, respectively.

\begin{theorem}[\texttt{LSAR} Model-order Estimation]
	\label{thm:quality_assurance}
	Consider a causal $\mathtt{AR(p^*)}$ model and let $ 0 < \varepsilon,\delta < 1 $. For sampling with fully-approximate leverage scores using a sample size $ s $ as in \cref{eq:lev_beta_sample} with $ d = p^* $ and $ \beta $ as in \cref{cor:lev_beta} with $ p = p^*$, we have with probability at least $ 1-\delta $,
	\begin{subequations}
		\begin{align}
		\medskip |\hat{\tau}_{p}| & \geq |\tau_{p}| - c_{1} \sqrt{\varepsilon}, \hspace{14mm} \quad \mbox{for}\ p = p^*, \label{eq:tau_1}\\
		\medskip |\hat{\tau}_{p}| & \leq |\tau_{p}| +  c_{2} \sqrt{(p-1) \varepsilon}, \quad \mbox{for}\ p > p^*, \label{eq:tau_2}
		\end{align}
	\end{subequations}
	where $ c_{1} $ and $ c_{2} $ are bounded positive constants depending on a given realization of the model. 
\end{theorem}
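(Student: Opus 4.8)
The plan is to reduce both inequalities to a perturbation bound on the order-$(p-1)$ CMLE coefficient vectors, which is precisely what \cref{thm:drma} supplies. First I would fix the representation of the PACF estimates. By \cref{rem:pacf}, $\tau_p$ and $\hat{\tau}_p$ are the sample correlations between the forward and backward residuals of the order-$(p-1)$ regressions driven by $\bm{\phi}_{n,p-1}$ and $\hat{\bm{\phi}}_{n,p-1}$, respectively; equivalently, via the Durbin--Levinson identity, $\tau_p^2 = 1 - \vnorm{\res_{n,p}}^2/\vnorm{\res_{n,p-1}}^2$ and likewise for $\hat{\tau}_p^2$ in terms of the approximate residuals. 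I would keep both forms available: the correlation form makes the Lipschitz dependence on the coefficient vector transparent, while the variance-reduction form is the convenient one for the $p>p^*$ case.

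Next I would invoke \cref{thm:drma} with the fully-approximate leverage scores. Since the sample size $s$ in \cref{alg:leverage} is set via \cref{eq:lev_beta_sample} with $\beta$ from \cref{cor:lev_beta}, the fully-approximate scores satisfy \cref{eq:lev_beta}, so \cref{thm:drma} applies and yields, on an event of probability at least $1-\delta$, both the residual bound \cref{EquBoundSSE} and the coefficient bound \cref{EquRelDiffError}, namely $\vnorm{\bm{\phi}_{n,p-1}-\hat{\bm{\phi}}_{n,p-1}}\le \sqrt{\varepsilon}\,\eta_{n,p-1}\vnorm{\bm{\phi}_{n,p-1}}$. A routine computation then shows the correlation functional is Lipschitz in the coefficient vector on the relevant domain, with constant determined by $\kappa(\XX_{n,p})$ and the residual norms of the given realization, so that $|\tau_p-\hat{\tau}_p|$ is controlled by the right-hand side of \cref{EquRelDiffError}.

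For $p=p^*$ I would apply the reverse triangle inequality $|\hat{\tau}_{p^*}|\ge|\tau_{p^*}|-|\tau_{p^*}-\hat{\tau}_{p^*}|$ and absorb all realization-dependent, $p^*$-fixed quantities (the Lipschitz constant, $\eta_{n,p^*-1}$, $\vnorm{\bm{\phi}_{n,p^*-1}}$, and any $p^*$-dependent factor) into the single constant $c_1$, giving \cref{eq:tau_1}. For $p>p^*$ the true PACF is (near) zero, so a first-order perturbation is wasteful; instead I would work with the squared quantity in the variance-reduction form, sandwiching $\vnorm{\hat{\res}_{n,p}}^2/\vnorm{\hat{\res}_{n,p-1}}^2$ between $(1\pm\varepsilon)$-type multiples of the exact ratio using \cref{EquBoundSSE} together with the elementary OLS-optimality bound $\vnorm{\hat{\res}_{n,p}}\ge\vnorm{\res_{n,p}}$. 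This produces $\hat{\tau}_p^2\le\tau_p^2+(\text{error})$, after which the elementary step $\sqrt{a+b}\le\sqrt{a}+\sqrt{b}$ converts the error into the additive $\sqrt{(p-1)\varepsilon}$ term of \cref{eq:tau_2}. Finally I would discharge the probability bookkeeping by a union bound over the recursion steps, absorbing the per-step failure probability into $\delta$ as in \cref{alg:leverage}.

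The hardest part will be establishing the precise $(p-1)$ scaling inside the square root for $p>p^*$, rather than the naive $(p-1)^2$ one would obtain by propagating the relative leverage-score error of \cref{thm:main} term by term. The delicate point is that the fully-approximate residual $\hat{\res}_{n,p-1}$ inherits error from all $p-1$ preceding recursive sampling steps, and one must show these errors enter the \emph{squared} PACF additively, giving $(p-1)\varepsilon$ before the square root; controlling the vanishing denominator of the correlation for $p>p^*$, where $\tau_p\to 0$, is the secondary subtlety that forces the squared-quantity argument in place of a direct first-order estimate.
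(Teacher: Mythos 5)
Your overall architecture (invoke \cref{thm:drma}, reverse triangle inequality at $p=p^*$, separate treatment of $p>p^*$) matches the paper's, and your $p=p^*$ step is essentially workable, though the paper gets it more directly: it uses the standard fact that the PACF at lag $p^*$ equals the last component of $\bm{\phi}_{n,p^*}$, so the componentwise consequence of \cref{EquRelDiffError} plus the reverse triangle inequality immediately gives \cref{eq:tau_1} with $c_1 = \eta_{n,p^*}\vnorm{\bm{\phi}_{n,p^*}}$; no separate Lipschitz analysis of a correlation functional is needed.

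For $p>p^*$ there is a genuine gap. You propose to pass to the Durbin--Levinson variance-reduction identity $\tau_p^2 = 1-\vnorm{\res_{n,p}}^2/\vnorm{\res_{n,p-1}}^2$ and sandwich the sampled ratio. But that identity is exact only for the Yule--Walker/Durbin--Levinson estimates, not for the conditional-least-squares quantities the theorem actually concerns: here $\hat{\tau}_p$ is \emph{defined} as the sample-autocovariance functional of $\hat{\bm{\phi}}_{n,p-1}$, with numerator $\gamma(p)-\sum_{k=1}^{p-1}\hat{\bm{\phi}}_{n,p-1}(k)\gamma(p-k)$ and denominator $\vnorm{\hat{\res}_{n,p-1}}^2/n$, so your argument bounds a different quantity, and you would still owe a bound on the discrepancy between the two representations, which is not controlled by $\varepsilon$. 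Relatedly, you have misdiagnosed the source of the $(p-1)$ factor: it is not an accumulation of errors over the $p-1$ recursive sampling levels (that accumulation is already priced into the misestimation factor $\beta$ of \cref{cor:lev_beta} and hence into the sample size). The paper obtains it in one line: expand $\hat{\tau}_p$ around $\tau_p$ in the autocovariance formula, bound $|\gamma(p-k)|\le\gamma(0)$, and apply Cauchy--Schwarz,
\begin{align*}
\sum_{k=1}^{p-1}\left|\bm{\phi}_{n,p-1}(k)-\hat{\bm{\phi}}_{n,p-1}(k)\right| \le \sqrt{p-1}\,\vnorm{\bm{\phi}_{n,p-1}-\hat{\bm{\phi}}_{n,p-1}},
\end{align*}
followed by a single application of \cref{EquRelDiffError} at order $p-1$; the denominator is then handled by noting that $\vnorm{\res_{n,p-1}}^2/n$ is a consistent estimate of $\sigma_W^2$ and hence bounded away from zero with high probability. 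The ``vanishing denominator'' you worry about is this white-noise variance, not $\tau_p$ itself, so no squared-quantity argument is required.
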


\cref{thm:quality_assurance} implies that, when $ |\tau_{p^*}| \geq 1.96/\sqrt{n} $ and $ |\tau_{p}| \leq 1.96/\sqrt{n}$ for $p > p^* $,  with high probability, we are guaranteed to have $ |\hat{\tau}_{p^*}| \geq 1.96/\sqrt{n} - \mathcal{O}(\sqrt{\epsilon})$ and $ |\hat{\tau}_{p}| \leq 1.96/\sqrt{n} + \mathcal{O}(\sqrt{\epsilon})$ for $p > p^* $, respectively. In practice, we can consider a larger bandwidth of size $ 2 \times 1.96/\sqrt{s} $; see the experiments of \cref{Sec:EmpiricalResults}.

\cref{thm:Time Complexity} gives the overall running time of the \texttt{LSAR} algorithm.
\begin{theorem}[\texttt{LSAR} Computational Complexity]
	\label{thm:Time Complexity}
	The worst case time complexity of the \texttt{LSAR} algorithm for an input $\mathtt{AR(p^*)}$ time series data is $\mathcal{O} \left( np^* + p^{*^{4}} \log p^* /\varepsilon^{2} \right)$, with probability at least $1-\delta_{0}$ ($0< \delta_{0} < 1$) and the $p\th$ iteration of the algorithm has $\delta = \delta_{0}/p$, which appears in the log for each sample size.
\end{theorem}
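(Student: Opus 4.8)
The plan is to bound the running time of \cref{alg:leverage} by analysing the cost of a single pass through the \textbf{while} loop and then summing over the loop iterations. Writing $p$ for the loop counter and $m = n-\bar p + p \le n$ for the corresponding row dimension, I would first note that, since we analyse the algorithm for an input $\mathtt{AR}(p^*)$ series with a threshold $\bar p \in \bigO{p^*}$, there are $\bigO{p^*}$ iterations, the $p\th$ of which operates on the data matrix $\XX_{m,p}$. The total complexity is therefore $\sum_{p} \big(C_n(p) + C_s(p)\big)$, where $C_n(p)$ collects the work that depends on $n$ (Steps~2--4 and the residual evaluation in Step~8) and $C_s(p)$ collects the sampling and reduced-OLS work (Steps~5--8). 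The two target terms $np^*$ and $p^{*^{4}}\log p^*/\varepsilon^2$ will come, respectively, from $\sum_p C_n(p)$ and $\sum_p C_s(p)$.

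For $C_n(p)$, the key observation is that \cref{thm:TheRec4LevScr}, and its approximate counterpart \cref{eq:lev_estimate}, expresses the order-$p$ leverage scores through those of order $p-1$ together with the residual vector $\hat{\res}_{m-1,p-1}$ already produced in the previous iteration. Hence Steps~3--4 amount to $\bigO{m}$ elementwise updates of vectors of length $m-p$, and the PACF value $\hat\tau_p$ needed in Step~2 is simply the last coordinate of $\hat{\bm\phi}_{m,p}$, costing $\bigO{1}$ once the reduced solve is available. The only remaining $n$-dependent piece is the evaluation of $\hat{\res}_{m,p} = \yy_{m,p} - \XX_{m,p}\hat{\bm\phi}_{m,p}$ in \cref{eq:res_hat}; here I would invoke the Toeplitz/Hankel structure of $\XX_{m,p}$ to carry out this matrix--vector product within a linear budget, so that $C_n(p) = \bigO{n}$ and $\sum_p C_n(p) = \bigO{np^*}$.

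For $C_s(p)$, I would substitute the misestimation factor $\beta = 1-\bigO{p\sqrt\varepsilon}$ from \cref{cor:lev_beta} and the per-iteration confidence $\delta = \delta_0/p$ into the sample size \cref{eq:lev_beta_sample}. For $\varepsilon$ small enough that $p^*\sqrt\varepsilon$ is bounded (so that $1/\beta = \bigO{1}$ for every $p \le p^*$), this gives $s = \bigO{p\log(p^2/\delta_0)/\varepsilon^2} = \bigO{p\log p/\varepsilon^2}$, the crucial point being that $\delta$ enters only inside the logarithm. Forming $\hat{\XX}_{m,p}$ and $\hat{\yy}_{m,p}$ costs $\bigO{sp}$ and solving the reduced normal equations costs $\bigO{sp^2 + p^3} = \bigO{p^3\log p/\varepsilon^2}$; summing, $\sum_{p} p^3\log p/\varepsilon^2 = \bigO{p^{*^{4}}\log p^*/\varepsilon^2}$ over the $\bigO{p^*}$ iterations. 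For the probability guarantee, each iteration inherits the failure probability $\delta$ from \cref{thm:drma}, and combining the $\bar p$ fresh (independent) draws, e.g.\ via the choice described after \cref{alg:leverage}, yields overall success $1-\delta_0$; because $\delta$ appears only logarithmically, the precise combination does not affect the stated complexity.

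The step I expect to be the main obstacle is the linear bound on the residual evaluation inside $C_n(p)$: a na\"{i}ve dense product $\XX_{m,p}\hat{\bm\phi}_{m,p}$ costs $\bigO{mp}$, which would inflate the first term to $\bigO{np^{*^{2}}}$ and destroy the claimed bound precisely in the big-data regime $n \gg p^*$. Establishing the bound therefore hinges on genuinely exploiting the structured form of $\XX_{m,p}$ and on threading the residual vector of one iteration into the recursion \cref{eq:lev_estimate} of the next, so that no $\bigO{mp}$ dense work is ever repeated. A secondary point to verify carefully is that choosing $\varepsilon$ small enough to keep $\beta$ bounded away from $0$ is compatible with the hypotheses of \cref{thm:drma,cor:lev_beta} simultaneously for all orders up to $p^*$, ensuring the $1/\beta$ factor never resurfaces in the final count.
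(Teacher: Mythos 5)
Your proposal follows essentially the same route as the paper's proof: each of the $\mathcal{O}(p^*)$ iterations costs $\mathcal{O}(n)$ for the recursive leverage-score/residual update plus $\mathcal{O}(sp^2)=\mathcal{O}(p^3\log p/\varepsilon^2)$ for the reduced OLS with $s$ from \cref{eq:lev_beta_sample}, the misestimation factor kept bounded by taking $\varepsilon$ small enough (the paper takes $\varepsilon\le p^{-2}$), and $\delta=\delta_0/p$ entering only inside the logarithm. The one point where you are more explicit than the paper is the cost of evaluating the full residual vector $\hat{\res}_{m,p}$, which the paper simply absorbs into the $\mathcal{O}(n)$ per-iteration term without comment; your concern is legitimate (a dense product costs $\mathcal{O}(mp)$ and even an FFT-based structured product costs $\mathcal{O}(n\log p)$ rather than $\mathcal{O}(n)$), but it applies equally to the paper's own accounting and does not constitute a divergence in approach.
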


\begin{remark}\label{rem:Time Complexity}
	We believe that the restriction on $ \varepsilon $ given by \cref{thm:Time Complexity} is highly pessimistic and merely a by-product of our proof techniques here. As evidenced by numerical experiments, e.g., \cref{fig:Tthe4_ver}, we conjecture that a  more sensible bound is $0 < \varepsilon \leq (\log p^*)^{-2}$; see also the discussion in the last paragraph of \cref{Sec:Background} and \cref{rem:Conj}. In fact, even the tight bounds on the sample size for RandNLA routines rarely manifest themselves in practice (e.g., \cite{roszas,mahoney2011randomized,mahoney2016lecture}). Guided by these observations, in our numerical experiments of \cref{Sec:EmpiricalResults}, we set our sample sizes at factions of the total data, e.g., $ s = 0.001 n $, even for small values of $ p^* $.
\end{remark}

%-----------------------
% Empirical Results
%-----------------------

%----------------------------
\section{Empirical Results} 
\label{Sec:EmpiricalResults}
%----------------------------

In this section, we present the performance of the \texttt{LSAR} algorithm on several synthetic as well as real big time series data. The numerical experiments are run in MATLAB R2018b on a 64-bit windows server with dual processor each at 2.20GHz with 128 GB installed RAM. 

The numerical results reveal the efficiency of the \texttt{LSAR} algorithm, as compared with the classical alternative using the entire data. More precisely, it is illustrated that by sampling only $0.1\%$ of the data, not only are the approximation errors kept significantly small, but also the underlying computational times are considerably less than the corresponding exact~algorithms. 

We present our numerical analysis in three subsequent sections. In \cref{Sec:EmpRes-LS}, we report the computational times as well as the quality of leverage score approximations \cref{eq:lev_estimate} on three synthetically generated data by running Steps 0-8 of the \texttt{LSAR} algorithm. Analogously, \cref{Sec:EmpRes-PACF} shows similar results for estimating PACF (i.e., the output of Step 2 in the \texttt{LSAR} algorithm). Finally, \cref{Sec:EmpRes-RealData} displays the performance of the \texttt{LSAR} algorithm on a real big time series data. It should be noted that all computational times reported in this section are in ``seconds''.

%---------------------------------------------------
\subsection{Synthetic Data: Verification of Theory}
\label{Sec:EmpRes-LS}
%---------------------------------------------------

We generate synthetic large-scale time series data with two million realizations from the models $\mathtt{AR(20)}$, $\mathtt{AR(100)}$, and $\mathtt{AR(200)}$. For each dataset, the leverage scores over a range of lag values (i.e., the variable $h$ in the \texttt{LSAR} algorithm) are calculated once by using the exact formula as given in \cref{def:notation}, and another time by estimating the fully-approximate leverage scores as defined in \cref{eq:lev_estimate}. The latter is computed by running Steps 0-8 of the \texttt{LSAR} algorithm with $s = 0.001n = 2000$. 

\cref{fig:LS_rel_err} displays and compares the quality and run time between the fast sampled randomized Hadamard transform (SRHT) approximation technique developed by Drineas et al. \cite{drineas2012fast} and \cref{eq:lev_full}. At each lag $p$, the maximum pointwise relative error (\texttt{MPRE}, for short) is defined by 
\begin{align}\label{equ:MaxRelErrors}
	\max_{1\leq i\leq n-p}\left\{\frac{|\hat{\ell}_{n,p}(i) - \ell_{n,p}(i)|}{\ell_{n,p}(i)}\right\}.
\end{align}

As displayed in \cref{fig:MaxRelErrAR(20),fig:MaxRelErrAR(100),fig:MaxRelErrAR(200)}, while the \texttt{MPRE} curves have sharp increase at the beginning and then quickly converge to an upper limit around $0.1670$ for fully-approximate leverage scores, the output of SRHT seems to converge around $3$. This demonstrates the high-quality of the fully-approximate leverage scores using only $0.1\%$ of the rows of the data matrix. More interestingly, \cref{fig:LSTimeAR(20),fig:LSTimeAR(100),fig:LSTimeAR(200)} demonstrate the computational efficiency of the fully-approximate leverage scores. In light of the inferior performance of SRHT, both in terms of the quality of approximation and also run time, in the subsequent experiments, we will no longer consider SRHT approximation alternative.
\begin{figure}[h!]
	\centering
	\begin{subfigure}{.3\textwidth}
		\includegraphics[width=\textwidth]{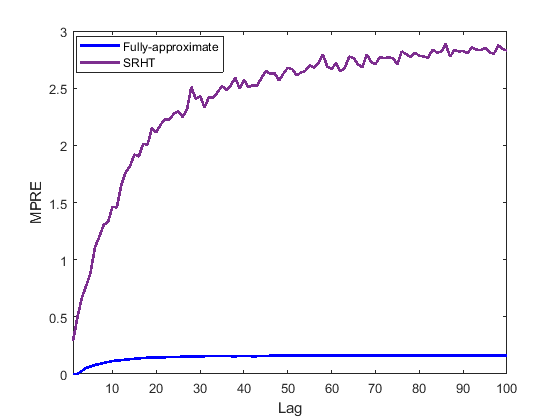}
		\caption{$\mathtt{AR(20)}$ }
		\label{fig:MaxRelErrAR(20)}
	\end{subfigure}
	%%%%%%%%%%%%%%
	\begin{subfigure}{.3\textwidth}
		\includegraphics[width=\textwidth]{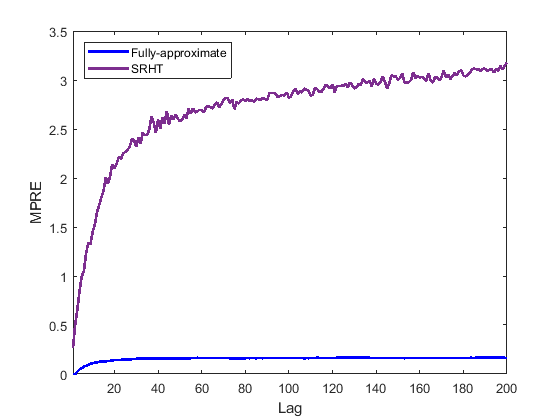}
		\caption{$\mathtt{AR(100)}$ }
		\label{fig:MaxRelErrAR(100)}
	\end{subfigure}
	%%%%%%%%%%%%%%
	\begin{subfigure}{.3\textwidth}
		\includegraphics[width=\textwidth]{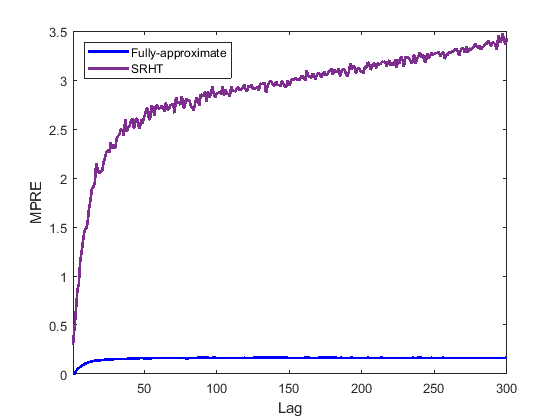}
		\caption{$\mathtt{AR(200)}$ }
		\label{fig:MaxRelErrAR(200)}
	\end{subfigure}
	%%%%%%%%%%%%%%
	\begin{subfigure}{.3\textwidth}
		\includegraphics[width=\textwidth]{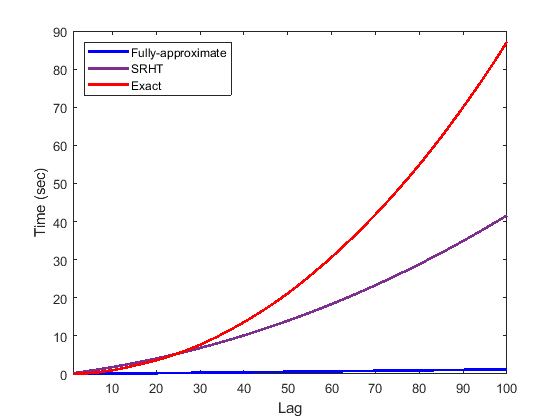}
		\caption{$\mathtt{AR(20)}$ }
		\label{fig:LSTimeAR(20)}
	\end{subfigure}
	%%%%%%%%%%%%%%
	\begin{subfigure}{.3\textwidth}
		\includegraphics[width=\textwidth]{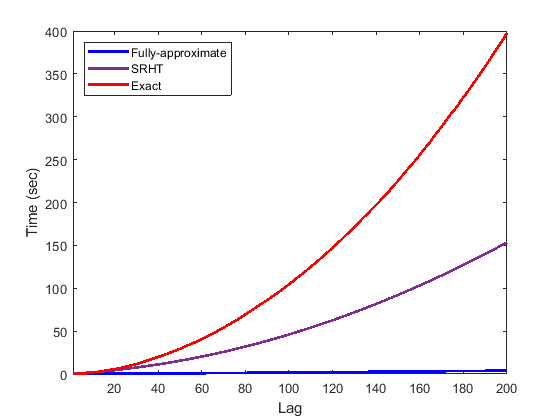}
		\caption{$\mathtt{AR(100)}$}
		\label{fig:LSTimeAR(100)}
	\end{subfigure}
	%%%%%%%%%%%%%%
	\begin{subfigure}{.3\textwidth}
		\includegraphics[width=\textwidth]{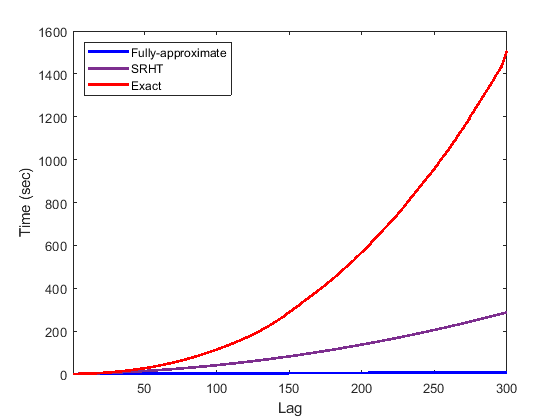}
		\caption{$\mathtt{AR(200)}$ }
		\label{fig:LSTimeAR(200)}
	\end{subfigure}	
	%%%%%%%%%%%%%%
	\caption{Figures (a), (b) and (c) correspond to $\mathtt{AR(20)}$, $\mathtt{AR(100)}$, and $\mathtt{AR(200)}$ using  synthetic data, respectively, and display the \texttt{MPRE} \cref{equ:MaxRelErrors} versus the lag values $h$ for fully-approximate and the SRHT method. Similarly, Figures (d), (e), and (f) represent the computational time spent, in seconds, to compute the fully-approximate leverage scores (in blue), the SRHT approximation (in magenta), and the exact leverage scores (in red) on $\mathtt{AR(20)}$, $\mathtt{AR(100)}$, and $\mathtt{AR(200)}$ using synthetic data, respectively.}
	\label{fig:LS_rel_err}
\end{figure}

\cref{fig:MaxRelErrAR(20),fig:MaxRelErrAR(100),fig:MaxRelErrAR(200)} suggest that the upper bound provided in \cref{thm:main} might be improved by replacing $p-1$ with an appropriate scaled function of $\log(p)$. This observation is numerically investigated in \cref{fig:Tthe4_ver}. In this figure (which in logarithmic scale), the \texttt{MPRE} \cref{equ:MaxRelErrors} (in blue) is compared with the right hand side (RHS) of \cref{thm:main} (in red) as well as the RHS of \cref{thm:main} with $p-1$ replaced with a scaled $\log(p)$ (in green). These results are in strong agreement with \cref{rem:Time Complexity,rem:Conj}. Indeed, improving the dependence of the RHS of \cref{thm:main} on $ p $ is an interesting problem, which we intend to address in future works.
\begin{figure}[h!]
	\centering
	\begin{subfigure}{.3\textwidth}
		\includegraphics[width=\textwidth]{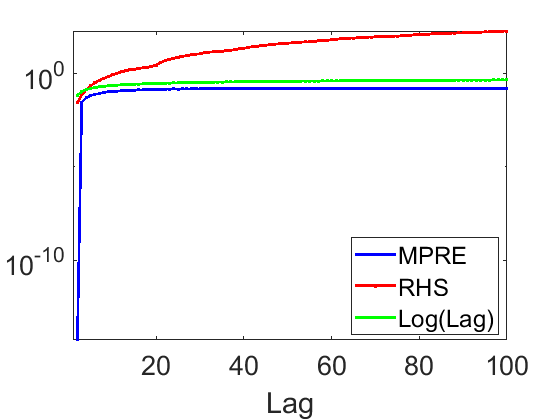}
		\caption{$\mathtt{AR(20)}$}
	\end{subfigure}
	%%%%%%%%%%%%%%
	\begin{subfigure}{.3\textwidth}
		\includegraphics[width=\textwidth]{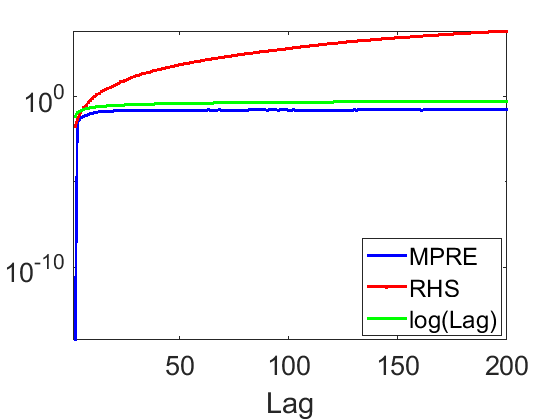}
		\caption{$\mathtt{AR(100)}$}
	\end{subfigure}
	%%%%%%%%%%%%%%
	\begin{subfigure}{.3\textwidth}
		\includegraphics[width=\textwidth]{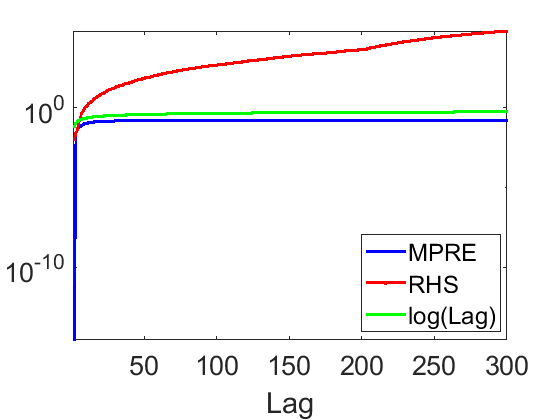}
		\caption{$\mathtt{AR(200)}$}
	\end{subfigure}
	\caption{Figures (a), (b) and (c) correspond to $\mathtt{AR(20)}$, $\mathtt{AR(100)}$, and $\mathtt{AR(200)}$ with synthetic data, respectively. Here, we display the \texttt{MPRE} \cref{equ:MaxRelErrors} (in blue), the RHS of \cref{thm:main} (in red) and RHS of \cref{thm:main} with $p-1$ replaced with a scaled $\log(p)$ (in green).}
	\label{fig:Tthe4_ver}
\end{figure}

\cref{fig:impact_n_s} exhibits the impact of the data size $n$ and the sample size $s$ on \texttt{MPRE} for the $\mathtt{AR(100)}$ synthetic data. More precisely, this figure demonstrates \texttt{MPRE} for values of $n \in \{500K, 1M, 2M\}$ (where, $K$ and $M$ stand for ``thousand'' and ``million'', respectively) and $s \in \{0.001n, 0.01n,$ $0.1n\}$. Clearly, for each fixed value of $n$, by increasing $s$, \texttt{MPRE} decreases. Furthermore, for each fixed ratio of $s/n$, by increasing $n$, s increases and accordingly \texttt{MPRE} decreases. It is clear that more data amounts to smaller approximation errors.
\begin{figure}[h!]
	\centering
	\begin{subfigure}{.3\textwidth}
		\includegraphics[width=\textwidth]{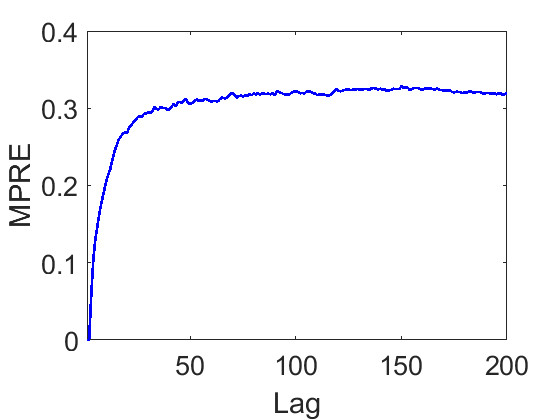}
		\caption{$n=500K, s=0.001 n$}
	\end{subfigure}
	%%%%%%%%%%%%%%
	\begin{subfigure}{.3\textwidth}
		\includegraphics[width=\textwidth]{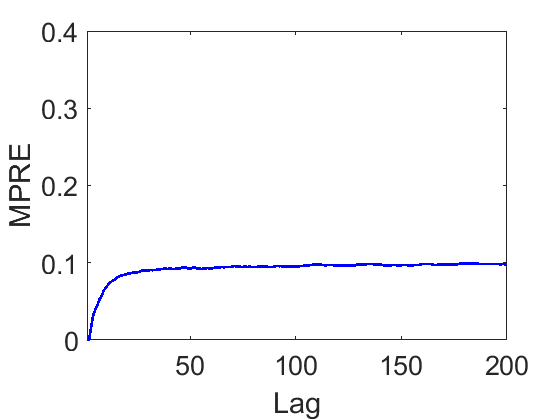}
		\caption{$n=500K, s=0.01 n$ }
	\end{subfigure}
	%%%%%%%%%%%%%%
	\begin{subfigure}{.3\textwidth}
		\includegraphics[width=\textwidth]{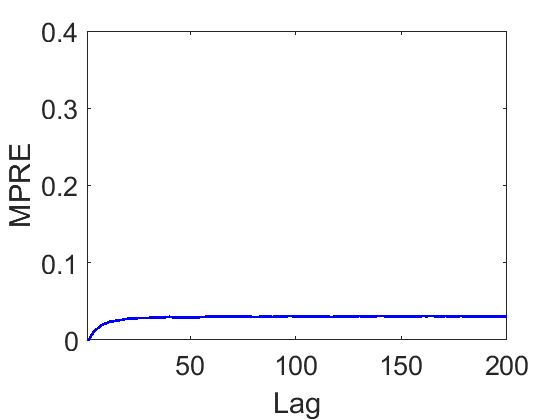}
		\caption{$n=500K, s=0.1 n$}
	\end{subfigure}
	%%%%%%%%%%%%%%
	\begin{subfigure}{.3\textwidth}
		\includegraphics[width=\textwidth]{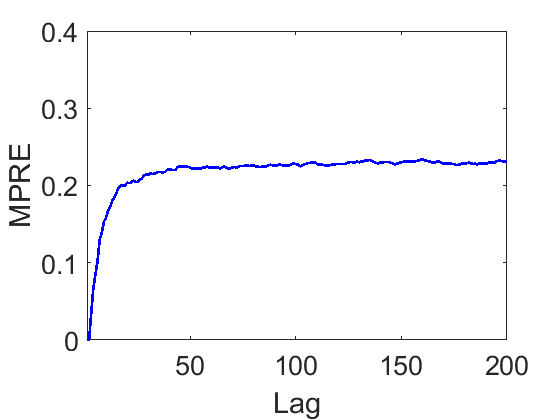}
		\caption{$n=1M, s=0.001 n$}
	\end{subfigure}
	%%%%%%%%%%%%%%
	\begin{subfigure}{.3\textwidth}
		\includegraphics[width=\textwidth]{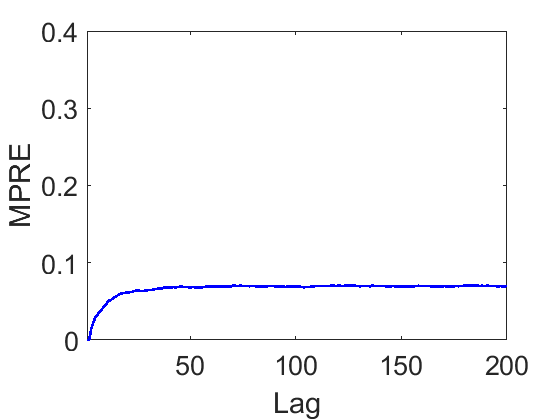}
		\caption{$n=1M, s=0.01 n$ }
	\end{subfigure}
	%%%%%%%%%%%%%%
	\begin{subfigure}{.3\textwidth}
		\includegraphics[width=\textwidth]{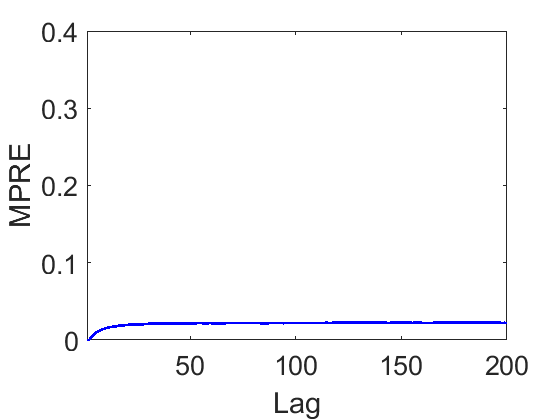}
		\caption{$n=1M, s=0.1 n$}
	\end{subfigure}
	%%%%%%%%%%%%%%
	\begin{subfigure}{.3\textwidth}
		\includegraphics[width=\textwidth]{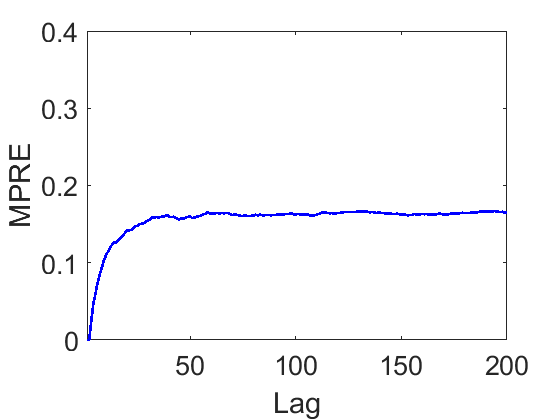}
		\caption{$n=2M, s=0.001 n$}
	\end{subfigure}
	%%%%%%%%%%%%%%
	\begin{subfigure}{.3\textwidth}
		\includegraphics[width=\textwidth]{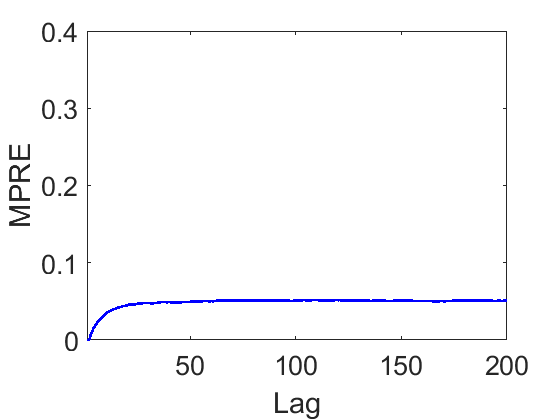}
		\caption{$n=2M, s=0.01 n$ }
	\end{subfigure}
	%%%%%%%%%%%%%%
	\begin{subfigure}{.3\textwidth}
		\includegraphics[width=\textwidth]{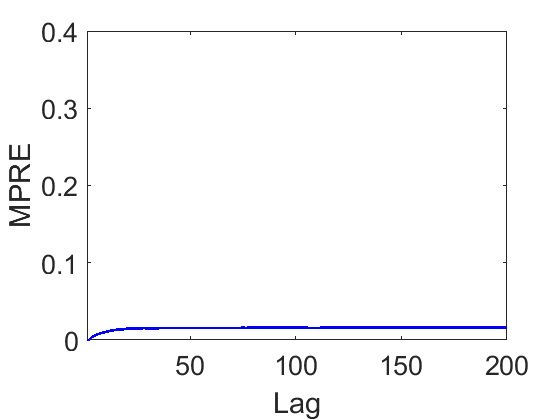}
		\caption{$n=2M, s=0.1 n$}
	\end{subfigure}
	%%%%%%%%%%%%%%
	\caption{The impact of the data size $n \in \{500K, 1M, 2M\}$ and the sample size $s \in \{0.001 n, 0.01 n, 0.1 n\}$ on \texttt{MPRE} for the $\mathtt{AR(100)}$ synthetic data.}
	\label{fig:impact_n_s}
\end{figure}

%-------------------------------------------------------------
\subsection{PACF: Computational Time and Estimation Accuracy} \label{Sec:EmpRes-PACF}
%-------------------------------------------------------------

In this section, using the same synthetic data as in \cref{Sec:EmpRes-LS}, we estimate PACF and fit an \texttt{AR} model. More precisely, for each dataset, PACF is estimated for a range of lag values $h$, once by solving the corresponding OLS problem with the full-data matrix (called, ``exact''), and another time by running the \texttt{LSAR} algorithm (\cref{alg:leverage}). 

The numerical results of these experiments for the three synthetic datasets are displayed in \cref{fig:PACF}. As explained in \cref{SecAR}, the most important application of a PACF plot is estimating the order $p$ by choosing the largest lag $h$ such that its corresponding PACF bar lies out of the $95\%$ zero-confidence boundary. It is readily seen that \cref{fig:PACF LSAR AR20,fig:PACF LSAR AR100,fig:PACF LSAR AR200} not only provide the correct estimate of the order $p$ for the generated synthetic data, but also are very close to the exact PACF plots in \cref{fig:PACF Exact AR20,fig:PACF Exact AR100,fig:PACF Exact AR200}. This is achieved all the while by merely sampling only $0.1\%$ of the rows of the data matrix (i.e., $s = 0.001n = 2000$). Subsequently, from \cref{fig:PACF time AR20,fig:PACF time AR100,fig:PACF time AR200},  one can observe a significant difference in the time required for computing PACF exactly as compared with obtaining a high-quality approximation using the \texttt{LSAR} algorithm.  
\begin{figure}[h!]
	\centering
	\begin{subfigure}{.3\textwidth}
		\includegraphics[width=\textwidth]{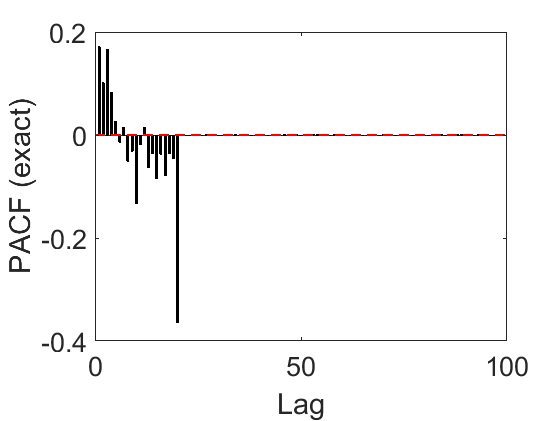}
		\caption{$\mathtt{AR(20)}$}
		\label{fig:PACF Exact AR20}
	\end{subfigure}
	%%%%%%%%%%%%%%
	\begin{subfigure}{.3\textwidth}
		\includegraphics[width=\textwidth]{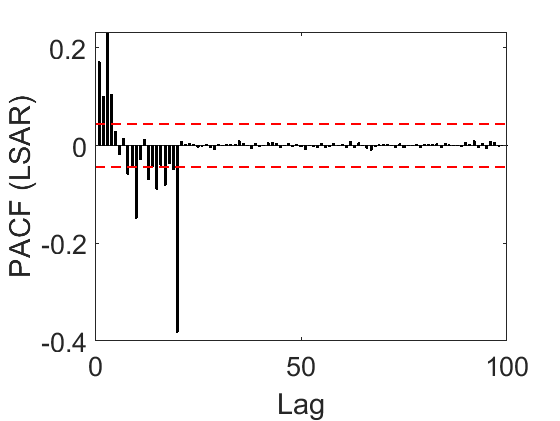}
		\caption{$\mathtt{AR(20)}$}
		\label{fig:PACF LSAR AR20}
	\end{subfigure}
	%%%%%%%%%%%%%%
	\begin{subfigure}{.3\textwidth}
		\includegraphics[width=\textwidth]{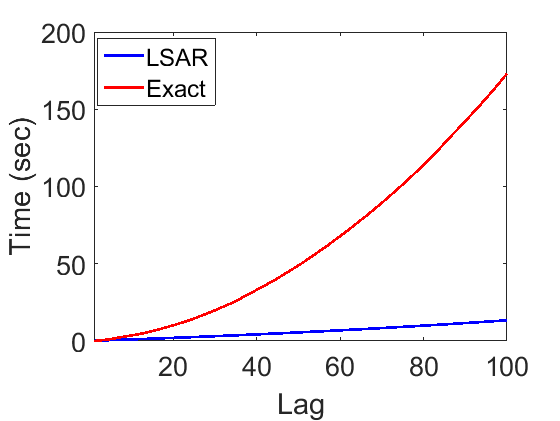}
		\caption{$\mathtt{AR(20)}$}
		\label{fig:PACF time AR20}
	\end{subfigure}
	%%%%%%%%%%%%%%
	\begin{subfigure}{.3\textwidth}
		\includegraphics[width=\textwidth]{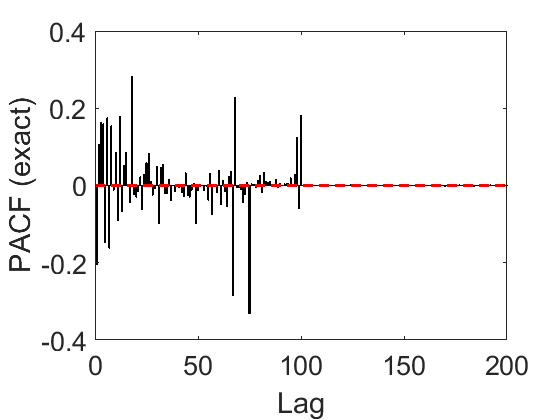}
		\caption{$\mathtt{AR(100)}$}
		\label{fig:PACF Exact AR100}
	\end{subfigure}
	%%%%%%%%%%%%%%
	\begin{subfigure}{.3\textwidth}
		\includegraphics[width=\textwidth]{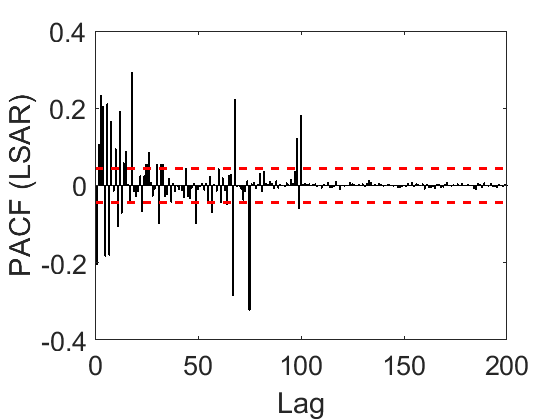}
		\caption{$\mathtt{AR(100)}$}
		\label{fig:PACF LSAR AR100}
	\end{subfigure}
	%%%%%%%%%%%%%%
	\begin{subfigure}{.3\textwidth}
		\includegraphics[width=\textwidth]{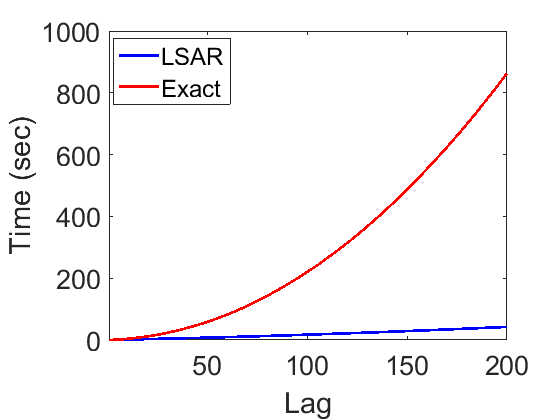}
		\caption{$\mathtt{AR(100)}$}
		\label{fig:PACF time AR100}
	\end{subfigure}
	%%%%%%%%%%%%%%
	\begin{subfigure}{.3\textwidth}
		\includegraphics[width=\textwidth]{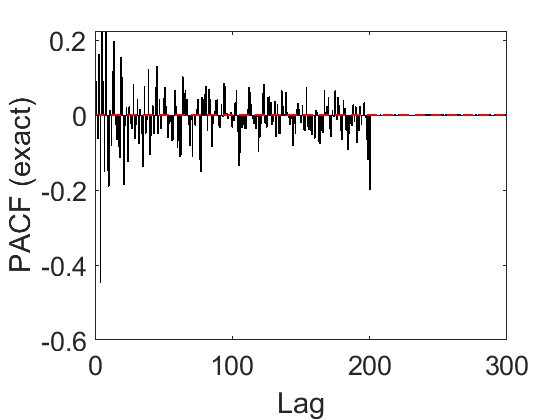}
		\caption{$\mathtt{AR(200)}$}
		\label{fig:PACF Exact AR200}
	\end{subfigure}
	%%%%%%%%%%%%%%
	\begin{subfigure}{.3\textwidth}
		\includegraphics[width=\textwidth]{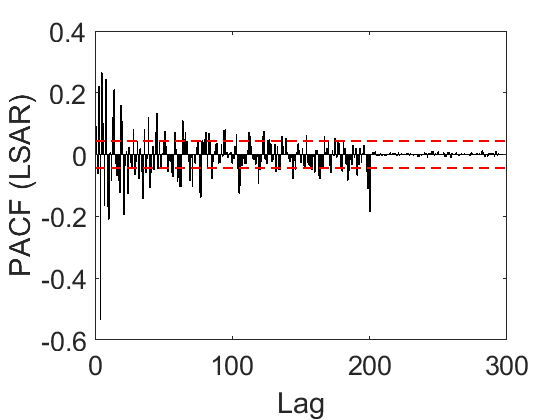}
		\caption{$\mathtt{AR(200)}$}
		\label{fig:PACF LSAR AR200}
	\end{subfigure}
	%%%%%%%%%%%%%%
	\begin{subfigure}{.3\textwidth}
		\includegraphics[width=\textwidth]{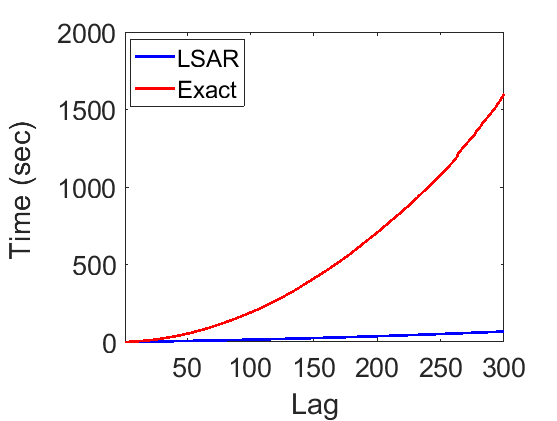}
		\caption{$\mathtt{AR(200)}$}
		\label{fig:PACF time AR200}
	\end{subfigure}
	%%%%%%%%%%%%%%
	\caption{Figures (a), (b) and (c) corresponding to the $\mathtt{AR(20)}$ synthetic data, display the exact PACF plot, the PACF plot generated by the \texttt{LSAR} algorithm, and the comparison between the computational time of (a) (in red) and (b) (in blue), respectively. Figures (d), (e) and (f) are similar for the $\mathtt{AR(100)}$ synthetic data; and Figures (g), (h) and (i) are similar for the $\mathtt{AR(200)}$ synthetic data.}
	\label{fig:PACF}
\end{figure}

\begin{remark} \label{rem:ls-time}
	Following \cref{rem:pacf}, finding PACF at each lag requires the solution to the corresponding OLS problem. Hence, to avoid duplication, the computational times of Steps 4-8 of the \texttt{LSAR} algorithm are excluded in \cref{fig:LS_rel_err}. Indeed, those computational times are considered in \cref{fig:PACF}. 
\end{remark}

To show the accuracy of maximum likelihood estimates generated by the \texttt{LSAR} algorithm, the estimates derived by the two scenarios of ``full-data matrix'' and ``reduced data matrix'' are relatively compared. For this purpose, following notation defined in \cref{Sec:Background,Sec:TheoreticalResults}, let $\bm{\phi}_{n,p}$ and $\hat{\bm{\phi}}^s_{n,p}$ denote the maximum likelihood estimates of parameters based on the full-data matrix (cf.\ \cref{eq:phi}) and reduced sampled data matrix with the sample size of $s$ (cf.\ \cref{eq:phi_hat}), respectively. Accordingly, we define the relative error of parameter estimates by
\begin{subequations}
	\begin{align}
	\label{eq:ratio_phi}
	\frac{||\hat{\bm{\phi}}^s_{n,p} - \bm{\phi}_{n,p}||}{||\bm{\phi}_{n,p}||}.
	\end{align}
	Analogously, let $\bm{r}_{n,p}$ and $\hat{\bm{r}}^s_{n,p}$ be the residuals of estimates based on the full-data matrix (cf.\ \cref{eq:res}) and reduced sampled data matrix with the sample size of $s$ (cf.\ \cref{eq:res_hat}), respectively. The ratio of two residual norms is given by
	\begin{align}
	\label{eq:ratio_res}
	\frac{\vnorm{\hat{\bm{r}}^s_{n,p}}}{\vnorm{\bm{r}_{n,p}}}.
	\end{align}
\end{subequations}

The two ratios \cref{eq:ratio_phi,eq:ratio_res} are calculated for a range of values of $s\in\{200, 300, \dots,$ $1000\}$ by computing the maximum likelihood estimates of the $\mathtt{AR(100)}$ synthetic data once with the full-data matrix and another time by running the \texttt{LSAR} algorithm. Also, the estimates are smoothed out by replicating the \texttt{LSAR} algorithm $1,000$ times and taking the average of all estimates. The outcome is displayed in \cref{fig:AR100_LS_SSE}. \cref{fig:AR100_relerr} displays the relative errors of parameter estimates \cref{eq:ratio_phi} versus the sample size $s$ and \cref{fig:AR100_relres} shows the ratio of residual norms \cref{eq:ratio_res} versus the sample size $s$. 
\begin{figure}[h!]
	\centering
	\begin{subfigure}{.3\textwidth}
		\includegraphics[width=\textwidth]{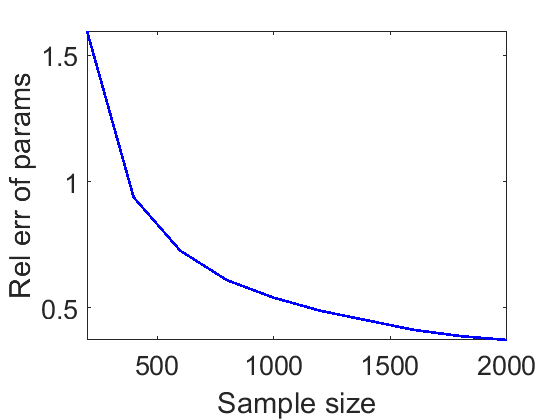}
		\caption{\cref{eq:ratio_phi} for the $\mathtt{AR(100)}$ data}
		\label{fig:AR100_relerr}
	\end{subfigure}\qquad
	\begin{subfigure}{.3\textwidth}
		\includegraphics[width=\textwidth]{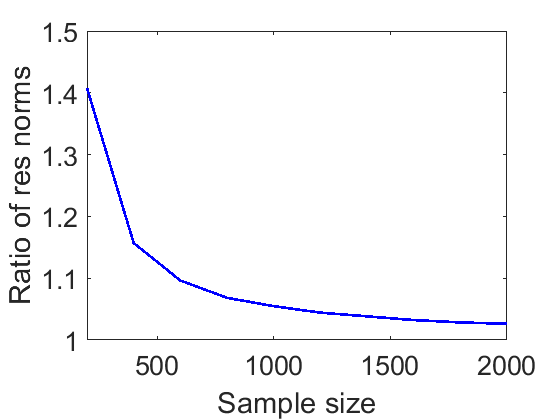}
		\caption{\cref{eq:ratio_res} for the $\mathtt{AR(100)}$ data}
		\label{fig:AR100_relres}
	\end{subfigure}
	\caption{Figures (a) and (b) display the relative error of parameter estimates \cref{eq:ratio_phi} and the ratio of residual norms \cref{eq:ratio_res} for the $\mathtt{AR(100)}$ synthetic data, respectively, both as a function of sample size $ s $.}
	\label{fig:AR100_LS_SSE}
\end{figure}

%---------------------------------------------------------
\subsection{Real-world Big Time Series: Gas Sensors Data} \label{Sec:EmpRes-RealData}
%---------------------------------------------------------

Huerta et al. \cite{Huerta2016OnlineHA} studied the accuracy of electronic nose measurements. They constructed a nose consisting of eight different metal-oxide sensors in addition to humidity and temperature sensors with a wireless communication channel to collect data. The nose monitored airflow for two years in a designated location, and data continuously collected with a rate of two observations per second. In this configuration, a standard energy band model for an $n-$type metal-oxide sensor was used to estimate the changes in air temperature and humidity. Based on their observations, humidity changes and correlated changes of humidity and temperature were the most significant statistical factors in variations of sensor conductivity. The model successfully used for gas discrimination with an R-squared close to $1$. 

The data is available in the UCI machine learning repository\footnote{\url{https://archive.ics.uci.edu/ml/datasets/Gas+sensors+for+home+activity+monitoring}, Accessed on 30 December 2019.}. In our experiment, we use the output of sensor number $8$ (column labeled R8 in the dataset) as real time series data with $n = 919,438$ observations. The original time series data is heteroscedastic. However, by taking the logarithm of the data and differencing in one lag, it becomes stationary and an $\mathtt{AR(16)}$ model seems to be a good fit to the transformed data. We run the \texttt{LSAR} algorithm with the initial input parameter $\bar{p} = 100$. Recalling that $\bar{p}$ ($p < \bar{p} \ll n$) is initially set large enough to estimate the order $p$ of the underlying \texttt{AR} model. Also, in all iterations of the \texttt{LSAR} algorithm, we set the sample size $s = 0.001 n$. For sake of fairness and completeness, all figures generated for synthetic data in \cref{Sec:EmpRes-LS,Sec:EmpRes-PACF}, are regenerated for the gas sensors data. 

\cref{fig:gas_LS_MPRE} shows (in logarithmic scale) the maximum pointwise relative error \cref{equ:MaxRelErrors} (in blue) along with the RHS of \cref{thm:main} (in red) as well as the RHS of \cref{thm:main} with $p-1$ replaced with a scaled $\log(p)$ (in green). The behavior of these three graphs are very similar to those ones on synthetic data discussed in \cref{Sec:EmpRes-LS}. Furthermore, \cref{fig:gas_LS_time} reveals analogous computational efficiency in finding the fully-approximate leverage scores comparing with the exact values for the gas sensors data.
\begin{figure}[h!]
	\centering
	\begin{subfigure}{.3\textwidth}
		\includegraphics[width=\textwidth]{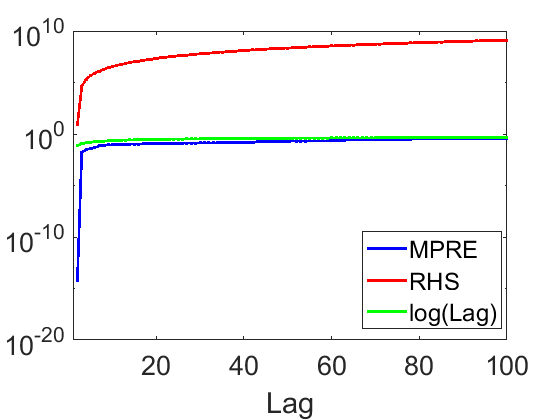}
		\caption{\texttt{MPRE}}
		\label{fig:gas_LS_MPRE}
	\end{subfigure}\qquad
	\begin{subfigure}{.3\textwidth}
		\includegraphics[width=\textwidth]{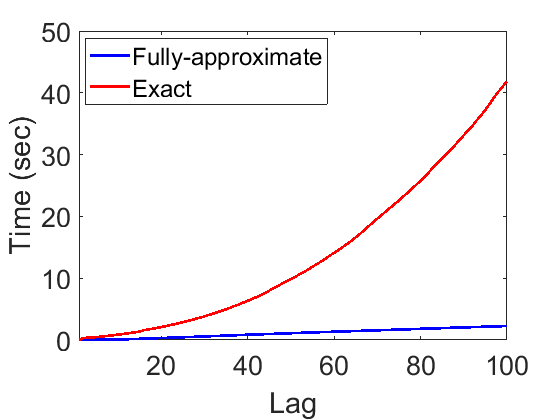}
		\caption{Computational time}
		\label{fig:gas_LS_time}
	\end{subfigure}
	
	\caption{Figure (a) displays the \texttt{MPRE} \cref{equ:MaxRelErrors} (in blue), the RHS of \cref{thm:main} (in red) and the RHS of \cref{thm:main} with $p-1$ replaced with a scaled $\log(p)$ (in green) for the real gas sensors data, Figure (b) shows the computational time spent, in seconds, to compute the fully-approximate (in blue) and exact (in red) leverage scores for the real gas sensors data.}
	\label{fig:gas_LS}
\end{figure}

\paragraph{Leverage score sampling versus uniform sampling.} In order to show the efficacy of the \texttt{LSAR} algorithm, we compare the performance of leverage score sampling with na\"{i}ve uniform sampling in estimating the order as well as parameters of an $\mathtt{AR}$ model for the gas sensor data. For the uniform sampling, we modify the \texttt{LSAR} algorithm slightly by removing Step 3 and replacing the uniform distribution $\hat{\pi}_{m,p}(i)=1/(m-p)$ for $i=1,\ldots,m-p$ in Step 4.

\cref{gas_PACF_Exact,gas_PACF_LSAR,gas_PACF_Uniform,gas_PACF_Time} demonstrate the PACF plot calculated exactly, the PACF plot approximated with the \texttt{LSAR} algorithm, the PACF plot approximated based on a uniform sampling, and a comparison between the computational times of these three PACF plots, respectively. Similar to \cref{Sec:EmpRes-PACF}, \cref{gas_PACF_Time} reveals that while \cref{gas_PACF_LSAR} can be generated much faster than \cref{gas_PACF_Exact}, they both suggest the same \texttt{AR} model for the gas sensors data. In addition, \cref{gas_PACF_Time,gas_PACF_Uniform} divulge that although the uniform sampling is slightly faster than the \texttt{LSAR} algorithm, the PACF plot generated by the former is very poor and far away from the exact plot given in \cref{gas_PACF_Exact}. While both \cref{gas_PACF_Exact,gas_PACF_LSAR} estimate an $\mathtt{AR(16)}$ model for the data, \cref{gas_PACF_Uniform} fails to make an appropriate estimate of the order. 
\begin{figure}[h!]
	\centering
	\begin{subfigure}{.4\textwidth}
		\includegraphics[width=\textwidth]{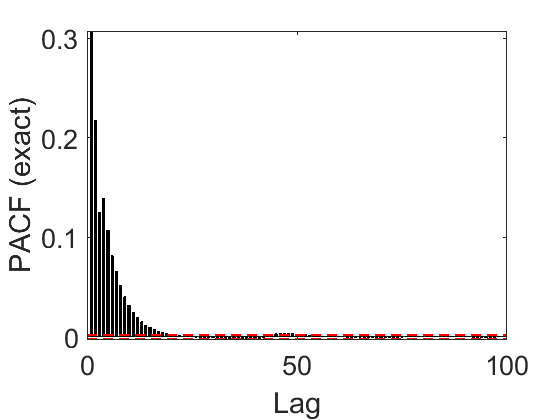}
		\caption{Exact}
		\label{gas_PACF_Exact}
	\end{subfigure}
	%%%%%%%%%%%%%%
	\begin{subfigure}{.4\textwidth}
		\includegraphics[width=\textwidth]{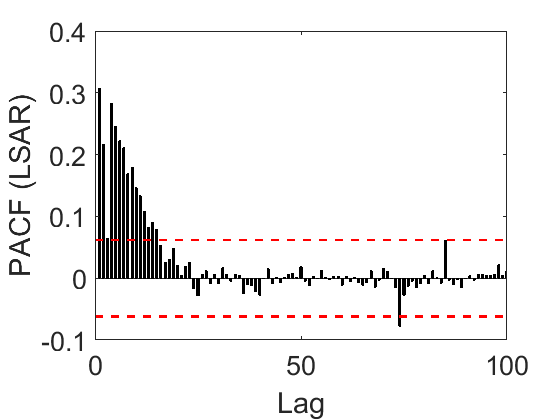}
		\caption{\texttt{LSAR}}
		\label{gas_PACF_LSAR}
	\end{subfigure}
	%%%%%%%%%%%%%%
	\begin{subfigure}{.4\textwidth}
		\includegraphics[width=\textwidth]{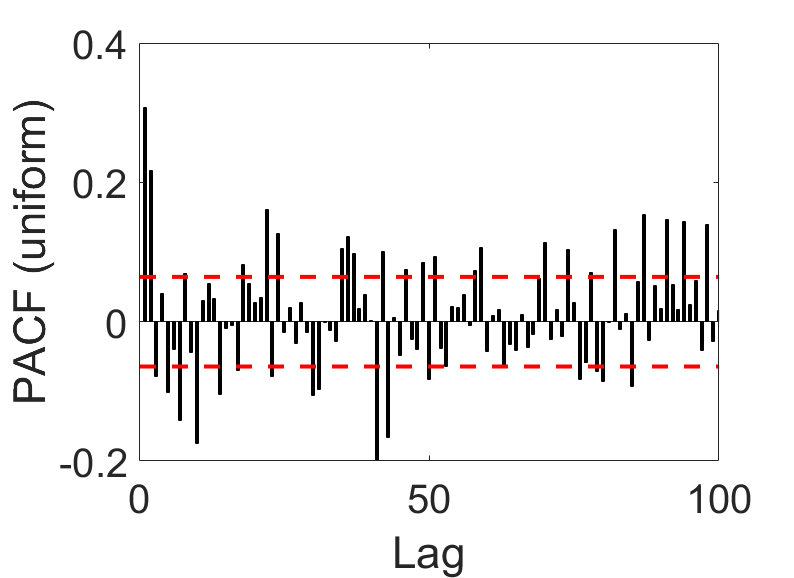}
		\caption{Uniform}
		\label{gas_PACF_Uniform}
	\end{subfigure}
	%%%%%%%%%%%%%%
	\begin{subfigure}{.4\textwidth}
		\includegraphics[width=\textwidth]{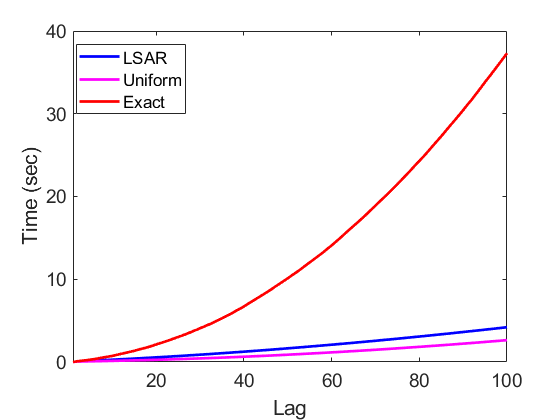}
		\caption{Computational time}
		\label{gas_PACF_Time}
	\end{subfigure}
	%%%%%%%%%%%%%%
	\caption{Figures (a), (b), (c) and (d) display the exact PACF plot, the PACF plot generated by the \texttt{LSAR} algorithm, the PACF plot approximated based on a uniform sampling, and the comparison between the computational time of (a) (in red), (b) (in blue), and (c) (in pink), respectively.} 
	\label{gas_PACF}
\end{figure}

Finally, we compare the performance of leverage score sampling with na\"{i}ve uniform sampling in estimating the parameters of an $\mathtt{AR(16)}$ model for the gas sensor data. \cref{fig:gas_LS_unif} compares the performance of these two sampling strategies for sample sizes chosen from $s \in \{200,300,\ldots,1000\}$. For each sampling scheme and a fixed sample size, the maximum likelihood estimates are smoothed out by replicating the \texttt{LSAR} algorithm $1,000$ times and taking the average of all estimates. Note that in all three \cref{fig:gas_LS_unif_relerr,fig:gas_LS_unif_relres,fig:gas_LS_unif_time}, the blue and red plots correspond with the leverage score and uniform sampling scheme, respectively.

\cref{fig:gas_LS_unif_relerr} displays the relative errors of parameter estimates \cref{eq:ratio_phi} and \cref{fig:gas_LS_unif_relres} shows the ratio of residual norms \cref{eq:ratio_res}, under the two sampling schemes. Both figures strongly suggest that the leverage score sampling scheme outperforms the uniform sampling scheme. Furthermore, while the output of the former shows stability and almost monotonic convergence, the latter exhibits oscillations and does not show any indication of convergence for such small sample sizes. This observation is consistent with the literature discussed in \cref{Sec:RandNLA}. Despite the fact uniform sampling can be performed almost for free, \cref{fig:gas_LS_unif_time} shows no significant difference between the computational time of both sampling scheme. 
\begin{figure}[h!]
	\centering
	\begin{subfigure}{.3\textwidth}
		\includegraphics[width=\textwidth]{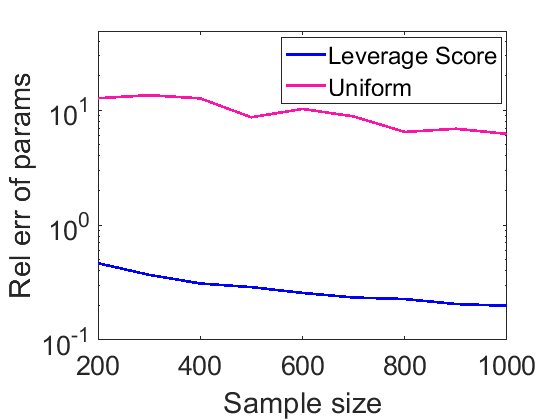}
		\caption{Relative error \cref{eq:ratio_phi}}
		\label{fig:gas_LS_unif_relerr}
	\end{subfigure}
	\begin{subfigure}{.3\textwidth}
		\includegraphics[width=\textwidth]{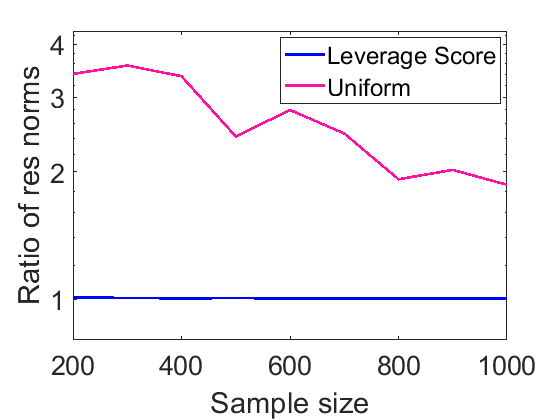}
		\caption{Ratio of residuals \cref{eq:ratio_res}}
		\label{fig:gas_LS_unif_relres}
	\end{subfigure}
	\begin{subfigure}{.3\textwidth}
		\includegraphics[width=\textwidth]{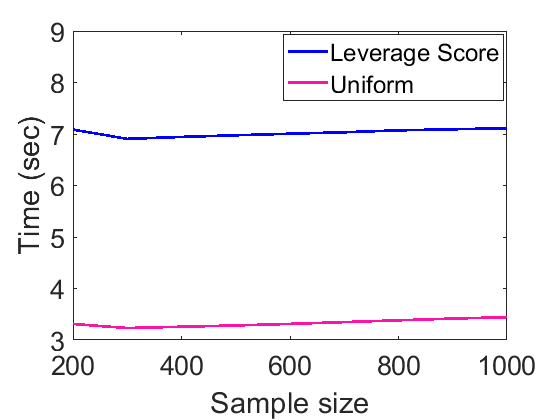}
		\caption{Computational time}
		\label{fig:gas_LS_unif_time}
	\end{subfigure}
	\caption{Figures (a), (b) and (c) display the relative error of parameter estimates \cref{eq:ratio_phi}, the ratio of residual norms \cref{eq:ratio_res}, and the computational time of two sampling schemes based on the leverage scores (blue) and uniform distribution (pink) for the gas sensors data, respectively.}
	\label{fig:gas_LS_unif}
\end{figure}

Finally, in our numerical examples, depending on the order of the \texttt{AR} model, the time difference between the exact method and the \texttt{LSAR} algorithm for model fitting vary between 75 to 1600 seconds. In many practical situations, one might need to fit hundreds of such models and make time-sensitive decisions based on the generated forecasts, before new data is provided. One such example is predicting several stock prices in a financial market for portfolio optimization, while the prices may be updated every few seconds. Another practical example is predicting the meteorology indices for several different purposes, with updates becoming available every few minutes. In these situations, saving a few seconds/minutes in forecasting can be crucial.  

%-----------------------
% Conclusion
%-----------------------

%----------------------
\section{Conclusion} 
\label{Sec:Conclusion}
%----------------------

In this paper, we have developed a new approach to fit an \texttt{AR} model to big time series data. Motivated from the literature of RandNLA in dealing with large matrices, we construct a fast and efficient algorithm, called \texttt{LSAR}, to approximate the leverage scores corresponding to the data matrix of an \texttt{AR} model, to estimate the appropriate underlying order, and to find the conditional maximum likelihood estimates of its parameters.
Analytical error bounds are developed for such approximations and the worst case running time of the \texttt{LSAR} algorithm is derived. Empirical results on large-scale synthetic as well as big real time series data highly support the theoretical results and reveal the efficacy of this new approach. 

For future work, we are mainly interested in developing this approach for a more general \texttt{ARMA} model. However, unlike \texttt{AR}, the (conditional) log-likelihood function for \texttt{ARMA} is a complicated non-linear function such that (C)MLEs cannot be derived analytically. Thus, it may require to exploit not only RandNLA techniques, but also modern optimization algorithms in big data regime to develop an efficient leverage score sampling scheme for \texttt{ARMA} models.

%-----------------------
% Acknowledgement
%-----------------------

%\input{Acknowledgement.tex}

%-----------------------

%-----------------------------------------------
%\section{References}
%-----------------------------------------------

\bibliographystyle{plain}
\bibliography{Biblio}

%-----------------------
% Appendix
%-----------------------

%-------------------------------------
\appendix
\section{Technical Lemmas and Proofs}
\label{Sec:Proofs}
%-------------------------------------

%---------------------------------------------
\subsection{Proof of \cref{thm:TheRec4LevScr}}
%---------------------------------------------

We first present \cref{LemInvBlockMat} which is used in the proof of \cref{thm:TheRec4LevScr}. 
\begin{lemma}[Matrix Inversion Lemma \cite{golub1983matrix}] \label{LemInvBlockMat}
	Consider the $2\times 2$ block matrix
	\begin{align*}
	\medskip M & = \begin{pmatrix}
	\medskip c & \bm{b}^{\transpose} \\
	\medskip \bm{b} & \AA 
	\end{pmatrix},
	\end{align*}
	where $\AA$, $\bm{b}$, and $c$ are an $m\times m$ matrix, an $m\times 1$ vector and a scalar, respectively. If $\AA$ is invariable, the inverse of matrix $M$ exists an can be calculated as follows
	\begin{align*}
	\medskip M^{-1} & = \frac{1}{k}\begin{pmatrix}
	\medskip 1 & -\bm{b}^{\transpose} \AA^{-1} \\
	\medskip -\AA^{-1} \bm{b} & k \AA^{-1} + \AA^{-1} \bm{b} \bm{b}^{\transpose} \AA^{-1}
	\end{pmatrix},
	\end{align*}
	where $k = c - \bm{b}^{\transpose} \AA^{-1} \bm{b}$.
\end{lemma}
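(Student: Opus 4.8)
The statement is the classical Schur-complement formula for inverting a bordered matrix, so the plan is to \emph{verify the claimed inverse directly} by block multiplication rather than to rederive it from scratch. Since $M$ is square, exhibiting a one-sided inverse is enough: I would form the product $M M^{-1}$ from the proposed expression and check that its four blocks reduce to those of the identity. Throughout, the governing observation is that $k = c - \bm{b}^{\transpose}\AA^{-1}\bm{b}$ and the quadratic form $\bm{b}^{\transpose}\AA^{-1}\bm{b}$ are \emph{scalars}, so they commute freely through every matrix factor; this is precisely what drives the cancellations.

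Concretely, after writing both factors in $2\times 2$ block form, I would evaluate the blocks in the order: the top-left scalar $\tfrac{1}{k}\bigl(c - \bm{b}^{\transpose}\AA^{-1}\bm{b}\bigr) = \tfrac{1}{k}\,k = 1$, immediate from the definition of $k$; the bottom-left $\tfrac{1}{k}\bigl(\bm{b} - \AA\AA^{-1}\bm{b}\bigr) = \zero$; the top-right $\tfrac{1}{k}\bigl(-c\,\bm{b}^{\transpose}\AA^{-1} + k\,\bm{b}^{\transpose}\AA^{-1} + (\bm{b}^{\transpose}\AA^{-1}\bm{b})\,\bm{b}^{\transpose}\AA^{-1}\bigr)$, where substituting $k = c - \bm{b}^{\transpose}\AA^{-1}\bm{b}$ makes the scalar coefficient of $\bm{b}^{\transpose}\AA^{-1}$ vanish; and the bottom-right $\tfrac{1}{k}\bigl(-\bm{b}\bm{b}^{\transpose}\AA^{-1} + k\,\eye + \bm{b}\bm{b}^{\transpose}\AA^{-1}\bigr) = \eye$, where the two rank-one terms annihilate. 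Each block is a few lines using only associativity and the scalar nature of $k$.

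If one instead wished to \emph{derive} rather than verify the formula, I would use the block factorization that eliminates the off-diagonal blocks with $\AA$ as the pivot, namely
\begin{align*}
M & = \begin{pmatrix} 1 & \bm{b}^{\transpose}\AA^{-1} \\ \zero & \eye \end{pmatrix}
\begin{pmatrix} k & \zero \\ \zero & \AA \end{pmatrix}
\begin{pmatrix} 1 & \zero \\ \AA^{-1}\bm{b} & \eye \end{pmatrix},
\end{align*}
and then invert each factor and multiply in the reverse order. The block-diagonal middle factor is where the Schur complement $k$ enters, and the two unit-triangular outer factors invert by flipping the sign of their off-diagonal blocks, so reassembling $L^{-1}D^{-1}U^{-1}$ reproduces the stated $M^{-1}$.

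I do not anticipate a real obstacle: the result is standard and the verification is routine bookkeeping. The single point that deserves care is the \emph{hypothesis}. As written, invertibility of $M$ requires not only that $\AA$ be invertible but also that the Schur complement $k$ be nonzero, since the formula divides by $k$. In the paper's application $M$ arises as a Gram-type matrix, so $k > 0$ holds automatically, but I would flag the $k \neq 0$ requirement explicitly, or note that it is subsumed by assuming $M$ itself is nonsingular.
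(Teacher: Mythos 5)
Your proposal is correct, and it is in fact more than the paper provides: the paper states this lemma with only a citation to Golub and Van Loan and gives no proof at all, so there is nothing to compare against except the standard literature. Your direct verification of $M M^{-1} = \eye$ checks out block by block --- the top-left block collapses to $k/k$, the bottom-left to $\tfrac{1}{k}(\bm{b}-\bm{b})$, the top-right vanishes because the scalar $\bm{b}^{\transpose}\AA^{-1}\bm{b}$ factors out and cancels against $-c+k$, and the two rank-one terms in the bottom-right annihilate --- and the alternative LDU derivation you sketch is the standard Schur-complement route and is equally valid. Your closing caveat is a genuine and worthwhile observation: as stated, the lemma's hypothesis (invertibility of $\AA$ alone) is not sufficient, since e.g.\ $c=0$, $\bm{b}=\zero$, $\AA=\eye$ gives $k=0$ and a singular $M$; one must additionally require $k \neq 0$. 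In the paper's application $M = \XX_{n,p}^{\transpose}\XX_{n,p}$ is positive definite (assuming full column rank), so $k = \vnorm{\res_{n-1,p-1}}^{2} > 0$ holds automatically, but the lemma as written should carry the extra hypothesis.
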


\subsubsection*{Proof of \cref{thm:TheRec4LevScr}}
\begin{proof}
	For $ p=1 $, computing the leverage score trivially boils down to normalizing the data vector.
	For $ p \geq 2 $, the data matrix is given by
	\begin{align*}
	\medskip \XX_{n,p} & = \begin{pmatrix}
	\bm{y}_{n-1,p-1} & \XX_{n-1,p-1}
	\end{pmatrix}.
	\end{align*}
	So, we have
	\begin{align*}
	\medskip \XX_{n,p}^{\transpose} \XX_{n,p} & = \begin{pmatrix}
	\medskip \bm{y}_{n-1,p-1}^{\transpose} \bm{y}_{n-1,p-1} & \bm{y}_{n-1,p-1}^{\transpose} \XX_{n-1,p-1} \\
	\medskip \XX_{n-1,p-1}^{\transpose} \bm{y}_{n-1,p-1} & \XX_{n-1,p-1}^{\transpose} \XX_{n-1,p-1}
	\end{pmatrix}.
	\end{align*}
	For sake of simplicity, let us define
	\begin{align*}
	\medskip \WW_{n,p} \defeq \XX_{n,p}^{\transpose} \XX_{n,p}.
	\end{align*}
	Following \cref{LemInvBlockMat}, the inverse of matrix $\WW_{n,p}$ is given by
	\begin{align*}
	\WW^{-1}_{n,p}	\medskip & = \frac{1}{u_{n,p}} \begin{pmatrix}
	\medskip 1 & -\bm{\phi}_{n-1,p-1}^{\transpose} \\
	\medskip -\bm{\phi}_{n-1,p-1} & u_{n,p} \WW_{n-1,p-1}^{-1} + \bm{\phi}_{n-1,p-1} \bm{\phi}_{n-1,p-1}^{\transpose}
	\end{pmatrix},
	\end{align*}
	where 
	\begin{align*}
	\medskip u_{n,p} & \defeq \bm{y}_{n-1,p-1}^{\transpose} \bm{y}_{n-1,p-1} - \bm{y}_{n-1,p-1}^{\transpose} \XX_{n-1,p-1} \WW_{n-1,p-1}^{-1} \XX_{n-1,p-1}^{\transpose} \bm{y}_{n-1,p-1} \\
	\medskip & = \bm{y}_{n-1,p-1}^{\transpose} \bm{y}_{n-1,p-1} - \bm{y}_{n-1,p-1}^{\transpose} \XX_{n-1,p-1} \bm{\phi}_{n-1,p-1} .
	\end{align*}
	It is readily seen that 
	\begin{align*}
	\medskip u_{n,p} & \defeq \bm{y}_{n-1,p-1}^{\transpose} \bm{y}_{n-1,p-1} - 2\bm{y}_{n-1,p-1}^{\transpose} \XX_{n-1,p-1} \bm{\phi}_{n-1,p-1} + \bm{y}_{n-1,p-1}^{\transpose} \XX_{n-1,p-1} \bm{\phi}_{n-1,p-1} \\
	\medskip  & = \bm{y}_{n-1,p-1}^{\transpose} \bm{y}_{n-1,p-1} - 2\bm{y}_{n-1,p-1}^{\transpose} \XX_{n-1,p-1} \bm{\phi}_{n-1,p-1} \\ 
	\medskip & \hspace*{0.5cm} + \bm{y}_{n-1,p-1}^{\transpose} \XX_{n-1,p-1} \WW_{n-1,p-1}^{-1} \WW_{n-1,p-1}\bm{\phi}_{n-1,p-1} \\ 
	\medskip  & = \bm{y}_{n-1,p-1}^{\transpose} \bm{y}_{n-1,p-1} - 2\bm{y}_{n-1,p-1}^{\transpose} \XX_{n-1,p-1} \bm{\phi}_{n-1,p-1} + \bm{\phi}_{n-1,p-1}^{\transpose} \WW_{n-1,p-1}\bm{\phi}_{n-1,p-1} \\
	\medskip  & = \bm{y}_{n-1,p-1}^{\transpose} \bm{y}_{n-1,p-1} - 2\bm{y}_{n-1,p-1}^{\transpose} \XX_{n-1,p-1} \bm{\phi}_{n-1,p-1} \\
	\medskip & \hspace*{0.5cm} + \bm{\phi}_{n-1,p-1}^{\transpose} \XX_{n-1,p-1}^{\transpose} \XX_{n-1,p-1} \bm{\phi}_{n-1,p-1} \\	   
	\medskip  & = \vnorm{\bm{y}_{n-1,p-1} - \XX_{n-1,p-1} \bm{\phi}_{n-1,p-1}}^2 \\
	\medskip  & = \vnorm{\res_{n-1,p-1}}^2. 
	\end{align*}
	
	\noindent The $i^{th}$ leverage score is given by
	\begin{align*}
	\medskip \ell_{n,p}(i) & = \XX_{n,p}^{\transpose}(i,:) \WW_{n,p}^{-1} \XX_{n,p}(i,:) \\
	\medskip  & = \begin{bmatrix}
	y_{i+p-1} & \XX_{n-1,p-1}^{\transpose}(i,:) 
	\end{bmatrix} \WW_{n,p}^{-1} \begin{bmatrix}
	y_{i+p-1} \\ \XX_{n-1,p-1}(i,:) 
	\end{bmatrix} \\
	\medskip & = \frac{1}{\vnorm{\res_{n-1,p-1}}^2}
	\left[y_{i+p-1} - \XX_{n-1,p-1}^{\transpose}(i,:)\bm{\phi}_{n-1,p-1} \right. \\
	\medskip & \hspace*{0.5cm} \left. - y_{i+p-1}\bm{\phi}_{n-1,p-1}^{\transpose} + \XX_{n-1,p-1}^{\transpose}(i,:) (\vnorm{\res_{n-1,p-1}}^2 \WW_{n-1,p-1}^{-1} + \bm{\phi}_{n-1,p-1} \bm{\phi}_{n-1,p-1}^{\transpose})\right] \\
	\medskip & \hspace*{0.5cm} \times \begin{bmatrix}
	y_{i+p-1}  \\ \XX_{n-1,p-1}(i,:)
	\end{bmatrix} \\
	\medskip & = \frac{1}{\vnorm{\res_{n-1,p-1}}^2} \left(
	y_{i+p-1}^2 - \XX_{n-1,p-1}^{\transpose}(i,:)\bm{\phi}_{n-1,p-1}y_{i+p-1} - y_{i+p-1}\bm{\phi}_{n-1,p-1}^{\transpose}\XX_{n-1,p-1}(i,:) \right. \\
	\medskip & \hspace*{0.5cm}  \left. + \XX_{n-1,p-1}^{\transpose}(i,:) (\vnorm{\res_{n-1,p-1}}^2 \WW_{n-1,p-1}^{-1} + \bm{\phi}_{n-1,p-1} \bm{\phi}_{n-1,p-1}^{\transpose})\XX_{n-1,p-1}(i,:)\right) \\ 
	\medskip & = \XX_{n-1,p-1}^{\transpose}(i,:) \WW_{n-1,p-1}^{-1} \XX_{n-1,p-1}(i,:) \\
	\medskip & \hspace*{0.5cm} + \frac{1}{\vnorm{\res_{n-1,p-1}}^2}\left(y_{i+p-1}^2 - 2y_{i+p-1}\XX_{n-1,p-1}^{\transpose}(i,:)\bm{\phi}_{n-1,p-1} + (\XX_{n-1,p-1}^{\transpose}(i,:) \bm{\phi}_{n-1,p-1})^2
	\right) \\
	\medskip & = \ell_{n-1,p-1}(i) + \frac{1}{\vnorm{\res_{n-1,p-1}}^2} \vnorm{y_{i+p-1} - \XX_{n-1,p-1}^{\transpose}(i,:)\bm{\phi}_{n-1,p-1}}^2 \\
	\medskip  & = \ell_{n-1,p-1}(i) + \frac{\res_{n-1,p-1}^2(i)}{\vnorm{\res_{n-1,p-1}}^2}.
	\end{align*}
\end{proof}

%---------------------------------------------
\subsection{\cref{thm:quasi} and Its Proof}
%---------------------------------------------

\begin{theorem}[Relative Errors for Quasi-approximate Leverage Scores]
	\label{thm:quasi}
	For the quasi-approximate leverage scores, we have with probability at least $ 1-\delta $,
	\begin{align*}
	\frac{|\ell_{n,p}(i)-\tilde{\ell}_{n,p}(i)|}{\ell_{n,p}(i)} & \leq \left( 1 + 3 \eta_{n-1,p-1} \kappa^2(\XX_{n,p})\right) \sqrt{\varepsilon},\quad \mbox{for}\ i = 1,\ldots,n-p,
	\end{align*}
	recalling that $\eta_{n,p}, \kappa(\XX_{n,p})$, and $\varepsilon$ are as in \cref{thm:drma}. 
\end{theorem}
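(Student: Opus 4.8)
The plan is to exploit the fact that the quasi-approximate scores in \cref{DefTildeLS} and the exact scores in \cref{thm:TheRec4LevScr} share \emph{identical} first summands $\ell_{n-1,p-1}(i)$; only the normalized squared-residual terms differ. Subtracting \cref{eq:lev_p_tilde} from \cref{eq:LS4AR(p)} gives, for every $i$,
\begin{align*}
\ell_{n,p}(i)-\tilde{\ell}_{n,p}(i) = \frac{\left(\res_{n-1,p-1}(i)\right)^{2}}{\vnorm{\res_{n-1,p-1}}^{2}} - \frac{\left(\tilde{\res}_{n-1,p-1}(i)\right)^{2}}{\vnorm{\tilde{\res}_{n-1,p-1}}^{2}},
\end{align*}
so the whole problem reduces to comparing two normalized squared-residual components of the $\mathtt{AR}(p-1)$ model. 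Crucially --- and in contrast to \cref{thm:main} --- the exact $\ell_{n-1,p-1}(i)$ is assumed available, so \emph{no induction over $p$ is needed}; this single-step nature is exactly why the bound here is free of the $(p-1)$ factor that appears in \cref{thm:main}, and it makes \cref{thm:quasi} the natural ``base case'' feeding that later induction.

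First I would gather the two estimates that \cref{thm:drma} supplies when applied at order $p-1$ (with exact leverage-score sampling, i.e.\ $\beta=1$): the residual-norm inflation $\vnorm{\tilde{\res}_{n-1,p-1}} \le (1+\varepsilon)\vnorm{\res_{n-1,p-1}}$ and the coefficient perturbation $\vnorm{\bm{\phi}_{n-1,p-1}-\tilde{\bm{\phi}}_{n-1,p-1}} \le \sqrt{\varepsilon}\,\eta_{n-1,p-1}\vnorm{\bm{\phi}_{n-1,p-1}}$. To these I would add two exact, deterministic facts: optimality of the full OLS fit gives $\vnorm{\res_{n-1,p-1}} \le \vnorm{\tilde{\res}_{n-1,p-1}}$, and orthogonality of the exact residual to $\range(\XX_{n-1,p-1})$ yields the Pythagorean identity $\vnorm{\tilde{\res}_{n-1,p-1}}^{2} = \vnorm{\res_{n-1,p-1}}^{2} + \vnorm{\XX_{n-1,p-1}(\bm{\phi}_{n-1,p-1}-\tilde{\bm{\phi}}_{n-1,p-1})}^{2}$. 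Componentwise, $\tilde{\res}_{n-1,p-1}(i)-\res_{n-1,p-1}(i) = -\XX_{n-1,p-1}(i,:)^{\transpose}(\tilde{\bm{\phi}}_{n-1,p-1}-\bm{\phi}_{n-1,p-1})$, which Cauchy--Schwarz bounds by $\vnorm{\XX_{n-1,p-1}(i,:)}\,\vnorm{\tilde{\bm{\phi}}_{n-1,p-1}-\bm{\phi}_{n-1,p-1}}$.

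With these in hand I would expand the difference over a common denominator and split it into a residual-norm part, proportional to $\vnorm{\tilde{\res}_{n-1,p-1}}^{2}-\vnorm{\res_{n-1,p-1}}^{2}$, and a cross/numerator part driven by the componentwise perturbation above. The relative error follows by dividing by $\ell_{n,p}(i)$, and the decisive point is to make this division harmless \emph{uniformly} in $i$. Here I would invoke two leverage-score identities: the recursion of \cref{thm:TheRec4LevScr} gives $\left(\res_{n-1,p-1}(i)\right)^{2}/\vnorm{\res_{n-1,p-1}}^{2} \le \ell_{n,p}(i)$, hence $|\res_{n-1,p-1}(i)| \le \vnorm{\res_{n-1,p-1}}\sqrt{\ell_{n,p}(i)}$; and the hat-matrix definition in \cref{def:notation} gives $\vnorm{\XX_{n-1,p-1}(i,:)}^{2} \le \sigma_{\max}^{2}(\XX_{n-1,p-1})\,\ell_{n-1,p-1}(i) \le \sigma_{\max}^{2}(\XX_{n-1,p-1})\,\ell_{n,p}(i)$. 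Every residual component or row norm thus carries a factor $\sqrt{\ell_{n,p}(i)}$; since the cross term contributes two such factors, the $\ell_{n,p}(i)$ in the denominator cancels exactly, leaving an $i$-independent bound. Finally I would assemble the spectral factor: the coefficient perturbation contributes the explicit $\eta_{n-1,p-1}$, the remaining data-dependent ratio $\sigma_{\max}(\XX_{n-1,p-1})\vnorm{\bm{\phi}_{n-1,p-1}}/\vnorm{\res_{n-1,p-1}}$ is controlled through the identities $\vnorm{\XX_{n-1,p-1}\bm{\phi}_{n-1,p-1}} = \xi\vnorm{\yy_{n-1,p-1}}$ and $\vnorm{\res_{n-1,p-1}} = \sqrt{1-\xi^{2}}\vnorm{\yy_{n-1,p-1}}$, and singular-value interlacing for the one-column extension $\XX_{n,p}=(\yy_{n-1,p-1}\mid\XX_{n-1,p-1})$ gives $\kappa(\XX_{n-1,p-1}) \le \kappa(\XX_{n,p})$, so the combination closes up into $\eta_{n-1,p-1}\kappa^{2}(\XX_{n,p})$. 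Collecting the leading $\sqrt{\varepsilon}$ contributions and absorbing the lower-order $\mathcal{O}(\varepsilon)$ remainders produces the multiplier $1+3\eta_{n-1,p-1}\kappa^{2}(\XX_{n,p})$.

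I expect the main obstacle to be precisely the uniform-in-$i$ \emph{relative} control: because $\ell_{n,p}(i)$ can be arbitrarily small, no additive slack is permissible, and each term in the numerator must be matched by the correct power of $\sqrt{\ell_{n,p}(i)}$ --- this is what forces the use of \emph{both} leverage-score identities above rather than crude row-norm bounds. A second, related subtlety is that the factor $1/\vnorm{\res_{n-1,p-1}}$ cannot be bounded by the condition number alone (it blows up as $\yy_{n-1,p-1}$ approaches $\range(\XX_{n-1,p-1})$, i.e.\ $\xi\to 1$); its range-fraction dependence must be paired with the $\sqrt{\xi^{-2}-1}$ carried by $\eta_{n-1,p-1}$ before any spectral bound applies. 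Getting this pairing right, while keeping the probability bookkeeping tied to the single invocation of \cref{thm:drma}, is the crux of the argument.
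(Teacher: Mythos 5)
Your overall architecture coincides with the paper's: the shared term $\ell_{n-1,p-1}(i)$ cancels, the difference of normalized squared residuals is split into a norm-perturbation piece and a componentwise piece, each residual component and row norm is matched to the correct power of $\sqrt{\ell_{n,p}(i)}$ via the recursion bound $\left(\res_{n-1,p-1}(i)\right)^{2}/\vnorm{\res_{n-1,p-1}}^{2}\leq\ell_{n,p}(i)$ and the row-norm bound $\vnorm{\XX_{n-1,p-1}(i,:)}\leq\vnorm{\XX_{n-1,p-1}}\sqrt{\ell_{n-1,p-1}(i)}$ (these are exactly the paper's \cref{LemBnd4RelError,LemUpBnd4NormX,LemBnd4DiffError}), and the argument closes with $\kappa(\XX_{n-1,p-1})\leq\kappa(\XX_{n,p})$ from \cref{LemConNumInequ}. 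The divergence --- and the step that fails to reach the stated bound --- is the final spectral assembly. Controlling $\sigma_{\max}(\XX_{n-1,p-1})\vnorm{\bm{\phi}_{n-1,p-1}}/\vnorm{\res_{n-1,p-1}}$ through $\vnorm{\XX_{n-1,p-1}\bm{\phi}_{n-1,p-1}}=\xi\vnorm{\yy_{n-1,p-1}}$ and $\vnorm{\res_{n-1,p-1}}=\sqrt{1-\xi^{2}}\,\vnorm{\yy_{n-1,p-1}}$ gives that ratio $\leq\kappa(\XX_{n-1,p-1})\,\xi/\sqrt{1-\xi^{2}}=\kappa^{2}(\XX_{n-1,p-1})/\eta_{n-1,p-1}$; multiplying by the $\eta_{n-1,p-1}$ carried by \cref{EquRelDiffError} therefore cancels the $\sqrt{\xi^{-2}-1}$ and leaves $\kappa^{2}$ alone. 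Your route thus lands on a multiplier of the form $1+c\,\kappa^{2}(\XX_{n,p})$ with no $\eta_{n-1,p-1}$, which does not imply the claimed $1+3\eta_{n-1,p-1}\kappa^{2}(\XX_{n,p})$ when $\eta_{n-1,p-1}$ is small (it is a genuinely different bound, not a sharpening of the stated one).

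The missing ingredient is the paper's \cref{LemBnd4SSE}: since $\res_{n-1,p-1}=\XX_{n,p}\,[1,\ -\bm{\phi}_{n-1,p-1}^{\transpose}]^{\transpose}$, one has $\vnorm{\res_{n-1,p-1}}\geq\sqrt{\lambda_{\min}(\XX_{n,p}^{\transpose}\XX_{n,p})\left(\vnorm{\bm{\phi}_{n-1,p-1}}^{2}+1\right)}$, and likewise for the tilde quantities. This bounds $\sqrt{\vnorm{\bm{\phi}_{n-1,p-1}}^{2}+1}/\vnorm{\res_{n-1,p-1}}$ and $\sqrt{\vnorm{\tilde{\bm{\phi}}_{n-1,p-1}}^{2}+1}/\vnorm{\tilde{\res}_{n-1,p-1}}$ each by $1/\sigma_{\min}(\XX_{n,p})$ \emph{independently of $\xi$}, so that together with $\vnorm{\XX_{n,p}}^{2}$ they produce exactly $\kappa^{2}(\XX_{n,p})$ while $\eta_{n-1,p-1}$ survives as a separate factor; the constant $3$ then arises from $1+(1+\varepsilon)\leq 3$. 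A secondary issue: bounding $|\tilde{\res}_{n-1,p-1}(i)|\leq|\res_{n-1,p-1}(i)|+|\tilde{\res}_{n-1,p-1}(i)-\res_{n-1,p-1}(i)|$, as your sketch implies, leaves an $\varepsilon\kappa^{4}$-type remainder that cannot be absorbed into the stated constant without an extra hypothesis such as $\sqrt{\varepsilon}\,\kappa^{2}\lesssim 1$; the paper avoids this by bounding $|\tilde{\res}_{n-1,p-1}(i)|$ directly by $\sqrt{\vnorm{\tilde{\bm{\phi}}_{n-1,p-1}}^{2}+1}\,\vnorm{\XX_{n,p}}\sqrt{\ell_{n,p}(i)}$ in \cref{LemBnd4Error} and then applying \cref{LemBnd4SSE} to the tilde residual as well.
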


In order to prove \cref{thm:quasi}, we first introduce the following lemmas and corollary.
\begin{lemma}\label{LemNormOpt}
	The leverage scores of an $\mathtt{AR(p)}$ model for $p \geq 1$, are given by
	\begin{align*}
	\ell_{n,p}(i) = \min_{\zz \in \mathbb{R}^{n-p}} \left\{\vnorm{\zz}^2 \mid \XX_{n,p}^{\transpose} \zz = \XX_{n,p}(i,:) \right\}, \quad \mbox{for}\ i=1,\ldots,n-p,
	\end{align*}
	where $ \XX_{n,p} $ is the data matrix of the $ \mathtt{AR}(p) $ model defined in \cref{eq:Xnp}.
\end{lemma}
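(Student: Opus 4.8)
The plan is to recognize the right-hand side as a least-norm problem subject to linear equality constraints, solve it in closed form, and then verify that its optimal value coincides with the diagonal entry $\ee_i^\transpose \HH_{n,p} \ee_i$ of the hat matrix. First I would observe that since $\XX_{n,p} \in \reals^{(n-p)\times p}$ has full column rank (implicit in the invertibility of $\XX_{n,p}^\transpose \XX_{n,p}$ used throughout the paper), its transpose has full row rank, so $\range(\XX_{n,p}^\transpose) = \reals^{p}$ and the constraint $\XX_{n,p}^\transpose \zz = \XX_{n,p}(i,:)$ is feasible for every $i$. The feasible set is therefore a nonempty affine subspace, and minimizing the strictly convex objective $\vnorm{\zz}^2$ over it has a unique solution.

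To identify that solution I would use the Lagrange/KKT conditions (equivalently, the Moore--Penrose pseudoinverse). Writing the Lagrangian $\vnorm{\zz}^2 - 2\bm\lambda^\transpose\bigl(\XX_{n,p}^\transpose\zz - \XX_{n,p}(i,:)\bigr)$ and setting the gradient in $\zz$ to zero gives $\zz = \XX_{n,p}\bm\lambda$, so the optimal $\zz$ lies in $\range(\XX_{n,p})$. Substituting back into the constraint yields $\XX_{n,p}^\transpose\XX_{n,p}\bm\lambda = \XX_{n,p}(i,:)$, hence $\bm\lambda = (\XX_{n,p}^\transpose\XX_{n,p})^{-1}\XX_{n,p}(i,:)$ and the minimizer
\[
\zz^\star = \XX_{n,p}\,(\XX_{n,p}^\transpose\XX_{n,p})^{-1}\XX_{n,p}(i,:).
\]
Because the objective is convex and the constraints affine, this stationary point is the global minimum.

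Finally I would evaluate the optimal value. Using the identity $\XX_{n,p}(i,:) = \XX_{n,p}^\transpose \ee_i$, direct substitution gives
\[
\vnorm{\zz^\star}^2 = \XX_{n,p}(i,:)^\transpose (\XX_{n,p}^\transpose\XX_{n,p})^{-1}\XX_{n,p}(i,:) = \ee_i^\transpose \XX_{n,p}(\XX_{n,p}^\transpose\XX_{n,p})^{-1}\XX_{n,p}^\transpose \ee_i = \ee_i^\transpose \HH_{n,p}\ee_i = \ell_{n,p}(i),
\]
which is exactly the claim. If I preferred to avoid the optimality argument via convexity, I would instead decompose any feasible $\zz$ as $\zz = \zz^\star + \vv$ with $\XX_{n,p}^\transpose\vv = \zero$, i.e. $\vv \in \range(\XX_{n,p})^\perp$; since $\zz^\star \in \range(\XX_{n,p})$, Pythagoras gives $\vnorm{\zz}^2 = \vnorm{\zz^\star}^2 + \vnorm{\vv}^2 \geq \vnorm{\zz^\star}^2$, with equality iff $\vv = \zero$, recovering both optimality and uniqueness.

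I do not anticipate a genuine obstacle here, as this is a standard least-norm characterization of leverage scores. The only points requiring care are confirming feasibility (the full-column-rank property of $\XX_{n,p}$) and correctly matching the optimal value to the hat-matrix diagonal through the identity $\XX_{n,p}(i,:) = \XX_{n,p}^\transpose \ee_i$.
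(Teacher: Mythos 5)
Your proposal is correct and follows essentially the same route as the paper: a Lagrange-multiplier computation yielding $\zz^{\star} = \XX_{n,p}(\XX_{n,p}^{\transpose}\XX_{n,p})^{-1}\XX_{n,p}(i,:)$ and the optimal value $\XX_{n,p}^{\transpose}(i,:)(\XX_{n,p}^{\transpose}\XX_{n,p})^{-1}\XX_{n,p}(i,:) = \ell_{n,p}(i)$. Your additional remarks on feasibility and the Pythagorean verification of global optimality are sound refinements that the paper's proof leaves implicit, but they do not change the argument.
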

\begin{proof}
	We prove this lemma by using Lagrangian multipliers. Define the function
	\begin{align*}
	\medskip h(\zz,\bm{\lambda}) & \defeq \frac{1}{2}\zz^{\transpose}\zz-\bm{\lambda}^{\transpose}(\XX_{n,p}^{\transpose}\zz-\XX_{n,p}(i,:)).
	\end{align*}
	By taking the first derivative with respect to the vector $\zz$ and setting equal to zero, we have, 
	\begin{align*}
	\medskip \frac{\partial h(\zz,\bm{\lambda})}{\partial \zz} & = \zz-\XX_{n,p}\bm{\lambda}\ =\ 0\ \Rightarrow\ \zz^{\star}\ =\ \XX_{n,p}\bm{\lambda}^{\star}.
	\end{align*}
	Now, by multiplying both sides by $\XX_{n,p}^{\transpose}$, we obtain,
	\begin{align*}
	\medskip \XX_{n,p}^{\transpose} \zz^{\star} & = \XX_{n,p}^{\transpose} \XX_{n,p}\bm{\lambda}^{\star},
	\end{align*}
	simplified to
	\begin{align*}
	\medskip \XX_{n,p}(i,:) & = \XX_{n,p}^{\transpose} \XX_{n,p}\bm{\lambda}^{\star}.
	\end{align*}
	This implies that, 
	\begin{align*}
	\medskip \bm{\lambda}^{\star} & = (\XX_{n,p}^{\transpose} \XX_{n,p})^{-1}\XX_{n,p}(i,:).
	\end{align*}
	Thus,
	\begin{align*}
	\medskip \zz^{\star} & = \XX_{n,p}(\XX_{n,p}^{\transpose} \XX_{n,p})^{-1}\XX_{n,p}(i,:).
	\end{align*}
	The square of the norm of $\zz^{\star}$ is equal to
	\begin{align*}
	\medskip \vnorm{\zz^{\star}}^2 & = \left(\XX_{n,p}^{\transpose}(i,:) (\XX_{n,p}^{\transpose} \XX_{n,p})^{-1}\XX_{n,p}^{\transpose}\right)\left(\XX_{n,p}(\XX_{n,p}^{\transpose} \XX_{n,p})^{-1}\XX_{n,p}(i,:)\right) \\
	\medskip  & = \XX_{n,p}^{\transpose}(i,:) (\XX_{n,p}^{\transpose} \XX_{n,p})^{-1}\XX_{n,p}(i,:) \\
	\medskip  & = \ell_{n,p}(i).
	\end{align*}	
\end{proof}

\begin{lemma}\label{LemUpBnd4NormX}
	For an $\mathtt{AR(p)}$ model with $p \geq 1$, we have 
	\begin{align*}
	\medskip \vnorm{\XX_{n,p}(i,:)} & \leq \vnorm{\XX_{n,p}}\sqrt{\ell_{n,p}(i)}, \quad \mbox{for}\ i=1,\ldots,n-p,
	\end{align*} 	
	where $\XX_{n,p} $ and $\ell_{n,p}(i)$ are defined, respectively, in \cref{eq:Xnp,def:notation}.
\end{lemma}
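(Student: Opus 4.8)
The plan is to leverage the variational characterization of the leverage scores just established in \cref{LemNormOpt}, which expresses $\ell_{n,p}(i)$ as the minimum squared $\ell_2$ norm over all vectors $\zz$ satisfying the linear constraint $\XX_{n,p}^{\transpose}\zz = \XX_{n,p}(i,:)$. This reduces the claimed bound to essentially a one-line application of the submultiplicativity of the spectral norm, so I do not expect any genuine obstacle here; the statement is really a short corollary of the preceding lemma.

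Concretely, first I would let $\zz^{\star}$ denote the minimizer from \cref{LemNormOpt}, so that $\XX_{n,p}^{\transpose}\zz^{\star} = \XX_{n,p}(i,:)$ and $\vnorm{\zz^{\star}}^2 = \ell_{n,p}(i)$. Then, using the constraint together with the compatibility of the matrix spectral norm and the vector $\ell_2$ norm,
\begin{align*}
\vnorm{\XX_{n,p}(i,:)} = \vnorm{\XX_{n,p}^{\transpose}\zz^{\star}} \leq \vnorm{\XX_{n,p}^{\transpose}}\,\vnorm{\zz^{\star}} = \vnorm{\XX_{n,p}}\sqrt{\ell_{n,p}(i)},
\end{align*}
where the last equality uses the fact that a matrix and its transpose share the same spectral norm. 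This is exactly the asserted inequality. The only routine points to verify are the identity $\vnorm{\XX_{n,p}^{\transpose}} = \vnorm{\XX_{n,p}}$ and that $\vnorm{\cdot}$ is read as the operator norm on the matrix factor and the Euclidean norm on the vector factor, both consistent with the notation fixed earlier in the paper.

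An alternative route, which avoids \cref{LemNormOpt} altogether, is purely spectral: writing $\WW_{n,p} \defeq \XX_{n,p}^{\transpose}\XX_{n,p}$ and $\aa \defeq \XX_{n,p}(i,:)$, one has $\ell_{n,p}(i) = \aa^{\transpose}\WW_{n,p}^{-1}\aa \geq \vnorm{\aa}^2 / \lambda_{\max}(\WW_{n,p}) = \vnorm{\aa}^2 / \vnorm{\XX_{n,p}}^2$, since the smallest eigenvalue of $\WW_{n,p}^{-1}$ equals $1/\lambda_{\max}(\WW_{n,p})$ and $\lambda_{\max}(\WW_{n,p}) = \vnorm{\XX_{n,p}}^2$; rearranging yields the same bound. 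I would present the first route in the write-up, as it reuses the lemma immediately preceding the statement and keeps the argument self-contained within the established machinery.
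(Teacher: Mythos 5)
Your primary argument is exactly the paper's proof: both invoke the minimizer $\zz^{\star}$ from \cref{LemNormOpt}, write $\vnorm{\XX_{n,p}(i,:)} = \vnorm{\XX_{n,p}^{\transpose}\zz^{\star}} \leq \vnorm{\XX_{n,p}^{\transpose}}\vnorm{\zz^{\star}} = \vnorm{\XX_{n,p}}\sqrt{\ell_{n,p}(i)}$, and the steps are correct. The alternative spectral route you sketch is also valid but unnecessary; the write-up as proposed matches the paper.
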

\begin{proof}
	From \cref{LemNormOpt} we have,
	\begin{align*}
	\medskip \vnorm{\XX_{n,p}(i,:)} & = \vnorm{\XX_{n,p}^{\transpose}\zz^{\star}} \\
	\medskip  & \leq \vnorm{\XX_{n,p}^{\transpose}}\vnorm{\zz^{\star}} \\
	\medskip  & = \vnorm{\XX_{n,p}} \sqrt{\ell_{n,p}(i)}.
	\end{align*}
\end{proof}

\begin{lemma}\label{LemBnd4DiffError}
	For an $\mathtt{AR(p)}$ model with $p \geq 1$, we have
	\begin{align*}
	\medskip |\res_{n,p}(i)-\tilde{\res}_{n,p}(i)| & \leq \sqrt{\varepsilon} \eta_{n,p} \vnorm{\bm{\phi}_{n,p}} \vnorm{\XX_{n,p}} \sqrt{\ell_{n,p}(i)},\quad \mbox{for}\ i=1,\ldots,n-p,	 	
	\end{align*}
	where $ \res_{n,p}, \tilde{\res}_{n,p}, \eta_{n,p} , \bm{\phi}_{n,p}, \XX_{n,p}$, and $\ell_{n,p}(i) $ are defined respectively in \cref{eq:res}, \cref{eq:res_tilde}, \cref{eq:eta}, \cref{eq:phi}, \cref{eq:Xnp}, and  \cref{def:notation} and $ \varepsilon $ is the error in \cref{EquBoundSSE}.
\end{lemma}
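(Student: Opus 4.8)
The plan is to express the pointwise residual difference as a single inner product, bound it by Cauchy--Schwarz, and then invoke the two results already established earlier in the excerpt. The key observation is that both residuals are built from the \emph{same} response vector $\yy_{n,p}$ and the \emph{same} data matrix $\XX_{n,p}$; they differ only in which coefficient vector is subtracted. Concretely, from \cref{eq:res} and \cref{eq:res_tilde} the $i\th$ components are $\res_{n,p}(i) = y_{p+i} - \XX_{n,p}^{\transpose}(i,:)\bm{\phi}_{n,p}$ and $\tilde{\res}_{n,p}(i) = y_{p+i} - \XX_{n,p}^{\transpose}(i,:)\tilde{\bm{\phi}}_{n,p}$, so the data term $y_{p+i}$ cancels and I am left with
\begin{align*}
\res_{n,p}(i) - \tilde{\res}_{n,p}(i) = \XX_{n,p}^{\transpose}(i,:)\left(\tilde{\bm{\phi}}_{n,p} - \bm{\phi}_{n,p}\right).
\end{align*}

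Next I would take absolute values and apply the Cauchy--Schwarz inequality to this scalar inner product, giving
\begin{align*}
|\res_{n,p}(i) - \tilde{\res}_{n,p}(i)| \leq \vnorm{\XX_{n,p}(i,:)}\, \vnorm{\tilde{\bm{\phi}}_{n,p} - \bm{\phi}_{n,p}}.
\end{align*}
At this point the two factors are controlled by results already in hand. The first factor is bounded by \cref{LemUpBnd4NormX}, namely $\vnorm{\XX_{n,p}(i,:)} \leq \vnorm{\XX_{n,p}}\sqrt{\ell_{n,p}(i)}$, and the second factor is exactly the content of \cref{EquRelDiffError} in \cref{thm:drma}, namely $\vnorm{\bm{\phi}_{n,p} - \tilde{\bm{\phi}}_{n,p}} \leq \sqrt{\varepsilon}\,\eta_{n,p}\vnorm{\bm{\phi}_{n,p}}$, which holds with probability at least $1-\delta$.

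Substituting both bounds and collecting the factors yields precisely
\begin{align*}
|\res_{n,p}(i) - \tilde{\res}_{n,p}(i)| \leq \sqrt{\varepsilon}\,\eta_{n,p}\vnorm{\bm{\phi}_{n,p}}\vnorm{\XX_{n,p}}\sqrt{\ell_{n,p}(i)},
\end{align*}
which is the claimed inequality, inheriting the same probability guarantee as \cref{thm:drma}. I expect no genuine obstacle here; the lemma is a clean assembly of three ingredients. The only point requiring a moment of care is the very first step---verifying that the $y_{p+i}$ terms cancel so that the residual difference reduces to a linear functional of $\tilde{\bm{\phi}}_{n,p}-\bm{\phi}_{n,p}$---after which Cauchy--Schwarz and the two prior bounds combine mechanically. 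I would also note that the probability statement is governed entirely by \cref{thm:drma}, since \cref{LemUpBnd4NormX} is deterministic.
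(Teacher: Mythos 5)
Your proposal is correct and follows essentially the same route as the paper's proof: both reduce the residual difference to the linear functional $\XX_{n,p}^{\transpose}(i,:)(\tilde{\bm{\phi}}_{n,p}-\bm{\phi}_{n,p})$, bound it via Cauchy--Schwarz, and then invoke \cref{EquRelDiffError} and \cref{LemUpBnd4NormX}. Your write-up is in fact a slightly cleaner packaging of the paper's argument, which arrives at the same cancellation by adding and subtracting $y_{i+p}$ and treating the two signs separately.
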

\begin{proof}
	From \cref{EquRelDiffError} and the definition of $l_2$ norm, we have
	\begin{align*}
	\medskip \dotprod{\frac{\XX_{n,p}(i,:)}{\vnorm{\XX_{n,p}(i,:)}},(\bm{\phi}_{n,p}-\tilde{\bm{\phi}}_{n,p})} & \leq \vnorm{\bm{\phi}_{n,p}-\tilde{\bm{\phi}}_{n,p}} \\
	\medskip  & \leq \sqrt{\varepsilon} \eta_{n,p} \vnorm{\bm{\phi}_{n,p}}.		
	\end{align*}
	So, we have
	\begin{align*}
	\medskip \XX_{n,p}^{\transpose}(i,:)\bm{\phi}_{n,p} - \XX_{n,p}^{\transpose}(i,:)\tilde{\bm{\phi}}_{n,p} & \leq \sqrt{\varepsilon} \eta_{n,p}  \vnorm{\bm{\phi}_{n,p}} \vnorm{\XX_{n,p}(i,:)}.		
	\end{align*}
	Now by adding and subtracting $y_{i+p}$ on the left hand side, we yield
	\begin{align*}
	\medskip \left(y_{i+p} - \XX_{n,p}^\transpose(i,:)\tilde{\bm{\phi}}_{n,p}\right) - \left(y_{i+p} - \XX_{n,p}^\transpose(i,:)\bm{\phi}_{n,p}\right) & \leq \sqrt{\varepsilon} \eta_{n,p} \vnorm{\bm{\phi}_{n,p}} \vnorm{\XX_{n,p}(i,:)}, 		
	\end{align*}
	implying that,
	\begin{align*}
	\medskip \tilde{\res}_{n,p}(i) - \res_{n,p}(i) & \leq \sqrt{\varepsilon} \eta_{n,p} \vnorm{\bm{\phi}_{n,p}} \vnorm{\XX_{n,p}(i,:)}.		
	\end{align*}
	As analogously we can construct a similar inequality for $\res_{n,p}(i) - \tilde{\res}_{n,p}(i)$, we have that
	\begin{align*}
	\medskip |\res_{n,p}(i)-\tilde{\res}_{n,p}(i)| & \leq \sqrt{\varepsilon} \eta_{n,p} \vnorm{\bm{\phi}_{n,p}} \vnorm{\XX_{n,p}(i,:)}.		
	\end{align*}
	Now, by using \cref{LemUpBnd4NormX}, we obtain
	\begin{align*}
	\medskip |\res_{n,p}(i)-\tilde{\res}_{n,p}(i)| & \leq \sqrt{\varepsilon} \eta_{n,p} \vnorm{\bm{\phi}_{n,p}} \vnorm{\XX_{n,p}} \sqrt{\ell_{n,p}(i)}.
	\end{align*}
\end{proof}

\begin{lemma}\label{LemBnd4Error}
	Let $\{y_1,\ldots,y_n\}$ be a time series data. For $i=1,\ldots,n-p$, we have
	\begin{subequations}
		\begin{align}
		\medskip \label{EquBnd4Error} |\res_{n-1,p-1}(i)| & \leq \sqrt{\vnorm{\bm{\phi}_{n-1,p-1}}^2 + 1} \vnorm{\XX_{n,p}} \sqrt{\ell_{n,p}(i)}, \\	 	
		\medskip \label{EquBnd4ErrorHat} |\tilde{\res}_{n-1,p-1}(i)| & \leq \sqrt{\vnorm{\tilde{\bm{\phi}}_{n-1,p-1}}^2 + 1} \vnorm{\XX_{n,p}} \sqrt{\ell_{n,p}(i)},	 	
		\end{align}	
	\end{subequations}
	where $ \res_{n,p}, \tilde{\res}_{n,p}, \bm{\phi}_{n,p}, \tilde{\bm{\phi}}_{n,p}, \XX_{n,p}$, and $\ell_{n,p}(i) $ are defined respectively in \cref{eq:res}, \cref{eq:res_tilde}, \cref{eq:phi}, \cref{eq:phi_tilde}, \cref{eq:Xnp}, and  \cref{def:notation}.
	
\end{lemma}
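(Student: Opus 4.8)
The plan is to rewrite each scalar residual as a single inner product against a row of the \emph{larger} data matrix $\XX_{n,p}$, and then finish with Cauchy--Schwarz together with \cref{LemUpBnd4NormX}. The one idea that makes everything work is to pass from the order-$(p-1)$ model to the order-$p$ model at the level of individual rows.

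First I would recall the block decomposition exploited in the proof of \cref{thm:TheRec4LevScr}: since $\XX_{n,p} = \begin{pmatrix} \yy_{n-1,p-1} & \XX_{n-1,p-1} \end{pmatrix}$, the $i$-th row of the order-$p$ matrix splits as
\[
\XX_{n,p}(i,:) = \begin{bmatrix} y_{i+p-1} \\ \XX_{n-1,p-1}(i,:) \end{bmatrix}.
\]
I would then introduce the augmented coefficient vector $\bm{v} \defeq \begin{bmatrix} 1 & -\bm{\phi}_{n-1,p-1}^{\transpose} \end{bmatrix}^{\transpose} \in \reals^{p}$. The central identity is that
\[
\dotprod{\XX_{n,p}(i,:), \bm{v}} = y_{i+p-1} - \XX_{n-1,p-1}^{\transpose}(i,:)\bm{\phi}_{n-1,p-1} = \res_{n-1,p-1}(i),
\]
so the order-$(p-1)$ residual is realized as an inner product built from the order-$p$ row.

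With this identity in hand the estimate is immediate. Cauchy--Schwarz gives $|\res_{n-1,p-1}(i)| \le \vnorm{\XX_{n,p}(i,:)}\,\vnorm{\bm{v}}$, where $\vnorm{\bm{v}} = \sqrt{\vnorm{\bm{\phi}_{n-1,p-1}}^2 + 1}$ by construction, and applying \cref{LemUpBnd4NormX} to replace $\vnorm{\XX_{n,p}(i,:)}$ by $\vnorm{\XX_{n,p}}\sqrt{\ell_{n,p}(i)}$ yields \cref{EquBnd4Error}. For \cref{EquBnd4ErrorHat} I would repeat the identical argument with $\tilde{\bm{\phi}}_{n-1,p-1}$ in place of $\bm{\phi}_{n-1,p-1}$, using $\tilde{\res}_{n-1,p-1}(i) = y_{i+p-1} - \XX_{n-1,p-1}^{\transpose}(i,:)\tilde{\bm{\phi}}_{n-1,p-1}$ from \cref{eq:res_tilde}; no property of $\tilde{\bm{\phi}}$ beyond its definition is needed.

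The only conceptually nontrivial step is the first one: recognizing that the order-$(p-1)$ residual should be expressed through the row of the order-$p$ matrix, so that the \emph{larger} leverage score $\ell_{n,p}(i)$ rather than $\ell_{n-1,p-1}(i)$ appears in the bound. This is precisely what the Hankel/Toeplitz embedding of $\XX_{n-1,p-1}$ into $\XX_{n,p}$ makes available, and it is also what lets \cref{LemUpBnd4NormX} be invoked for the $(n,p)$ matrix. Everything after that identity is a one-line Cauchy--Schwarz estimate, so I do not anticipate any genuine obstacle beyond setting up the correct augmented vector and keeping the index bookkeeping ($y_{i+p-1}$ as the leading entry, $i=1,\dots,n-p$) consistent.
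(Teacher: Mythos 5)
Your proposal is correct and is essentially identical to the paper's own proof: the paper likewise writes $\res_{n-1,p-1}(i)$ as the inner product of $\XX_{n,p}(i,:)$ with the augmented vector $\begin{bmatrix} 1 & -\bm{\phi}_{n-1,p-1}^{\transpose}\end{bmatrix}^{\transpose}$, bounds it by $\sqrt{\vnorm{\bm{\phi}_{n-1,p-1}}^2+1}\,\vnorm{\XX_{n,p}(i,:)}$ via Cauchy--Schwarz (phrased there as normalizing to a unit vector), and finishes with \cref{LemUpBnd4NormX}; the tilde case is handled identically. No gaps.
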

\begin{proof}
	The left hand side of \cref{EquBnd4Error} can be written as below: 
	\begin{align*}
	\medskip |\res_{n-1,p-1}(i)| & = |y_{i+p-1} - \XX_{n-1,p-1}^{\transpose}(i,:)\bm{\phi}_{n-1,p-1}| \\
	\medskip  & = |\begin{bmatrix}
	y_{i+p-1} & \XX_{n-1,p-1}^{\transpose}(i,:) \end{bmatrix} \begin{bmatrix} 
	1 & -\bm{\phi_{n-1,p-1}}^{\transpose} \end{bmatrix}^{\transpose}| \\
	\medskip  & = |\XX_{n,p}^{\transpose}(i,:) \begin{bmatrix} 1 & -\bm{\phi_{n-1,p-1}}^{\transpose} \end{bmatrix}^{\transpose}| \\
	\medskip  & = \sqrt{\vnorm{\bm{\phi}_{n-1,p-1}}^2 + 1} \left|\XX_{n,p}^{\transpose}(i,:) \frac{\begin{bmatrix} 1 & -\bm{\phi_{n-1,p-1}}^{\transpose} \end{bmatrix}^{\transpose}}{\sqrt{\vnorm{\bm{\phi}_{n-1,p-1}}^2 + 1}}\right| \\	
	\medskip  & \leq \sqrt{\vnorm{\bm{\phi}_{n-1,p-1}}^2 + 1} \vnorm{\XX_{n,p}(i,:)}.
	\end{align*} 
	Now, by using \cref{LemUpBnd4NormX}, we obtain, 
	\begin{align}
	\medskip |\res_{n-1,p-1}(i)| & \leq \sqrt{\vnorm{\bm{\phi}_{n-1,p-1}}^2 + 1} \vnorm{\XX_{n,p}} \sqrt{\ell_{n,p}(i)}.	 	
	\end{align}		
	Inequality \eqref{EquBnd4ErrorHat} can be proved analogously. 
\end{proof}

\begin{lemma}\label{LemBnd4RelError}
	Let $\{y_1,\ldots,y_n\}$ be a time series data. We have,
	\begin{align*}
	\medskip \frac{\left(\res_{n-1,p-1}(i)\right)^{2}}{\vnorm{\res_{n-1,p-1}}^2} & \leq \ell_{n,p}(i),\quad \mbox{for}\ i=1,\ldots,n-p,
	\end{align*}
	where $ \res_{n,p} $ and $\ell_{n,p}(i) $ are defined respectively in \cref{eq:res} and  \cref{def:notation}.
\end{lemma}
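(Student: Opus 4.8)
The plan is to read this inequality off directly from the exact leverage-score recursion of \cref{thm:TheRec4LevScr}, so the argument is essentially immediate. The crucial observation is that the quantity on the left-hand side of the desired bound is precisely the \emph{second} summand appearing in the recursion \cref{eq:LS4AR(p)}. Hence, rather than manipulating residuals or data matrices directly, I would simply isolate that summand.

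First, I would apply \cref{thm:TheRec4LevScr}, which is valid for $p \geq 2$, to express the exact leverage score of the $\mathtt{AR}(p)$ model as
\begin{align*}
\ell_{n,p}(i) = \ell_{n-1,p-1}(i) + \frac{\left(\res_{n-1,p-1}(i)\right)^2}{\vnorm{\res_{n-1,p-1}}^2}.
\end{align*}
Next, I would invoke the elementary fact recorded in \cref{Sec:RandNLA} that every leverage score is non-negative, so in particular $\ell_{n-1,p-1}(i) \geq 0$. Rearranging the recursion then gives
\begin{align*}
\frac{\left(\res_{n-1,p-1}(i)\right)^2}{\vnorm{\res_{n-1,p-1}}^2} = \ell_{n,p}(i) - \ell_{n-1,p-1}(i) \leq \ell_{n,p}(i),
\end{align*}
which is exactly the claimed bound.

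There is no substantive obstacle in this argument; its entire content is already encapsulated in the recursion proved earlier, and the only additional ingredient is the non-negativity of leverage scores. The single point worth a moment's care is that the recursion requires $p \geq 2$ so that the $\mathtt{AR}(p-1)$ residual $\res_{n-1,p-1}$ is well defined; this matches the intended scope of the lemma, since it is subsequently applied within the proof of \cref{thm:quasi} for $p \geq 2$.
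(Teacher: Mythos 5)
Your proof is correct and matches the paper's own argument exactly: the paper likewise observes that the bound follows directly from the recursion in \cref{thm:TheRec4LevScr} together with the non-negativity of $\ell_{n-1,p-1}(i)$. Your added remark about the $p\geq 2$ scope is a reasonable clarification but not a departure from the paper's reasoning.
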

\begin{proof}
	Since the leverage score is a non-negative valued function, the proof is directly achieved from \cref{thm:TheRec4LevScr}.  
\end{proof}

\begin{lemma}\label{LemBnd4SSE}
	Let $\{y_1,\ldots,y_n\}$ be a time series data. We have
	\begin{subequations}
		\begin{align}
		\medskip \label{EquBnd4SSE} \vnorm{\res_{n-1,p-1}} & \geq \sqrt{\lambda_{\min}(\XX_{n,p}^{\transpose} \XX_{n,p}) \left(\vnorm{\bm{\phi}_{n-1,p-1}}^2 + 1\right)} , \\
		\medskip \label{EquBnd4SSEHat} \vnorm{\tilde{\res}_{n-1,p-1}} & \geq \sqrt{\lambda_{\min}(\XX_{n,p}^{\transpose} \XX_{n,p}) \left(\vnorm{\tilde{\bm{\phi}}_{n-1,p-1}}^2 + 1\right)} ,
		\end{align}
	\end{subequations}
	where $ \res_{n,p}, \tilde{\res}_{n,p}, \bm{\phi}_{n,p}, \tilde{\bm{\phi}}_{n,p}$, and $\XX_{n,p}$ are defined respectively in \cref{eq:res}, \cref{eq:res_tilde}, \cref{eq:phi}, \cref{eq:phi_tilde}, and \cref{eq:Xnp} and $\lambda_{\min}(.)$ denotes the minimum eigenvalue. 
\end{lemma}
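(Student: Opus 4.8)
The plan is to express each residual as the full data matrix $\XX_{n,p}$ applied to a suitably augmented coefficient vector, and then invoke the standard minimum-eigenvalue lower bound for $\vnorm{\XX_{n,p}\vv}$. Recall from the proof of \cref{thm:TheRec4LevScr} the column partition $\XX_{n,p} = \begin{pmatrix} \yy_{n-1,p-1} & \XX_{n-1,p-1}\end{pmatrix}$. The first step is to observe that, by the definition of the residual in \cref{eq:res},
\begin{align*}
\res_{n-1,p-1} & = \yy_{n-1,p-1} - \XX_{n-1,p-1}\bm{\phi}_{n-1,p-1} = \XX_{n,p}\begin{bmatrix} 1 \\ -\bm{\phi}_{n-1,p-1}\end{bmatrix},
\end{align*}
so the residual lives in $\range(\XX_{n,p})$ and is realized by the explicit preimage $\vv \defeq \begin{bmatrix} 1 & -\bm{\phi}_{n-1,p-1}^{\transpose}\end{bmatrix}^{\transpose}$.

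The second step is to apply the Rayleigh-quotient bound $\vnorm{\XX_{n,p}\vv}^2 = \vv^{\transpose}\XX_{n,p}^{\transpose}\XX_{n,p}\vv \geq \lambda_{\min}(\XX_{n,p}^{\transpose}\XX_{n,p})\vnorm{\vv}^2$, valid for every $\vv$. Since the augmenting scalar $1$ sits orthogonally to the block $-\bm{\phi}_{n-1,p-1}$, we have $\vnorm{\vv}^2 = 1 + \vnorm{\bm{\phi}_{n-1,p-1}}^2$, and substituting and taking square roots yields \cref{EquBnd4SSE} directly. The inequality \cref{EquBnd4SSEHat} follows by the identical argument, replacing $\bm{\phi}_{n-1,p-1}$ with $\tilde{\bm{\phi}}_{n-1,p-1}$ throughout: the residual $\tilde{\res}_{n-1,p-1}$ in \cref{eq:res_tilde} is still formed against the \emph{full} matrix $\XX_{n-1,p-1}$, so it equals $\XX_{n,p}\begin{bmatrix} 1 & -\tilde{\bm{\phi}}_{n-1,p-1}^{\transpose}\end{bmatrix}^{\transpose}$ and the same eigenvalue bound applies verbatim.

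I do not anticipate a genuine obstacle here, as the argument is short. The only point requiring care is the very first identity: one must recognize that prepending the constant $1$ to $-\bm{\phi}_{n-1,p-1}$ and multiplying by $\XX_{n,p}$ reproduces exactly $\yy_{n-1,p-1} - \XX_{n-1,p-1}\bm{\phi}_{n-1,p-1}$, which hinges on the column partition inherited from the Hankel/Toeplitz structure of the data matrix. Once that preimage is identified, the bound is a one-line consequence of $\lambda_{\min}$ acting as the infimum of the Rayleigh quotient, and the presence of the ``$+1$'' term inside the square root is precisely the contribution of the leading entry of $\vv$.
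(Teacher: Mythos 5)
Your proof is correct and follows essentially the same route as the paper's: both write $\res_{n-1,p-1} = \XX_{n,p}\bigl[\,1\;\; -\bm{\phi}_{n-1,p-1}^{\transpose}\,\bigr]^{\transpose}$ via the column partition and then apply the Rayleigh-quotient lower bound with $\lambda_{\min}(\XX_{n,p}^{\transpose}\XX_{n,p})$, handling the tilde case analogously. No gaps.
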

\begin{proof}
	By definition, we have
	\begin{align*}
	\medskip \vnorm{\res_{n-1,p-1}} & = \vnorm{\yy_{n-1,p-1} - \XX_{n-1,p-1}\bm{\phi}_{n-1,p-1}} \\
	\medskip  & = \vnorm{\begin{pmatrix} \yy_{n-1,p-1} & \XX_{n-1,p-1}\end{pmatrix} \begin{bmatrix} 1 & -\bm{\phi}_{n-1,p-1}^{\transpose} \end{bmatrix}^{\transpose}} \\
	\medskip  & = \vnorm{\XX_{n,p} \begin{bmatrix} 1 & -\bm{\phi}_{n-1,p-1}^{\transpose} \end{bmatrix}^{\transpose}} \\
	\medskip  & = \sqrt{\begin{bmatrix} 1 & -\bm{\phi}_{n-1,p-1}^{\transpose} \end{bmatrix} \XX_{n,p}^{\transpose} \XX_{n,p} \begin{bmatrix} 1 & -\bm{\phi}_{n-1,p-1}^{\transpose} \end{bmatrix}^{\transpose}} \\
	\medskip  & \geq \sqrt{\lambda_{\min}(\XX_{n,p}^{\transpose} \XX_{n,p})} \vnorm{\begin{bmatrix} 1 & -\bm{\phi}_{n-1,p-1}^{\transpose} \end{bmatrix}} \\
	\medskip  & = \sqrt{\lambda_{\min}(\XX_{n,p}^{\transpose} \XX_{n,p}) \left(\vnorm{\bm{\phi}_{n-1,p-1}}^2 + 1\right)}.
	\end{align*}
	Inequality \eqref{EquBnd4SSEHat} is proved analogously. 
\end{proof}

\begin{lemma}\label{LemConNumInequ}
	For any positive integer numbers $1<p<n$, we have
	\begin{align*}
	\medskip \kappa(\XX_{n-1,p-1}) & \leq \kappa(\XX_{n,p}),
	\end{align*}
	where $\XX_{n,p}$ is defined in \cref{eq:Xnp} an $\kappa(.)$ denotes the condition number. 
\end{lemma}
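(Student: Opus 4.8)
The plan is to exploit the nested block structure of the data matrices that was already uncovered in the proof of \cref{thm:TheRec4LevScr}. There it is shown that $\XX_{n,p} = \begin{pmatrix} \bm{y}_{n-1,p-1} & \XX_{n-1,p-1} \end{pmatrix}$, so $\XX_{n-1,p-1}$ is precisely the matrix obtained from $\XX_{n,p}$ by deleting its first column. Since $\kappa(\cdot)$ is the ratio $\sigma_{\max}/\sigma_{\min}$ of the extreme singular values, it suffices to understand how these two singular values move under deletion of a column, and to show that $\sigma_{\max}$ can only decrease while $\sigma_{\min}$ can only increase.

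First I would pass from the matrices to their Gram matrices, whose eigenvalues are the squared singular values. Because $\XX_{n-1,p-1}$ consists of the last $p-1$ columns of $\XX_{n,p}$, the Gram matrix $\XX_{n-1,p-1}^{\transpose}\XX_{n-1,p-1}$ is exactly the trailing $(p-1)\times(p-1)$ principal submatrix of the symmetric positive (semi)definite matrix $\XX_{n,p}^{\transpose}\XX_{n,p}$ (obtained by deleting its first row and first column). This is the key structural observation, and it is immediate from the block form above.

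Next I would invoke Cauchy's interlacing theorem for the eigenvalues of a symmetric matrix and those of a principal submatrix. Ordering the squared singular values decreasingly, interlacing between the $p\times p$ matrix $\XX_{n,p}^{\transpose}\XX_{n,p}$ and its $(p-1)\times(p-1)$ principal submatrix gives, from the two ends of the interlacing chain,
\begin{align*}
\sigma_{\max}^2(\XX_{n,p}) \geq \sigma_{\max}^2(\XX_{n-1,p-1}) \quad\text{and}\quad \sigma_{\min}^2(\XX_{n-1,p-1}) \geq \sigma_{\min}^2(\XX_{n,p}).
\end{align*}
Taking square roots and dividing the first inequality by the second yields $\kappa(\XX_{n-1,p-1}) \leq \kappa(\XX_{n,p})$, as required.

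I do not expect a genuine obstacle here; the argument is essentially a one-line consequence of interlacing once the submatrix structure is in place. The only points needing care are bookkeeping ones: (i) $\XX_{n,p}$ must have full column rank $p$ so that $\sigma_{\min} > 0$ and $\kappa$ is finite, which is already assumed throughout since $(\XX_{n,p}^{\transpose}\XX_{n,p})^{-1}$ appears in the CMLE \cref{eq:phi}; and (ii) aligning the indices so that the \emph{smallest} singular value of the smaller matrix is bounded \emph{below} by that of the larger one (the bottom end of the chain), while the \emph{largest} is bounded above (the top end). Keeping the decreasing ordering consistent throughout avoids any direction error.
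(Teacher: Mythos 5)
Your proof is correct and rests on the same structural observation as the paper's: $\XX_{n-1,p-1}$ consists of the last $p-1$ columns of $\XX_{n,p}$, so its Gram matrix is a principal submatrix of $\XX_{n,p}^{\transpose}\XX_{n,p}$, whence $\sigma_{\max}$ can only decrease and $\sigma_{\min}$ can only increase. The only difference is presentational: you invoke Cauchy's interlacing theorem, while the paper proves the two endpoint inequalities directly by padding the extremal eigenvector of $\XX_{n-1,p-1}^{\transpose}\XX_{n-1,p-1}$ with a zero and using the Rayleigh-quotient characterization of $\lambda_{\max}$ and $\lambda_{\min}$ --- which is exactly the endpoint case of interlacing.
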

\begin{proof}
	It is readily seen that the matrix $\XX_{n,p}$ can be written in the form of 
	\begin{align*}
	\medskip \XX_{n,p} & = \begin{pmatrix}
	\yy_{n,p} & \XX_{n-1,p-1}
	\end{pmatrix}.
	\end{align*}
	On the other hand, by definition, we know that
	\begin{align*}
	\medskip \lambda_{\max}(\XX_{n,p}^{\transpose} \XX_{n,p}) & = \sup_{\vnorm{\bm{\nu}} \leq 1} \bm{\nu}^{\transpose} \XX_{n,p}^{\transpose} \XX_{n,p} \bm{\nu}.
	\end{align*}
	Let $\bm{u}$ be a unit vector corresponding to the maximum eigenvalue $\lambda_{\max}(\XX_{n-1,p-1}^{\transpose} \XX_{n-1,p-1})$ and construct the vector 
	\begin{align*}
	\medskip \bm{\bar{u}} & \defeq \begin{bmatrix}
	0 & \bm{u^{\transpose}}
	\end{bmatrix}^{\transpose}.
	\end{align*}
	We have
	\begin{align*}
	\medskip \lambda_{\max}(\XX_{n,p}^{\transpose} \XX_{n,p}) & \geq \bm{\bar{u}}^{\transpose} \XX_{n,p}^{\transpose} \XX_{n,p} \bm{\bar{u}} \\
	\medskip & = \bm{u}^{\transpose} \XX_{n-1,p-1}^{\transpose} \XX_{n-1,p-1} \bm{u} \\
	\medskip & = \lambda_{\max}(\XX_{n-1,p-1}^{\transpose} \XX_{n-1,p-1}).
	\end{align*}
	Analogously, one can show that $\lambda_{\min}(\XX_{n,p}^{\transpose} \XX_{n,p}) \leq \lambda_{\min}(\XX_{n-1,p-1}^{\transpose} \XX_{n-1,p-1})$. Thus, we have
	\begin{align*}
	\medskip \kappa(\XX_{n,p}) & = \sqrt{\frac{\lambda_{\max}(\XX_{n,p}^{\transpose} \XX_{n,p})}{\lambda_{\min}(\XX_{n,p}^{\transpose} \XX_{n,p})}} \\
	\medskip & \geq \sqrt{\frac{\lambda_{\max}(\XX_{n-1,p-1}^{\transpose} \XX_{n-1,p-1})}{\lambda_{\min}(\XX_{n-1,p-1}^{\transpose} \XX_{n-1,p-1})}} \\
	\medskip & = \kappa(\XX_{n-1,p-1}).
	\end{align*}
\end{proof}

\begin{corollary}\label{CorNormXnoInequ}
	For any positive integer numbers $1<p<n$, we have
	\begin{align*}
	\medskip \vnorm{\XX_{n-1,p-1}} & \leq \vnorm{\XX_{n,p}},
	\end{align*} 
	where $\XX_{n,p}$ is defined in \cref{eq:Xnp}.
\end{corollary}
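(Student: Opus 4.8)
The plan is to obtain this as an immediate consequence of the intermediate inequality already established while proving \cref{LemConNumInequ}. The key identity I would invoke is the variational characterization of the spectral norm, namely that for any matrix $\AA$ one has $\vnorm{\AA} = \sqrt{\lambda_{\max}(\AA^{\transpose}\AA)}$, so that comparing spectral norms reduces to comparing the largest eigenvalues of the associated Gram matrices.

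First I would recall the column-partition $\XX_{n,p} = \begin{pmatrix} \yy_{n,p} & \XX_{n-1,p-1} \end{pmatrix}$, which exhibits $\XX_{n-1,p-1}$ as a block of columns of $\XX_{n,p}$. Using the Rayleigh-quotient form $\lambda_{\max}(\XX_{n,p}^{\transpose}\XX_{n,p}) = \sup_{\vnorm{\bm{\nu}} \leq 1} \bm{\nu}^{\transpose}\XX_{n,p}^{\transpose}\XX_{n,p}\bm{\nu}$, I would take a unit eigenvector $\bm{u}$ realizing $\lambda_{\max}(\XX_{n-1,p-1}^{\transpose}\XX_{n-1,p-1})$, pad it into the unit vector $\bm{\bar{u}} = \begin{bmatrix} 0 & \bm{u}^{\transpose}\end{bmatrix}^{\transpose}$, and evaluate the quadratic form; because the zero in the first coordinate annihilates the $\yy_{n,p}$ column, the form collapses to $\bm{u}^{\transpose}\XX_{n-1,p-1}^{\transpose}\XX_{n-1,p-1}\bm{u}$. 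This yields $\lambda_{\max}(\XX_{n,p}^{\transpose}\XX_{n,p}) \geq \lambda_{\max}(\XX_{n-1,p-1}^{\transpose}\XX_{n-1,p-1})$, which is precisely the inequality derived en route in \cref{LemConNumInequ}.

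Finally, since the square-root function is monotonically increasing on $[0,\infty)$, taking square roots of both sides of this eigenvalue inequality gives $\vnorm{\XX_{n,p}} \geq \vnorm{\XX_{n-1,p-1}}$, as claimed. I anticipate no real obstacle here: the entire content is the embedding of the smaller Gram matrix's top eigenvector, which has already been carried out inside \cref{LemConNumInequ}, so the corollary is essentially a restatement of that step under the dictionary $\vnorm{\cdot} = \sqrt{\lambda_{\max}(\cdot^{\transpose}\cdot)}$. The only point requiring minor care is confirming that $\XX_{n-1,p-1}$ sits inside $\XX_{n,p}$ as a column-submatrix rather than requiring any row realignment, which the displayed partition makes transparent.
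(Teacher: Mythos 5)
Your proposal is correct and follows exactly the paper's argument: the paper likewise observes that $\vnorm{\XX_{n,p}}^2 = \lambda_{\max}(\XX_{n,p}^{\transpose}\XX_{n,p})$ and reads the result off the eigenvalue inequality already established inside the proof of \cref{LemConNumInequ} via the padded eigenvector $\begin{bmatrix} 0 & \bm{u}^{\transpose}\end{bmatrix}^{\transpose}$.
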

\begin{proof}
	Since $\lambda_{\max}(\XX_{n,p}^{\transpose} \XX_{n,p}) = \vnorm{\XX_{n,p}}^2$, this inequality is directly derived from the proof of \cref{LemConNumInequ}.
\end{proof}

\subsubsection*{Proof of \cref{thm:quasi}}
\begin{proof}
	By using \cref{thm:TheRec4LevScr,thm:drma}, we have
	\begin{align*}
	\medskip |\ell_{n,p}(i)-\tilde{\ell}_{n,p}(i)| & = \left|\ell_{n-1,p-1}(i) + \frac{\left(\res_{n-1,p-1}(i)\right)^{2}}{\vnorm{\res_{n-1,p-1}}^2} - \ell_{n-1,p-1}(i) - \frac{\left(\tilde{\res}_{n-1,p-1}(i)\right)^{2}}{\vnorm{\tilde{\res}_{n-1,p-1}}^2}\right| \\
	\medskip & =  \left|\frac{\left(\res_{n-1,p-1}(i)\right)^{2}}{\vnorm{\res_{n-1,p-1}}^2} - \frac{\left(\res_{n-1,p-1}(i)\right)^{2}}{\vnorm{\tilde{\res}_{n-1,p-1}}^2} + \frac{\left(\res_{n-1,p-1}(i)\right)^{2}}{\vnorm{\tilde{\res}_{n-1,p-1}}^2} - \frac{\left(\tilde{\res}_{n-1,p-1}(i)\right)^{2}}{\vnorm{\tilde{\res}_{n-1,p-1}}^2}\right| \\
	\medskip & \leq  \left|\frac{\left(\res_{n-1,p-1}(i)\right)^{2}}{\vnorm{\res_{n-1,p-1}}^2} - \frac{\left(\res_{n-1,p-1}(i)\right)^{2}}{\vnorm{\tilde{\res}_{n-1,p-1}}^2}\right| + \left|\frac{\left(\res_{n-1,p-1}(i)\right)^{2}}{\vnorm{\tilde{\res}_{n-1,p-1}}^2} - \frac{\left(\tilde{\res}_{n-1,p-1}(i)\right)^{2}}{\vnorm{\tilde{\res}_{n-1,p-1}}^2}\right| \\
	\medskip & \leq  \left(\res_{n-1,p-1}(i)\right)^{2}\left|\frac{1}{\vnorm{\res_{n-1,p-1}}^2} - \frac{1}{\vnorm{\tilde{\res}_{n-1,p-1}}^2}\right| \\
	\medskip & \ \ \ + \frac{1}{\vnorm{\tilde{\res}_{n-1,p-1}}^2} \left|\left(\res_{n-1,p-1}(i)\right)^{2} - \left(\tilde{\res}_{n-1,p-1}(i)\right)^{2}\right| \\
	\medskip & \\
	\medskip & \leq  \left(\res_{n-1,p-1}(i)\right)^{2}\left|\frac{1}{\vnorm{\res_{n-1,p-1}}^2} - \frac{1}{(1+\varepsilon)^2\vnorm{\res_{n-1,p-1}}^2}\right| \\
	\medskip & \hspace*{0.5cm} + \frac{1}{\vnorm{\res_{n-1,p-1}}^2} \left|\left(\res_{n-1,p-1}(i) - \left(\tilde{\res}_{n-1,p-1}\right)(i)\right)\left(\res_{n-1,p-1}(i) + \left(\tilde{\res}_{n-1,p-1}\right)(i)\right)\right| \\
	\medskip & \leq  \frac{\varepsilon^2+2\varepsilon}{(1+\varepsilon)^2}\frac{\left(\res_{n-1,p-1}(i)\right)^{2}}{\vnorm{\res_{n-1,p-1}}^2} \\
	\medskip & \hspace*{0.5cm} + \frac{1}{\vnorm{\res_{n-1,p-1}}^2} \left|\res_{n-1,p-1}(i) - \left(\tilde{\res}_{n-1,p-1}\right)(i)\right| \\
	\medskip & \hspace*{1.0cm} \times\left(|\res_{n-1,p-1}(i)| + |\left(\tilde{\res}_{n-1,p-1}\right)(i)|\right).
	\end{align*}
	Now, from \cref{LemBnd4DiffError,LemBnd4Error,LemBnd4RelError}, we have
	\begin{align*}
	\medskip |\ell_{n,p}(i)-\tilde{\ell}_{n,p}(i)| & \leq  \frac{\varepsilon^2+2\varepsilon}{(1+\varepsilon)^2}\ell_{n,p}(i) \\
	\medskip & \hspace*{0.5cm} + \frac{1}{\vnorm{\res_{n-1,p-1}}^2} \left(\sqrt{\varepsilon} \eta_{n-1,p-1} \vnorm{\bm{\phi}_{n-1,p-1}} \vnorm{\XX_{n-1,p-1}} \sqrt{\ell_{n-1,p-1}(i)}\right) \\
	\medskip & \hspace*{1.0cm} \times \left(\sqrt{\vnorm{\bm{\phi}_{n-1,p-1}}^2 + 1} \vnorm{\XX_{n,p}} \sqrt{\ell_{n,p}(i)}\right. \\ 
	\medskip & \hspace*{1.5cm} + \left.\sqrt{\vnorm{\tilde{\bm{\phi}}_{n-1,p-1}}^2 + 1} \vnorm{\XX_{n,p}} \sqrt{\ell_{n,p}(i)}\right) \\
	\medskip & \leq \left(\frac{\sqrt{\varepsilon}(2+\varepsilon)}{(1+\varepsilon)^2} + \frac{\eta_{n-1,p-1} \vnorm{\bm{\phi}_{n-1,p-1}} \vnorm{\XX_{n-1,p-1}} \vnorm{\XX_{n,p}}}{\vnorm{\res_{n-1,p-1}}^2} \right. \\
	\medskip & \hspace*{0.5cm} \times \left. \left(\sqrt{\vnorm{\bm{\phi}_{n-1,p-1}}^2 + 1} + \sqrt{\vnorm{\tilde{\bm{\phi}}_{n-1,p-1}}^2 + 1} \right)\right) \sqrt{\varepsilon} \ell_{n,p}(i) \\
	\medskip & \leq \left(1 + \frac{\eta_{n-1,p-1} \vnorm{\bm{\phi}_{n-1,p-1}} \vnorm{\XX_{n-1,p-1}} \vnorm{\XX_{n,p}}}{\vnorm{\res_{n-1,p-1}}^2} \right. \\
	\medskip & \hspace*{0.5cm} \times \left.\left(\sqrt{\vnorm{\bm{\phi}_{n-1,p-1}}^2 + 1} + \sqrt{\vnorm{\tilde{\bm{\phi}}_{n-1,p-1}}^2 + 1} \right)\right) \sqrt{\varepsilon} \ell_{n,p}(i).	
	\end{align*}
	
	\noindent Motivated from \cref{LemBnd4SSE} along with using \cref{CorNormXnoInequ}, we obtain
	\begin{align*}
	\medskip |\ell_{n,p}(i)-\tilde{\ell}_{n,p}(i)| & \leq \left(1 + \frac{\eta_{n-1,p-1} \vnorm{\bm{\phi}_{n-1,p-1}} \vnorm{\XX_{n-1,p-1}} \vnorm{\XX_{n,p}}}{\vnorm{\res_{n-1,p-1}}} \right. \\
	\medskip & \hspace*{0.5cm} \times \left.\left(\frac{\sqrt{\vnorm{\bm{\phi}_{n-1,p-1}}^2 + 1}}{\vnorm{\res_{n-1,p-1}}} + \frac{\sqrt{\vnorm{\tilde{\bm{\phi}}_{n-1,p-1}}^2 + 1}}{\vnorm{\res_{n-1,p-1}}} \right)\right) \sqrt{\varepsilon} \ell_{n,p}(i) \\
	& \leq \left(1 + \frac{\eta_{n-1,p-1} \sqrt{\vnorm{\bm{\phi}_{n-1,p-1}}^2 + 1} \vnorm{\XX_{n,p}}^2}{\vnorm{\res_{n-1,p-1}}} \right. \\
	\medskip & \hspace*{0.5cm} \times \left.\left(\frac{\sqrt{\vnorm{\bm{\phi}_{n-1,p-1}}^2 + 1}}{\vnorm{\res_{n-1,p-1}}} + \frac{(1+\varepsilon)\sqrt{\vnorm{\tilde{\bm{\phi}}_{n-1,p-1}}^2 + 1}}{\vnorm{\tilde{\res}_{n-1,p-1}}} \right)\right) \sqrt{\varepsilon} \ell_{n,p}(i).
	\end{align*}
	Now, by using \cref{LemBnd4SSE}, we obtain
	\begin{align*}
	\medskip |\ell_{n,p}(i)-\tilde{\ell}_{n,p}(i)| & \leq \left(1 + \frac{\eta_{n-1,p-1} \lambda_{\max}(\XX_{n,p}^{\transpose} \XX_{n,p})}{\sqrt{\lambda_{\min}(\XX_{n,p}^{\transpose} \XX_{n,p})}}\right. \\
	\medskip & \hspace*{0.5cm} \times \left.\left(\frac{1}{\sqrt{\lambda_{\min}(\XX_{n,p}^{\transpose} \XX_{n,p})}} + \frac{1+\varepsilon}{\sqrt{\lambda_{\min}(\XX_{n,p}^{\transpose} \XX_{n,p})}} \right)\right) \sqrt{\varepsilon} \ell_{n,p}(i) \\
	\medskip & \leq \left(1 + \frac{3 \eta_{n-1,p-1} \lambda_{\max}(\XX_{n,p}^{\transpose} \XX_{n,p})}{\lambda_{\min}(\XX_{n,p}^{\transpose} \XX_{n,p})}\right) \sqrt{\varepsilon} \ell_{n,p}(i) \\
	\medskip & = \left( 1 + 3 \eta_{n-1,p-1} \kappa^2(\XX_{n,p})\right) \sqrt{\varepsilon} \ell_{n,p}(i).
	\end{align*}
\end{proof}

%---------------------------------------------
\subsection{Proof of \cref{thm:main}}
%---------------------------------------------

\begin{proof}
	We prove by induction. For $p=2$, it is derived directly from \cref{thm:quasi}. Let us assume that the statement of theorem is correct for all values of $p< \bar{p}$, and prove that it is also correct for $p=\bar{p}$.
	\begin{align*}
	\medskip |\ell_{n,\bar{p}}(i)-\hat{\ell}_{n,\bar{p}}(i)| & = \left|\ell_{n-1,\bar{p}-1}(i) + \frac{\left(\res_{n-1,p-1}(i)\right)^{2}}{\vnorm{\res_{n-1,p-1}}^2} - \hat{\ell}_{n-1,\bar{p}-1}(i) - \frac{\left(\hat{\res}_{n-1,p-1}(i)\right)^{2}}{\vnorm{\hat{\res}_{n-1,p-1}}^2}\right| \\
	\medskip  & \leq \left|\ell_{n-1,\bar{p}-1}(i) - \hat{\ell}_{n-1,\bar{p}-1}(i) \right| +  \left|\frac{\left(\res_{n-1,p-1}(i)\right)^{2}}{\vnorm{\res_{n-1,p-1}}^2} - \frac{\left(\hat{\res}_{n-1,p-1}(i)\right)^{2}}{\vnorm{\hat{\res}_{n-1,p-1}}^2}\right| \\
	\medskip & \leq \left( 1 + 3 \eta_{n-2,\bar{p}-2} \kappa^2(X_{n-1,\bar{p}-1})\right)(\bar{p}-2) \sqrt{\varepsilon} \ell_{n-1,\bar{p}-1}(i) \\
	\medskip & \hspace*{0.5cm} + \left( 1 + 3 \eta_{n-1,\bar{p}-1} \kappa^2(X_{n,\bar{p}})\right) \sqrt{\varepsilon} \ell_{n,\bar{p}}(i) \\
	\medskip & \leq \left( 1 + 3 \eta_{n-1,\bar{p}-1} \kappa^2(X_{n,\bar{p}})\right)(\bar{p}-2) \sqrt{\varepsilon} \ell_{n,\bar{p}}(i) \\
	\medskip & \hspace*{0.5cm} + \left( 1 + 3 \eta_{n-1,\bar{p}-1} \kappa^2(X_{n,\bar{p}})\right) \sqrt{\varepsilon} \ell_{n,\bar{p}}(i) \\
	\medskip & = \left( 1 + 3 \eta_{n-1,\bar{p}-1} \kappa^2(X_{n,\bar{p}})\right)(\bar{p}-1) \sqrt{\varepsilon} \ell_{n,\bar{p}}(i).
	\end{align*}
	The second last inequality comes from the induction hypothesis as well as \cref{thm:quasi} and the last inequality is from \cref{LemConNumInequ}.
\end{proof}

%---------------------------------------------
\subsection{Proof of \cref{thm:quality_assurance}}
%---------------------------------------------

\begin{proof}
	From \cref{thm:drma}, we have \cref{EquRelDiffError}, which in turn implies
	\begin{align*}
	\medskip \left| \bm{\phi}_{n,p^*}(k)- \hat{\bm{\phi}}_{n,p^*}(k)\right| & \leq \sqrt{\varepsilon} \eta_{n,p^*} \vnorm{\bm{\phi}_{n,p^*}}, \quad \mbox{for}\ 1 \leq k \leq p^*. 
	\end{align*}
	One can estimate the PACF value at lag $ p^* $ using the $ {p^*}\th $ component of the CMLE of the parameter vector based on the full data matrix, i.e., $ \bm{\phi}_{n,p^*}(p^*) $, \cite[Chapter 3]{Shu}. Hence, \cref{eq:tau_1} now readily follows by an application of reverse triangular inequality. 
	
	\noindent To show \cref{eq:tau_2}, we recall that \cite[Chapter 3]{Shu}
	\begin{align*}
	\medskip \mathtt{PACF}_p = \frac{\mathtt{Cov}(p) - \sum_{k=1}^{p-1} \phi_{k} \mathtt{Cov}(p-k)}{\sigma^{2}_{W}},
	\end{align*}
	where $ \mathtt{Cov}(p) $ is the autocovariance function at lag $ p $ and $ \sigma^{2}_{W} $ is the variance of white noise series in an $ \mathtt{AR(p-1)} $ model. It follows that $\tau_{p}$ is given by plugin the CMLE of $ \mathtt{Cov}(p) $, $ \phi_{k}$ for $k = 1,\ldots, p-1 $ and $ \sigma_{W}^2 $, that is, 
	\begin{align*}
	\medskip \tau_p & = \frac{\gamma(p) - \sum_{k=1}^{p-1} \bm{\phi}_{n,p-1}(k) \gamma(p-k)}{\vnorm{\res_{n,p-1}}^{2}/n}.
	\end{align*}
	
	\noindent Hence, for $p > p^*$, we have
	\begin{align*}
	\medskip |\hat{\tau}_p| & = \frac{\left| \gamma(p) - \sum_{k=1}^{p-1} \hat{\bm{\phi}}_{n,p-1}(k) \gamma(p-k) \right|}{\vnorm{\hat{\res}_{n,p-1}}^{2}/n} \\
	\medskip & = \frac{\left| \gamma(p) - \sum_{k=1}^{p-1} \left[\left( \hat{\bm{\phi}}_{n,p-1}(k) + \bm{\phi}_{n,p-1}(k) - \bm{\phi}_{n,p-1}(k) \right) \gamma(p-k) \right] \right| }{\vnorm{\hat{\res}_{n,p-1}}^{2}/n} \\
	\medskip & \leq |\tau_{p} | + \frac{\sum_{k=1}^{p-1} \left|  \left( \bm{\phi}_{n,p-1}(k) - \hat{\bm{\phi}}_{n,p-1}(k) \right) \gamma(p-k) \right|}{\vnorm{\res_{n,p-1}}^{2}/n} \\
	\medskip & \leq |\tau_{p} | + \frac{\gamma(0) \sum_{k=1}^{p-1} \left|  \bm{\phi}_{n,p-1}(k) - \hat{\bm{\phi}}_{n,p-1}(k) \right|}{\vnorm{\res_{n,p-1}}^{2}/n} \\
	\medskip & \leq |\tau_{p} | + \frac{\gamma(0) \sqrt{p-1} \vnorm{\bm{\phi}_{n,p-1} - \hat{\bm{\phi}}_{n,p-1}}}{\vnorm{\res_{n,p-1}}^{2}/n} \\
	\medskip & \leq |\tau_{p} | + \frac{\eta_{n,p} \vnorm{\bm{\phi}_{n,p}} \gamma(0)}{\vnorm{\res_{n,p-1}}^{2}/n}\sqrt{(p-1)\varepsilon}.
	\end{align*}
	Now, the result follows by noting that $ \vnorm{\res_{n,p-1}}^{2}/n $ is an MLE estimate of $ \sigma_{W}^2 $, and from convergence in probability of this estimate, we have that, for large enough $ n $, it is bounded with probability at least $ 1-\delta $.
\end{proof}

%---------------------------------------------
\subsection{Proof of \cref{thm:Time Complexity}}
%---------------------------------------------

\begin{proof}
	Consider an input $\mathtt{AR}(p^*)$ time series data of size $n$. From \cref{DefHatLS}, \cref{thm:main}, and \cref{rem:ls-time}, given the fully-approximate leverage scores for the data matrix corresponding to the $\mathtt{AR}(p-1)$ models for $p$ varying from $2$ to $p^*$, we can estimate those of $\mathtt{AR}(p)$ models in $ \bigO{n} $ time. Here, we assume that $ \kappa(\XX_{n,p}) $ does not scale with the dimension $ p $ (at least unfavorably so), and treat it as a constant. \cref{thm:main} implies that we must choose $ 0< \varepsilon \leq p^{-2}$.
	Now, solving the compressed OLS problem (e.g., applying QR factorization with Householder reflections) requires $ \mathcal{O} \left( sp^{2} \right) = \mathcal{O} \left( p^{3} \log p/\varepsilon^{2} \right) $. As a result, the overall complexity of performing the \texttt{LSAR} for an input $\mathtt{AR(p^*)}$ time series data is $ \mathcal{O} \left( \sum_{p=1}^{p^*} (n + p^{3} \log p/\varepsilon^{2}) \right) = \mathcal{O} \left( np^* + p^{*^{4}} \log p^*/\varepsilon^{2} \right) $.
\end{proof}

%-----------------------

\end{document}